\begin{document}
%
% paper title
% Titles are generally capitalized except for words such as a, an, and, as,
% at, but, by, for, in, nor, of, on, or, the, to and up, which are usually
% not capitalized unless they are the first or last word of the title.
% Linebreaks \\ can be used within to get better formatting as desired.
% Do not put math or special symbols in the title.

\newif\ifext
\exttrue

\belowdisplayskip=4pt
\abovedisplayskip=4pt
\setlength{\belowcaptionskip}{-10pt}

\title{Access Control Synthesis for Physical Spaces}

\author{\IEEEauthorblockN{Petar Tsankov\hspace{20pt}Mohammad Torabi Dashti\hspace{20pt}David Basin}
	\IEEEauthorblockA{
		Department of Computer Science\\
		ETH Zurich, Switzerland\\
		\{ptsankov, torabidm, basin\}@inf.ethz.ch}
}

% conference papers do not typically use \thanks and this command
% is locked out in conference mode. If really needed, such as for
% the acknowledgment of grants, issue a \IEEEoverridecommandlockouts
% after \documentclass

% for over three affiliations, or if they all won't fit within the width
% of the page, use this alternative format:
% 
%\author{\IEEEauthorblockN{Michael Shell\IEEEauthorrefmark{1},
%Homer Simpson\IEEEauthorrefmark{2},
%James Kirk\IEEEauthorrefmark{3}, 
%Montgomery Scott\IEEEauthorrefmark{3} and
%Eldon Tyrell\IEEEauthorrefmark{4}}
%\IEEEauthorblockA{\IEEEauthorrefmark{1}School of Electrical and Computer Engineering\\
%Georgia Institute of Technology,
%Atlanta, Georgia 30332--0250\\ Email: see http://www.michaelshell.org/contact.html}
%\IEEEauthorblockA{\IEEEauthorrefmark{2}Twentieth Century Fox, Springfield, USA\\
%Email: homer@thesimpsons.com}
%\IEEEauthorblockA{\IEEEauthorrefmark{3}Starfleet Academy, San Francisco, California 96678-2391\\
%Telephone: (800) 555--1212, Fax: (888) 555--1212}
%\IEEEauthorblockA{\IEEEauthorrefmark{4}Tyrell Inc., 123 Replicant Street, Los Angeles, California 90210--4321}}

% use for special paper notices
%\IEEEspecialpapernotice{(Invited Paper)}

%\thispagestyle{plain}
%\pagestyle{plain}

\maketitle

\begin{abstract}
	%OLD: \\ Access control requirements for physical spaces often consist
%of global, system-wide constraints, such as existence of fire-exit
%paths and waypointing provisions. Access control policies enforceable
%by individual components, such as gates and turnstiles, have in
%contrast a local view and scope, and are therefore oblivious to
%global constraints.
%%%
%This abstraction gap between global requirements and local policies
%permeates access control for physical spaces. Bridging the gap
%manually is not only tedious and error-prone but it is also
%unscalable as it defies divide-and-conquer decomposition strategies.
%%%
%We propose a formal framework for automatically synthesizing access
%control policies from a set of global requirements for a given
%physical space.
%%%
%Specifically, we define an expressive declarative language to specify
%global requirements and space constraints, and we give an algorithm
%for synthesizing attribute-based policies from a specification.
%%%%
%We empirically demonstrate the framework's effectiveness using three
%%%industrial-scale 
%case studies. We show that access control synthesis is practical even
%for complex physical spaces such as airports with a large number of
%security requirements.
%
Access-control requirements for physical spaces, like office buildings
and airports, are best formulated from a global viewpoint in terms of
system-wide requirements.  For example, ``there is an authorized path
to exit the building from every room.''  In contrast, individual
access-control components, such as doors and turnstiles, can only
enforce local policies, specifying when the component may open. In
practice, the gap between the system-wide, global requirements and the
many local policies is bridged manually, which is tedious,
error-prone, and scales poorly.

We propose a framework to automatically synthesize local access-control policies from a set of global requirements for physical
spaces. Our framework consists of an expressive language to specify
both global requirements and physical spaces, and an algorithm for
synthesizing local, attribute-based policies from the global
specification. We empirically demonstrate the framework's
effectiveness on three substantial case studies. The studies
demonstrate that access-control synthesis is practical even for
complex physical spaces, such as airports, with many interrelated security
requirements.

%% solution, components, instantation

%% They cannot consequently enforce global
%% constraints.

%% Synthesizing local policies from global
%% requirements is ... manual, tedious and error-prone. We propose the
%% first automated technique for .
%% %%
%% and we present an efficient
%% policy synthesizer, called {\sc PoliSy}, that scales to realistic
%% problems.  We report on a case study that demonstrates the
%% effectiveness of the proposed approach.

%% Enforcing access control requirements for systems with distributed
%% resources is challenging: Security experts must write local policies,
%% which are enforced at the level of individual resources, that satisfy
%% global, system-wide requirements.
%% %Writing the local policies can be difficult even for experienced
%% %security experts.

%% We present the concept of distributed access control synthesis, a
%% novel approch for synthesizing local policies from global
%% requirements. Concretely, our approach has three dimensions: 
%% %
%% (1) a declarative language for expressing global requirements, 
%% %
%% (2) a formal description of system constraints that restrict how
%% subjects may access resources, 
%% %
%% and (3) a synthesizer for generating local policies that meet the
%% requirements while satisfying the system constraints.

%% We instantiate our policy synthesis approach for the physical access
%% control domain. 

%%LocalWords: waypointing

\end{abstract}

% For peer review papers, you can put extra information on the cover
% page as needed:
% \ifCLASSOPTIONpeerreview
% \begin{center} \bfseries EDICS Category: 3-BBND \end{center}
% \fi
%
% For peerreview papers, this IEEEtran command inserts a page break and
% creates the second title. It will be ignored for other modes.
%\IEEEpeerreviewmaketitle

\newtheorem{definition}{Definition}
\newtheorem{theorem}{Theorem}
\newtheorem{lemma}{Lemma}

\definecolor{darkred}{rgb}{0.75,0,0}
\newcommand{\para}[1]{\vspace{2pt}\noindent{\bf\em#1.\hspace{4pt}}}
\newcommand{\secref}[1]{Section~\ref{#1}}
\newcommand{\remark}[1]{\textcolor{darkred}{\textbf{$\blacktriangleright$#1$\blacktriangleleft$}}}
\newcommand{\sem}[1]{[\![#1]\!]}
\newcommand{\tool}[0]{{\sc PoliSy}\xspace}
\newcommand{\lang}[0]{{\sc SpCtl}\xspace}
\newcommand{\SMT}{{\cal S}_{\sf smt}}
\newcommand{\CS}{{\cal S}_{\sf cs}}
\newcommand{\cmark}{\ding{51}}%
\newcommand{\xmark}{\ding{55}}%
\newcommand{\cl}[1]{{\it cl}(#1)}%

\newcommand{\kaba}{KABA~AG}
\newcommand{\entry}{r_e}

\newcommand{\req}[1]{\hyperref[req:#1]{\bf R#1}\xspace}

%% TIKZ

\tikzset{entry/.style={
  rectangle,
  fill=black!10, 
  rounded corners,
%  dashed,
  line width=1pt,
  draw=black, 
  text=black,
  minimum height=14pt, 
  inner sep=2pt}
}
\tikzset{resource/.style={
  rectangle, 
  rounded corners,
  line width=0.6pt,
  fill=none,
  draw=black,
  text=black,
  minimum height=14pt, 
  inner sep=2pt}
}

\newcommand{\doi}[1]{doi:~\href{http://dx.doi.org/#1}{\Hurl{#1}}}

\setlength{\tabcolsep}{1pt}
\section{Introduction}

\label{sec:intro}

Physical access control is used to restrict access to physical spaces.  For example, it controls who can access which parts of an office building or how personnel can move within critical spaces such as airports or military facilities. As physical spaces are usually comprised of subspaces, such as rooms connected by doors,  policies are enforced by multiple policy enforcement points (PEPs). Each PEP is associated to a control point, like a door, and enforces a \emph{local} policy.

Consider, for example, an office building. An electronic door lock might control access to an office by enforcing a policy that states that only an \emph{employee} may enter the office. This policy is local in the sense that its scope is limited to an individual enforcement point, here the office's door. The policy therefore does not guarantee that non-employees cannot enter the office, since the office may have other doors. Neither does it guarantee that employees can actually access the office. If employees cannot enter the corridor leading to the office's door,  then the local policy is useless.

In contrast to the local policies for enforcement points, 
access-control requirements for physical
spaces are typically \emph{global}.   They express 
constraints on the access paths through the entire space.
In the example above, a requirement might be
that employees should be able to access the office
from the lobby. 
This requirement is global in that no single PEP alone
can guarantee its satisfaction. A standard electronic
lock, which enforces only local policies such as \emph{grant
access to employees}, is oblivious to the physical constraints of the
office building and what policies the other PEPs enforce. It therefore
cannot address this requirement.

\para{Problem Statement} 
The discrepancy between global requirements and local policies creates an abstraction gap that must be bridged when configuring access-control mechanisms.
We consider the problem of automatically synthesizing a set of PEP policies that together enforce global access-control requirements in a given physical space.

This problem is nontrivial.
A given physical space 
%will 
usually constrains the ways 
%that
subjects may access its subspaces. These constraints must be accounted for
when configuring the individual PEPs.
Moreover, global access-control requirements may have interdependencies
and hence their individual solutions may not contribute to an overall
solution.
To illustrate this lack of compositionality, suppose in addition to the  requirement that
employees can access an office room from the lobby, we require that
they must not enter the area where auditing documents are stored.
Giving employees access to their office through \emph{any} path
satisfies the first requirement, but it would violate the second one if
the path goes through the audit area.

In practice, constructing local policies for a physical space 
%amounts to having a security engineer manually write individual
%policies, one per PEP, that 
is a manual task where a security engineer writes individual policies, one per PEP,
that collectively enforce the space's global requirements.
This manual process results in errors, such as
granting access to unauthorized subjects or denying access to
authorized ones; the literature contains numerous examples
of such
problems~\cite{Fitzgerald_anomalyanalysis,vmcai11,prediction:codaspy12}.
Moreover, engineers must manually revise their policies whenever 
requirements are changed, or when the physical space changes,
e.g.\ due to construction work.
In short, writing local policies manually is error-prone and
scales poorly.
Our thesis is that it is also unnecessary: the automatic synthesis of
local policies with system-wide security guarantees is a viable
alternative.

\begin{figure}[t]
\centering
\begin{tikzpicture}[->,>=stealth,shorten >=1pt,auto,line width=0.6pt]
\footnotesize
  \node[anchor=north east, rectangle, draw=black, minimum width=60pt, minimum height=45pt, line width=1pt] (main) at (0,0) {};
  \node[anchor=north east, rectangle, draw=black, minimum width=25pt, minimum height=15pt, line width=1pt] (bur) at (0,0) {};
  \node[anchor=north east, rectangle, draw=black, minimum width=40pt, minimum height=30pt, line width=1pt] (cor) at (0,0) {};
  \node[anchor=south east, rectangle, draw=black, minimum width=40pt, minimum height=15pt, line width=1pt] (mr) at (main.south east) {};

  \node[rectangle, rounded corners, fill=red!20, draw=black, minimum width=11pt, minimum height=11pt, inner sep=0pt] at (bur.west) {\bf ?};
  \node[rectangle, rounded corners, fill=red!20, draw=black, minimum width=11pt, minimum height=11pt, inner sep=0pt] at (cor.west) {\bf ?};
  \node[rectangle, rounded corners, fill=red!20, draw=black, minimum width=11pt, minimum height=11pt, inner sep=0pt] at (main.west) {\bf ?};
  \node[rectangle, rounded corners, fill=red!20, draw=black, minimum width=11pt, minimum height=11pt, inner sep=0pt] at (main.east) {\bf ?};
  \node[rectangle, rounded corners, fill=red!20, draw=black, minimum width=11pt, minimum height=11pt, inner sep=0pt] at (mr.north) {\bf ?};
  \node[anchor=north] at (main.south) {Physical space};

  \node[anchor=north] (req1) at ($(main) - (0,1.4)$) {\pgfuseimage{requirement}};
  \node[anchor=north west] (req2) at ($(req1.north west) + (0.1,-0.08)$) {\pgfuseimage{requirement}};
  \node[anchor=north] at (req2.south) {Global requirements};
  \node[anchor=north west] (req3) at ($(req2.north west) + (0.1,-0.08)$) {\pgfuseimage{requirement}};

  \node[rectangle, rounded corners, draw=black, line width=1pt, inner sep=4pt, anchor=west] (synth) at ($(main.south) + (2,-0.2)$) {\begin{tabular}{c}\pgfuseimage{gears}\\\bf \small Synthesizer\end{tabular}};

  \node[anchor=west, rectangle, draw=black, minimum width=60pt, minimum height=45pt, line width=1pt] (main) at ($(synth.east) + (1,0)$) {};
  \node[anchor=north east, rectangle, draw=black, minimum width=25pt, minimum height=15pt, line width=1pt] (bur) at (main.north east) {};
  \node[anchor=north east, rectangle, draw=black, minimum width=40pt, minimum height=30pt, line width=1pt] (cor) at (main.north east) {};
  \node[anchor=south east, rectangle, draw=black, minimum width=40pt, minimum height=15pt, line width=1pt] (mr) at (main.south east) {};

  \node[rectangle, rounded corners, fill=green!20, draw=black, minimum width=11pt, minimum height=11pt, inner sep=0pt] at (bur.west) {\pgfuseimage{lock-small}};
  \node[rectangle, rounded corners, fill=green!20, draw=black, minimum width=11pt, minimum height=11pt, inner sep=0pt] at (cor.west) {\pgfuseimage{lock-small}};
  \node[rectangle, rounded corners, fill=green!20, draw=black, minimum width=11pt, minimum height=11pt, inner sep=0pt] at (main.west) {\pgfuseimage{lock-small}};
  \node[rectangle, rounded corners, fill=green!20, draw=black, minimum width=11pt, minimum height=11pt, inner sep=0pt] at (main.east) {\pgfuseimage{lock-small}};
  \node[rectangle, rounded corners, fill=green!20, draw=black, minimum width=11pt, minimum height=11pt, inner sep=0pt] at (mr.north) {\pgfuseimage{lock-small}};
  \node[anchor=north] (l1) at (main.south) {Physical space};
  \node[anchor=north] at ($(l1.south) + (0,0.1)$) {with local policies};

  \draw[draw=black, line width=1pt] ($(synth.north west) - (0.8,1.3)$)--($(synth.west) + (-0.1, -0.1)$);
  \draw[draw=black, line width=1pt] ($(synth.south west) - (0.8,-1.3)$)--($(synth.west) + (-0.1, 0.1)$);
  \draw[draw=black, line width=1pt] ($(synth.east) + (0.1,0)$)--($(synth.east) + (0.7, 0)$);
\end{tikzpicture}
\caption{Access control synthesis for physical spaces}
\label{fig:intro}
\end{figure}
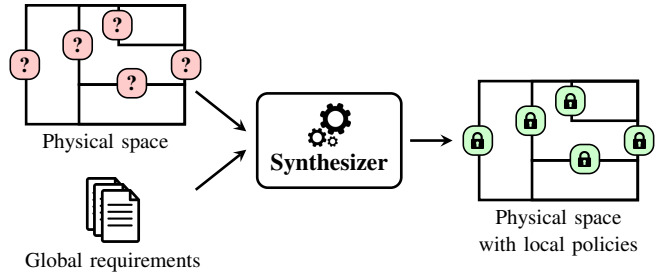

\para{Approach and Contributions} We propose a formal framework for automatically
synthesizing local policies that run on distributed PEPs
from a set of global access-control
requirements for a given physical space.
The framework's main ingredients are depicted in
Figure~\ref{fig:intro}.
The key component is a \emph{synthesizer}, which takes as input a
model of the physical space and a set of global requirements.
The synthesizer's output is the set of local policies that the PEPs enforce.
If the global requirements are satisfiable, then the synthesizer is guaranteed to output a correct set of local policies; otherwise, it returns ${\sf unsat}$ to indicate that the requirements cannot be satisfied.
Hence, using our framework, engineers can generate local policies from global requirements simply by formalizing the global requirements and modeling the physical space.

Below, we briefly describe the framework's components, depicted in
Figure~\ref{fig:intro}.  We use directed graphs to model physical
spaces: a node represents an enclosed space, such as an office or a
corridor, and an edge represents a PEP, for example installed on a
door or turnstile. The nodes are labeled to denote their
attributes. These attributes may include the assets the node contains
(audit documents), its physical attributes (international terminal),
and its clearance level (high security zone). These attributes may be
used when specifying policies.
Formally, our model of a physical space is a Kripke structure.

We give a declarative language, called \lang, for specifying global requirements.
Our language is built on the computation tree logic
(CTL)~\cite{Emerson1982241} and supports subject attributes (e.g., an organizational
role), time constraints (e.g., business-hour requirements), as well
as quantification over paths and branches in physical spaces.
To demonstrate its expressiveness, we show how 
common physical access-control requirements can be directly written in \lang.
Moreover, to simplify the task of formalizing such requirements, we
develop requirement patterns and illustrate their use through
examples.
Our synthesis algorithm outputs 
attribute-based policies, 
expressed as constraints over subject attributes and contextual
conditions, such as organizational roles and the current time. This
covers a wide range of practical setups and scenarios, including
attribute-based and
role-based access control.
We strike a balance between the requirement language's expressiveness
and the complexity of synthesizing local policies.
The synthesis problem we consider is NP-hard.
However, we show that
for practically-relevant requirements, it can 
be efficiently solved using existing SMT solvers.
This is intuitively because physical spaces, in practice,
induce directed graphs that have 
short simple-paths.  We illustrate our framework's effectiveness
 using three case studies where we synthesize
access-control policies for a university building, a corporate
building, and an airport terminal.   Synthesizing local
policies in each case takes less than $30$ seconds.
The last two case studies are based on real-world examples developed
together with \kaba, a leading physical access control company.

To the best of our knowledge, ours is the first framework for
synthesizing policies from system-wide access-control requirements.
We thereby solve a fundamental problem in access control for physical spaces.
An immediate practical consequence is that security engineers can
focus on 
system-wide requirements, and delegate to our synthesizer the task of
constructing the local policies with correctness guarantees.
We remark that although this work is focused on access control for
physical spaces, the ideas presented are general and can be extended
to other domains, such as computer networks partitioned into subnetworks
by distributed firewalls. 
%We leave these extensions, however, to future work.

\para{Organization} 
We give an overview of our access-control synthesis framework
in~\secref{sec:overview}. In~\secref{sec:phys-spec}, we describe and
formalize our system model. In~\secref{sec:reqs}, we define our
\lang language for specifying global requirements, and
present requirement patterns. 
In \secref{sec:synthesis}, we 
%formally 
define the policy synthesis problem and prove its decidability. In \secref{sec:algorithm}, we define an efficient policy synthesis algorithm. In \secref{sec:eval}, we describe our implementation and
report on our experiments. We review related work in \secref{sec:rw},
and we draw conclusions and discuss future work in~\secref{sec:conc}. 
\ifext
The appendices contain all proofs.
\else
Proofs can be found in the extended version of this paper~\cite{tech-report}.
\fi

\section{Overview} \label{sec:overview} 

We start with a simple example that illustrates the challenges
of constructing local policies that cumulatively enforce global
access-control requirements. We also explain how our framework is
used, that is, we describe its inputs and outputs.

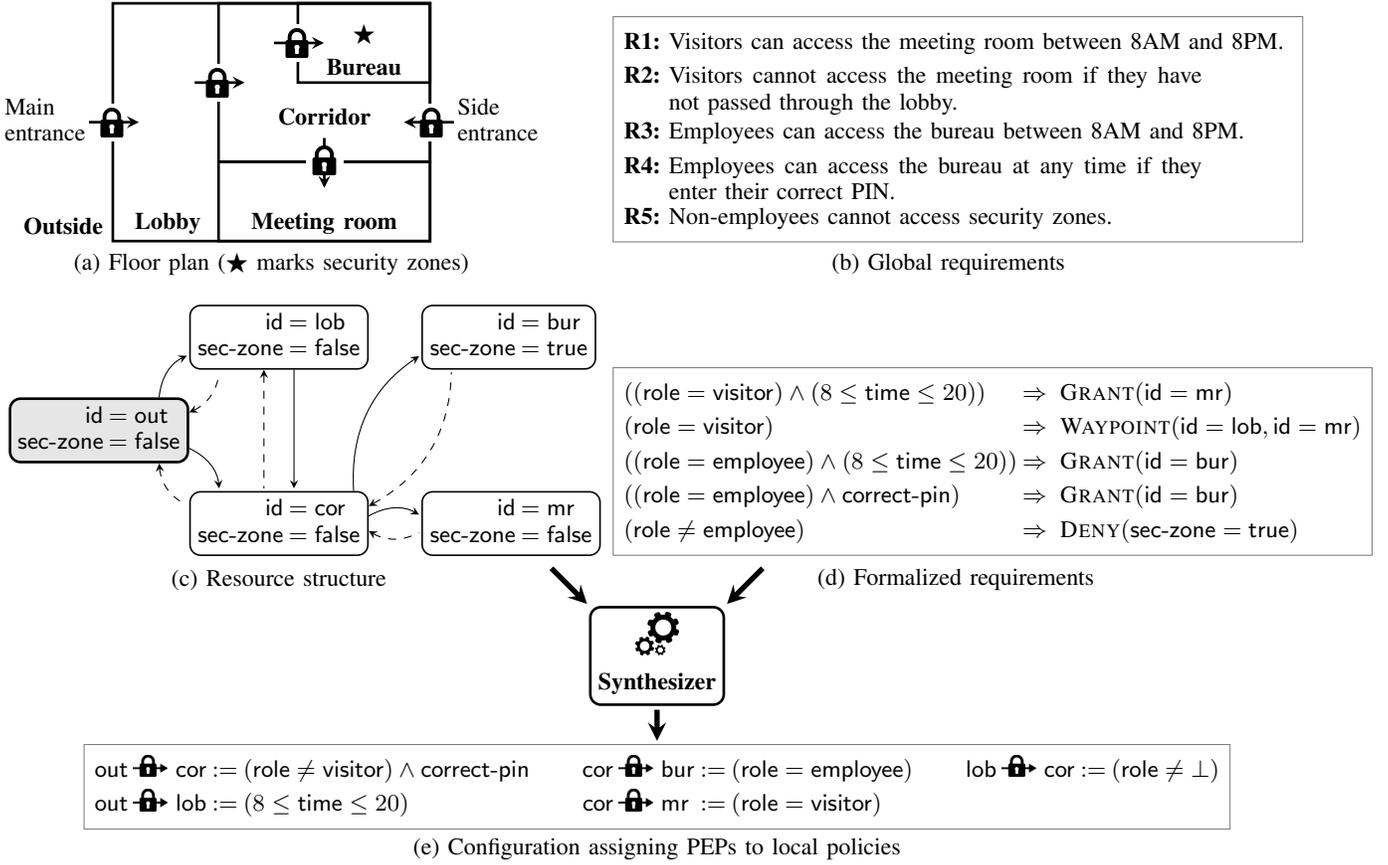
\begin{figure*}[t]
\centering
\hspace{-10pt}
\small
\begin{tikzpicture}[->,>=stealth,shorten >=1pt,auto]

%%%% FLOOR PLAN
  \node[anchor=north east, rectangle, draw=black, minimum width=120pt, minimum height=90pt, line width=1pt] (main) at (0,0) {};
  \node[anchor=south east] at (main.south west) {\bf Outside};
  \node[anchor=south west] at ($(main.south west) + (0.2,0)$) {\bf Lobby};
  \node[anchor=north east, rectangle, draw=black, minimum width=50pt, minimum height=30pt, line width=1pt] (bur) at (0,0) {};
  \node[anchor=south] at (bur.south) {\bf Bureau};
  \node at ($(bur) + (0, 0.1)$) {$\bigstar$};
  \node[anchor=north east, rectangle, draw=black, minimum width=80pt, minimum height=60pt, line width=1pt] (cor) at (0,0) {};
  \node[anchor=south] at ($(cor.south) + (0,0.4)$) {\bf Corridor};
  \node[anchor=south east, rectangle, draw=black, minimum width=80pt, minimum height=30pt, line width=1pt] (mr) at (main.south east) {};
  \node[anchor=south] at (mr.south) {\bf Meeting room};

  \node[fill=white, inner sep=4pt] (bur-white) at (bur.west) {};
  
  \node[inner sep=7pt, fill=white] at (main.west) {};
  \draw[line width=1pt, >=stealth, ->] ($(main.west) - (0.3,0)$)--($(main.west) + (0.4,0)$);
  \node[inner sep=0pt] (main-lock) at (main.west) {\pgfuseimage{lock}};

  \node[inner sep=7pt, fill=white] at (bur.west) {};
  \draw[line width=1pt, >=stealth, ->] ($(bur.west) - (0.3,0)$)--($(bur.west) + (0.4,0)$);
  \node[inner sep=0pt] (bur-lock) at (bur.west) {\pgfuseimage{lock}};

  \node[inner sep=7pt, fill=white] at (cor.west) {};
  \draw[line width=1pt, >=stealth, ->] ($(cor.west) - (0.3,0)$)--($(cor.west) + (0.4,0)$);
  \node[inner sep=0pt] (cor-lock) at (cor.west) {\pgfuseimage{lock}};

  \node[inner sep=7pt, fill=white] at (main.east) {};
  \draw[line width=1pt, >=stealth, ->] ($(main.east) + (0.3,0)$)--($(main.east) - (0.4,0)$);
  \node[inner sep=0pt] (side-lock) at (main.east) {\pgfuseimage{lock}};

  \draw[line width=1pt, >=stealth, ->] ($(mr.north) + (0,0.3)$)--($(mr.north) - (0,0.4)$);
  \node[fill=white, minimum width=14pt, minimum height=10pt] (mr-lock-holder) at (mr.north) {};
  \node[inner sep=0pt] (mr-lock) at (mr.north) {\pgfuseimage{lock}};

  \node[anchor=east] (main-entrance) at ($(main-lock) - (0.2,0)$) {\begin{tabular}{l}Main\\entrance\end{tabular}};
  \node[anchor=west] at ($(side-lock) + (0.2,0)$) {\begin{tabular}{l}Side\\entrance\end{tabular}};

  \node[minimum width=12em] (l) [below=0.1em of main.south] {(a) Floor plan ($\bigstar$ marks security zones)};
%  \draw[-](l.north west)--(l.north east);

%%%% RESOURCE STRUCTURE
  \node[anchor=west, entry] (out)     at ($(main-entrance.west) +
(0.2,-4.1)$) {\begin{tabular}{rl}$\sf id$ & $\sf = out$\\$\sf \operatorname{\sf sec-zone}$ & $\sf =false$\end{tabular}};
  \node[resource, anchor=west] (lob)  at ($(out.north east) + (0, 0.8)$) {\begin{tabular}{rl}$\sf id$ & $\sf = lob$\\$\sf \operatorname{\sf sec-zone}$ & $\sf =false$\end{tabular}};
  \node[resource, anchor=west] (cor)  at ($(out.south east) - (0, 0.8)$) {\begin{tabular}{rl}$\sf id$ & $\sf = cor$\\$\sf \operatorname{\sf sec-zone}$ & $\sf =false$\end{tabular}};
  \node[resource, anchor=west] (mr)  at ($(cor.east) + (0.7, 0)$) {\begin{tabular}{rl}$\sf id$ & $\sf = mr$\\$\sf \operatorname{\sf sec-zone}$ & $\sf =false$\end{tabular}};
  \node[resource, anchor=west] (bur) at ($(lob.east) + (0.7, 0)$) {\begin{tabular}{rl}$\sf id$ & $\sf = bur$\\$\sf \operatorname{\sf sec-zone}$ & $\sf =true$\end{tabular}};
  \path 
        ($(cor.east) + (0, 0.1)$) edge[->, bend left] ($(mr.west) + (0, 0.1)$)
        ($(cor.east) - (0, 0.1)$) edge[<-, bend right, dashed] ($(mr.west) - (0, 0.1)$)
        ($(cor.north east) - (0.2, 0)$) edge[->, bend left] ($(bur.south west) + (0, 0.2)$)
        ($(cor.north east) - (0, 0.2)$) edge[<-, bend right, dashed] ($(bur.south west) + (0.4, 0)$)
        ($(out.south east) + (0, 0.2)$) edge[->, bend left] ($(cor.north west) + (0.4, 0)$)
        ($(out.south east) - (0.4, 0)$) edge[<-, bend right, dashed] ($(cor.north west) - (0, 0.2)$)
        ($(out.north east) - (0.4, 0)$) edge[->, bend left] ($(lob.south west) + (0, 0.2)$)
        ($(out.north east) - (0, 0.2)$) edge[<-, bend right, dashed] ($(lob.south west) + (0.4, 0)$)
        ($(lob.south) - (0.2,0)$) edge[<-, dashed] ($(cor.north) - (0.2, 0)$)
        ($(lob.south) + (0.2,0)$) edge[->] ($(cor.north) + (0.2, 0)$);
  \node[minimum width=12em] (constr) [below=0.2em of cor] {(c) Resource structure};
%  \draw[-](constr.north west)--(constr.north east);

%%%% REQUIREMENTS
\renewcommand{\arraystretch}{1.3}

\node[anchor=south, draw=gray] (Reqs) at ($(main.south east) + (7,0)$)    
  {\begin{tabular}{l}
    {\bf R1:} Visitors can access the meeting room between 8AM and 8PM. \label{req:1}\\
    {\bf R2:} \parbox[t]{0.8\columnwidth}{Visitors cannot access the meeting room if they have not passed through the lobby.} \label{req:2}\\
    {\bf R3:} Employees can access the bureau between 8AM and 8PM. \label{req:3}\\
    {\bf R4:} \parbox[t]{0.8\columnwidth}{Employees can access the bureau
    at any time if they enter their correct PIN.} \label{req:4}\\
    {\bf R5:} Non-employees cannot access security zones. \label{req:5}
  \end{tabular}
  };

  \node[minimum width=12em] at ($(l) + (9,0)$) {(b) Global requirements};

%%%% REQUIREMENTS SPEC

  \node[anchor=west, draw=gray] (Reqs) at ($(Reqs.west) - (0,4.42)$)
    {\begin{tabular}{lll}
      $(({\sf role} = {\sf visitor}) \wedge (8\leq {\sf time}\leq 20))$ & $\Rightarrow \textsc{ Grant}({\sf id} = {\sf mr})$\\
      $({\sf role} = {\sf visitor})$ & $\Rightarrow \textsc{ Waypoint}({\sf id} = {\sf lob}, {\sf id} = {\sf mr})$\\
      $(({\sf role} = {\sf employee}) \wedge (8\leq {\sf time}\leq 20))$ & $\Rightarrow \textsc{ Grant}({\sf id} = {\sf bur})$\\
      $(({\sf role} = {\sf employee}) \wedge \operatorname{\sf correct-pin})$ & $\Rightarrow  \textsc{ Grant}({\sf id} = {\sf bur})$\\
      $({\sf role} \neq {\sf employee})$ & $\Rightarrow \textsc{ Deny}(\operatorname{\sf sec-zone}={\sf true})$
    \end{tabular}};
  \node[minimum width=31em] (reqs-label) at ($(constr) + (9,0)$)  {(d) Formalized requirements};
%  \draw[-](reqs-label.north west)--(reqs-label.north east);

%%%% SYNTHESIZER

  \node[rectangle, rounded corners, draw=black, line width=1pt, inner sep=2pt] (synth) at (3,-8.7) {\begin{tabular}{c}\pgfuseimage{gears}\\\bf Synthesizer\end{tabular}};
  \node (in1) [above left=2em of synth] {};
  \node (in2) [below left=2em of synth] {};
  \node (out1) [right=2em of synth] {};
  \draw[->, draw=black, line width=2pt] ($(synth.north west) + (-0.5, 0.5)$)--($(synth.north west) + (0,0)$);
  \draw[->, draw=black, line width=2pt] ($(synth.north east) + (0.5, 0.5)$)--($(synth.north east) + (0,0)$);
  \draw[->, draw=black, line width=2pt] ($(synth.south) + (0,
-0.05)$)--($(synth.south) + (0,-0.5)$);
%  \draw[->, line width=1pt] ($(synth) + (-1.7, 1)$)--($(synth.north west) + (0,-0.3)$);
%  \draw[->, line width=1pt] ($(synth.east) + (0,0)$)--($(synth.east) + (0.7,0)$);

%%%% SYNTHESIZED POLICIES

  \node[anchor=north, draw=gray] (policies) [below=1.6em of synth] 
    {\begin{tabular}{llllllllllll}
${\sf out}$ & $\pgfuseimage{policy}\ {\sf cor}$  & $:=$  & $({\sf role} \neq {\sf visitor}) \wedge \operatorname{\sf correct-pin}\qquad$ 
& ${\sf cor}$ & $\pgfuseimage{policy}\ {\sf bur}$  & $:=$  & $({\sf role} = {\sf employee})\qquad$
& ${\sf lob}$ & $\pgfuseimage{policy}\ {\sf cor}$  & $:=$  & $({\sf role}\neq \bot)$ \\
${\sf out}$ & $\pgfuseimage{policy}\ {\sf lob}$  & $:=$  & $(8\leq {\sf time}\leq 20)$
& ${\sf cor}$ & $\pgfuseimage{policy}\ {\sf mr}$  & $:=$  &  $({\sf role} = {\sf visitor})$\\
    \end{tabular}};
  \node[minimum width=14em] (pol-lab) [below=0em of policies] {(e) Configuration assigning PEPs to local policies};
%  \draw[-](pol-lab.north west)--(pol-lab.north east);
\end{tikzpicture}
%\caption{Our running example's physical space model and formalized
%requirements are shown on the left. Subjects access physical spaces
%along the edges of the physical space model. The locks restrict access
%along the solid edges according to their local policies, and they
%allow anyone to follow the dashed edges. The generated local policies are
%depicted on the right.}
\caption{Synthesizing the local policies for our running example}
\label{fig:synth-overview}
\end{figure*}

\subsection{Running Example} \label{sec:example} Consider a small
office space consisting of a lobby, a bureau, a meeting room, and a
corridor. The office layout is given in
Figure~\ref{fig:synth-overview}(a).
Access within this physical space is secured using electronic locks.
Each door has a lock and a card reader. The lock stores a policy
that defines who can open the door from the card reader's side. The
door can be opened by anyone from the opposite side. We annotate
locks with arrows
in Figure~\ref{fig:synth-overview}(a) to indicate the direction that the locks
restrict access. For example, the lock at the main entrance restricts who can
access the lobby, and it allows anyone to exit the office space from
the lobby.
To open a door from the card reader's side, a subject presents a
smartcard that stores the holder's credentials.  The lock can access
additional information, such as the current time, needed to evaluate the
policy. The lock opens whenever the policy evaluates to grant. 

The global requirements for this physical space are given in
Figure~\ref{fig:synth-overview}(b). The requirements \textbf{R1},
\textbf{R3}, and {\bf R4} define permissions, while \textbf{R2} and
\textbf{R5} define prohibitions.  To meet these requirements, the
electronic locks must be configured with appropriate local policies.
As previously observed, this is challenging because
one must account for both spatial constraints 
and all global access-control requirements.
We illustrate these points below.

\para{Spatial Constraints} The layout of the physical space
prevents subjects from
freely requesting access to any resource. 
For example, the requirement \textbf{R1} is not met just
because the meeting room's lock grants access to visitors;
the visitor must also be able to enter the corridor from the outside.
Such constraints must be accounted for when defining
the local policies.
To satisfy \textbf{R1}, we may for instance choose a path from the main
entrance to the meeting room and configure all the locks along that
path to grant access to visitors.

\para{Global Requirements} 
Each global requirement typically has
multiple sets of local policies that satisfy it. The
local policies must however be constructed to ensure that
\emph{all} requirements are satisfied \emph{simultaneously}. For example, the
requirement \textbf{R1} is satisfied if the side-entrance lock and the
meeting room lock both grant access to visitors between $8$AM and $8$PM.  It can also be
satisfied by ensuring that the main entrance, the lobby, and the
meeting room locks all grant access to visitors between $8$AM and $8$PM.  Granting visitors
access through the side entrance however violates the requirement
\textbf{R2}, which requires that visitors pass through the lobby.
Hence, to meet both requirements, the locks along the path through
the lobby must grant access to visitors between $8$AM and $8$PM, while the side-entrance lock
must always deny access to visitors.

\subsection{Synthesis Framework}
Figure~\ref{fig:synth-overview}(c-e) depicts our framework's input and output for our running example. The input is a model of the physical space and a specification of its global requirements. The output produced by our synthesizer is a set of local policies.

%Figure~\ref{fig:synth-overview}(c-e) depicts our synthesis framework's input and output, illustrated on our running example. The input is a model of the physical space and a specification of its global requirements. The output, produced by our policy synthesizer, is a set of local policies.

A physical space is modeled as a rooted directed graph 
called a \emph{resource structure}.  We have depicted
the root node in gray.  In our example, this
corresponds to the public space that surrounds the office space, e.g.\
public streets. The remaining (non-root) nodes are the spaces 
inside the building. The locks control access along the edges.
A subject can traverse a solid edge of the resource structure
only if the lock's policy evaluates to
grant, whereas any subject can follow the dashed edges. Hence, the locks
effectively enforce the grant-all policy along the dashed edges.
We use two attributes to label the physical spaces: the
attribute~$\mathsf{id}$ represents room identifiers, and $\operatorname{\sf sec-zone}$ formalizes that a space is inside the security zone. 

Global requirements are specified using a declarative language, called
\lang. In Figure~\ref{fig:synth-overview}(d) we show the formalization
of our running example's requirements in \lang.
For instance,~{\bf R1}, which states that visitors can access the
meeting room between $8$AM and $8$PM, is formalized as $\big(\mathsf{(role
= visitor)} \wedge \mathsf{(8 \le time\le 20)}\big)\Rightarrow
\textsc{Grant}\mathsf{(id = mr)}$.
This formalization instantiates \lang's permission pattern
$\textsc{Grant}$ to state that there is a path from outside to
the meeting room such that every lock on the path grants access to any
visitor between $8$AM and $8$PM.
We define \lang's syntax and semantics and present several patterns in~\secref{sec:reqs}.

Given these inputs, the synthesizer automatically constructs a local policy for each lock.
The synthesized policies are attribute-based policies that
collectively enforce the global requirements. The synthesized policies
for our running example are given in
Figure~\ref{fig:synth-overview}(e).
We write, for example, ${\sf cor}\ \pgfuseimage{policy}\ {\sf bur}$
for the synthesized policy deployed at the bureau's lock. This
policy $\sf (role = employee)$
grants access to subjects with the role $\sf employee$. We define the synthesis problem, and the syntax
and semantics of attribute-based local policies
in~\secref{sec:synthesis}.

%\fix{mo: sec-zone in the figure has the true and false assignments, 
%but correct-pin doesn't have}

\section{Physical Access Control}
\label{sec:phys-spec}
%% We first describe our system model, and then formalize it. Afterward,
%% we formally define the policy synthesis problem.

\subsection{Basic notions}

In this section, we formalize our system model for physical access control.   Our terminology is based, in part, on the XACML reference architecture \cite{xacml3}.

Each physical space is partitioned into finitely many enclosed spaces
and one open (public) space. We call the enclosed spaces
\emph{resources}.  Two spaces may be directly connected with a
\emph{gate}, controlled by a \emph{policy enforcement point} (PEP).
Examples of gates include doors, turnstiles, and security checkpoints.
Each PEP has its own \emph{policy decision point} (PDP), which stores
a \emph{local policy} mapping access requests to access decisions.
Each {\em access request} consists of subject credentials, which the
PDP receives from the PEP, as well as contextual attributes, if needed,
obtained from \emph{policy information points} (PIPs).  The PIPs are
distributed information sources that provide contextual attributes
required by the PDP for access decisions.  Examples of PIPs
include revocation list servers and secure time servers.

To enter a space, a subject provides his credentials to the PEP that
controls the gate.  The PEP forwards the subject's credentials
to its PDP.  The PDP, in turn, queries the PIP if needed, evaluates the
policy, and then forwards the access decision --- either grant or deny ---
to the PEP.   The PEP then enforces the PDP's
decision. See Figure~\ref{fig:sys-model}.

We assume that access requests contain all relevant information for making access decisions.
PDPs can thus make their decisions independent of past access requests. Hence it has no bearing on our model whether the PDPs are actually distributed or are realized through a centralized system. 
This is desirable from a practical standpoint since PDPs and PIPs need not be equipped with logging mechanisms. Moreover, different PDPs and PIPs need not synchronize their local views on the request history. In this sense they are autonomous entities.

%% Any PDP must make access decisions solely based on credentials and
%% attributes We do not make any assumption regarding .  PIPs and PDPs
%% consequently need not store information about the past. This has
%% practical implications.

The access requests and local policies we consider are attribute
based and may reference three kinds of attributes.
A \emph{subject attribute} contains information about a subject. For
example, Alice's \emph{organizational role} and \emph{clearance level}
are her subject attributes. Subjects can provide PEPs with their
attributes in the form of credentials.
A {\em contextual attribute} represents information about the security
context provided by a PIP, such as the list of revoked credentials and the current time.
We also introduce {\em resource attributes}, which represent
information about resources.  For example, the attributes \emph{floor}
and \emph{department} may represent the floor of an office space and
the department it belongs to.  We use resource attributes to specify
global requirements. They are however not needed for expressing access
requests or local policies in our model.
This is because the PDPs associated to any resource can be hardwired
with all the attributes of that resource. In this sense, each PDP 
``knows'' the space under its control.

Our system model targets electronic PEP/PDPs 
that can 
%capable of
enforce attribute-based policies and read digital credentials,
e.g.\ stored on smart cards and mobile phones.
Manufacturers often refer to these as smart locks~\cite{goji,augustus,lockitron}.
In large physical access-control systems, smart locks are
rapidly replacing mechanical locks and keys, which can only enforce 
simple, crude policies.

\begin{figure}
\begin{tikzpicture}[->,>=stealth,shorten >=1pt,auto]
\small
  \node (subj) at (0,0) {\pgfuseimage{subject}};
  \node[anchor=north] (subj-label) at (subj.south) {Subject};

  \node (pep) [right=1.7 of subj] {\pgfuseimage{pep}};
  \node[anchor=north] (pep-label) at (pep.south) {PEP};
  \draw[->] ($(subj.east) + (0,0.2)$) -- node [above=-1pt] {Credentials} ($(pep.west) + (0,0.2)$);
  \draw[<-] ($(subj.east) + (0,-0.3)$) -- node [above=-2pt] {Grant/Deny} ($(pep.west) + (0,-0.3)$);

  \node[draw=black, rounded corners, minimum height=24pt] (pdp) [right=1.7 of pep] {\pgfuseimage{pdp}};
  \node[anchor=north] (pdp-label) at (pdp.south) {PDP};
  \draw[->] ($(pep.east) + (0,0.2)$) -- node [above=-1pt] {Credentials} ($(pdp.west) + (0,0.2)$);
  \draw[<-] ($(pep.east) + (0,-0.3)$) -- node [above=-2pt] {Grant/Deny} ($(pdp.west) + (0,-0.3)$);

  \node[anchor=west] (policy) at ($(pdp) + (0.8,1)$) {\pgfuseimage{document}};
  \node[anchor=south west] (policy-label) at ($(policy.south east) - (0.2,0)$) {Policy};
  \draw[-, dashed] ($(pdp.north)$) -- ($(policy.west)$);

  \node (pip) [right=1.7 of pdp] {\pgfuseimage{pip}};
  \node[anchor=north] (pip-label) at (pip.south) {PIP};
  \draw[<-] ($(pdp.east) + (0,0)$) -- node [above=-2pt] {Attributes} ($(pip.west) + (0,0)$);
\end{tikzpicture}
\caption{System model}
\label{fig:sys-model}
\end{figure}
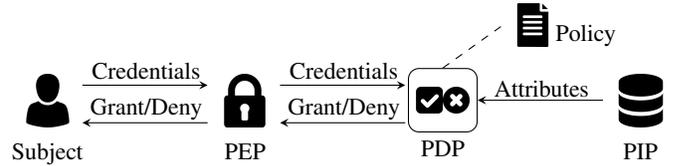

\subsection{Formalization}

We now formalize the above notions.

\para{Attributes}
Fix a finite set~$\cal A$ of \emph{attributes}
and a set~$\cal V$ of attribute {\em values}.
The {\em domain} function ${\sf dom}\colon {\cal A}\to {\cal P}({\cal
  V})$ associates each attribute with the set of values it admits.
For instance, the current time attribute is associated with the set of
natural numbers, and the clearance level attribute is associated with
a fixed finite set of levels.
We assume that any attribute can take the designated
value~$\bot$, representing the situation where the
attribute's value is unknown.
%%
%% We assume that the set $V$ of attribute values contains the natural
%% numbers $\mathbb{N}$ and the designated value~$\bot$, which we use to
%% represent an unknown value. $V$ may also contain other attribute
%% values relevant for the given attributes.\fix{So if I have a time or
%%   division or ... attribute it must always include the natural numbers
%%   for values?  Why not just state (without giving full formal details)
%%   that we assume each attributes has an associate type which determine
%%   its set of values, e.g., floor is a natural number, time is a time,
%%   etc.  Each such set is extended by a designated value $\bot$ which
%%   represents an unknown value.}
%
We partition the set of attributes into subject attributes~${\cal
  A}_S$, contextual attributes~${\cal A}_C$, and resource
attributes~${\cal A}_R$.  

\para{Access Requests} We represent an access request as a total function that
maps subject and contextual attributes to values from their respective
domains. 
%%%
This function is computed by PDPs after receiving a subject's
credentials and querying PIPs.
For instance, the PDP maps the attribute~$\mathsf{role}$
to~$\mathsf{visitor}$ when the subject's credentials indicate
this. It maps the attribute~$\operatorname{\sf correct-pin}$ to
$\mathsf{true}$ when the PIN entered through the keypad
attached to the PDP is correct. Finally, it maps the contextual
attribute~$\mathsf{time}$ to~$8$ after querying a time server at $8$AM.
We denote the set of all access requests by~${\cal Q}$.

A remark on set-valued attributes is due here. In some settings, attributes take a finite set of values, as opposed to a single value. 
For example, in role-based access control, a subject may activate multiple roles.
The attribute $\sf role$ must then be assigned with the set of all the activated roles.
We account for such set-valued attributes simply by defining a Boolean attribute for each value; for example, we define $\sf role\_employee$ and $\sf role\_manager$. An access request~$q$ 
assigns true to both Boolean attributes whenever a subject has activated both the employee and the manager roles.

\para{Local Policies} 
Local policies map access requests to grant or deny. We extensionally define local policies as subsets of~$\mathcal{Q}$: a local policy is defined as the set of requests that it grants. 
The structure $({\cal P}({\cal Q}), \subseteq, \cap, \cup, \emptyset, {\cal Q})$ is a complete lattice that orders local policies by their permissiveness. The least permissive policy, namely~$\emptyset$, denies all access requests, and the most permissive one, i.e.~$\mathcal{Q}$, grants them all.
In section~\ref{sec:problem}, we will intensionally define local policies as constraints over subject and contextual attributes. 
The local policies shown in Figure~\ref{fig:synth-overview}, for example, are defined by such constraints.
%e remark that the we have depicted the intensional definitions of the local policies in Figure~\ref{fig:synth-overview}.

%We intensionally define local policies as constraints over subject and contextual attributes in Section~\ref{sec:problem}.

\para{Resource Structures}
We now give a formal model of physical spaces.
A \emph{resource structure} is a tuple~$S = ({\cal R}, E, \entry, L)$, where ${\cal R}$ is a set of {\em resources}, $E\subseteq {\cal R}\times {\cal R}$ is an irreflexive edge relation, $\entry\in {\cal R}$ is the \emph{entry resource}, and $L: {\cal R}\to ({\cal A}_R\to {\cal V})$ is a total function mapping resources to resource attribute valuations.
We assume that every resource~$r\in \mathcal{R}$ is reachable from the entry resource~$\entry$, that is,~$(\entry,r)\in E^*$, where~$E^*$ is the reflexive-transitive closure of~$E$.

The edges in a resource structure model PEPs. The irreflexivity of
$E$ captures the condition that once a subject enters a physical
space, he cannot re-enter the space before first leaving it.
We assume that resource structures do not contain \emph{deadlocks}. A
resource~$r_0$ in~$S$ is a deadlock if there does not exists an~$r_1$
such that~$(r_0,r_1)\in E$. This assumption is valid in physical-space
access control: a deadlock resource corresponds to a ``black hole''
that no one can leave. Note that dead-end corridors are not deadlocks,
provided one can backtrack.

The entry resource~$\entry$ represents the public space and the remaining
resources denote enclosed spaces.
A resource structure describes how subjects can access resources.
A subject accesses a resource along a path,  which is a sequence
of resources connected by edges, starting from the entry resource.
For example, before entering a room in a hotel, a subject enters the
hotel's lobby from the street, and then goes through the corridor.
Figure~\ref{fig:synth-overview}(c) gives an example of a resource
structure. 
%%

%%  For simplicity we omit the self edges, which connect a node to
%%  itself, when depicting resource structures.
%% %%
%% Note that once a subject has accessed a resource, the subject may
%% access the resource indefinitely.\fix{This may need to be clearer as
%%   we discussed.  The distinction between ``accessing'' which sounds
%%   like an access request succeeding and ``having access to ...'' simply by being
%%   in the space.  We presumably don't want to say that once an access
%%   request succeeds it will always succeed in the future as that
%%   suggests that time never progresses, which calls into question
%%   having time as an attribute.  I do understand how we solve this
%%   issue [attributes are static] but
%%   my point is it needs to be described clearly as it seems absurd to
%%   model time as never progressing.}
%%  For example, a subject who accesses
%% a room can remain there indefinitely. This motivates~$E$'s
%% reflexivity, which induces self edges.

\para{Configurations}
Each edge of a resource structure represents a gate controlled
by a local policy installed on the gate's PDP.  We therefore
define a \emph{configuration} for a resource structure~$S$ as a
function that assigns to each edge of~$S$ a local policy.  
We write $C_S$ for the set of all configurations for~$S$. 
The set~$C_S$ is partially ordered under the relation $\sqsubseteq_S$, defined as:~$c\sqsubseteq_S~\!\!c'$ if for any edge~$e$ of~$S$ we have $c(e)\subseteq c'(e)$. 
Namely, a configuration is less permissive than another configuration if for any edge the former assigns a less permissive local policy than the latter.

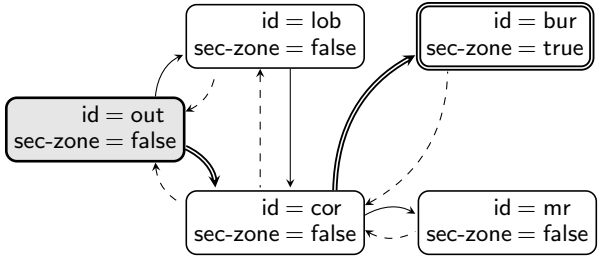
\begin{figure}[t]
\centering
\small
\begin{tikzpicture}[->,>=stealth,shorten >=1pt,auto]
  \node[entry] (out)     at (0,0) {\begin{tabular}{rl}$\sf id$ & $\sf = out$\\$\sf \operatorname{\sf sec-zone}$ & $\sf =false$\end{tabular}};
  \node[resource, anchor=west] (lob)  at ($(out.north east) + (0, 0.8)$) {\begin{tabular}{rl}$\sf id$ & $\sf = lob$\\$\sf \operatorname{\sf sec-zone}$ & $\sf =false$\end{tabular}};
  \node[resource, anchor=west] (cor)  at ($(out.south east) - (0, 0.8)$) {\begin{tabular}{rl}$\sf id$ & $\sf = cor$\\$\sf \operatorname{\sf sec-zone}$ & $\sf =false$\end{tabular}};
  \node[resource, anchor=west] (mr)  at ($(cor.east) + (0.7, 0)$) {\begin{tabular}{rl}$\sf id$ & $\sf = mr$\\$\sf \operatorname{\sf sec-zone}$ & $\sf =false$\end{tabular}};
  \node[resource, double, line width=0.7pt, anchor=west] (bur) at ($(lob.east) + (0.7, 0)$) {\begin{tabular}{rl}$\sf id$ & $\sf = bur$\\$\sf \operatorname{\sf sec-zone}$ & $\sf =true$\end{tabular}};
  \path 
%        ($(cor.south east) + (0, 0.2)$) edge[->, bend left] ($(mr.north west) + (0.4, 0)$)
%        ($(cor.south east) - (0.4, 0)$) edge[<-, bend right, dashed] ($(mr.north west) - (0, 0.2)$)
        ($(cor.east) + (0, 0.1)$) edge[->, bend left] ($(mr.west) + (0, 0.1)$)
        ($(cor.east) - (0, 0.1)$) edge[<-, bend right, dashed] ($(mr.west) - (0, 0.1)$)
        ($(cor.north east) - (0.4, 0)$) edge[->, double, line width=0.7pt, bend left] ($(bur.south west) + (0, 0.2)$)
        ($(cor.north east) - (0, 0.2)$) edge[<-, bend right, dashed] ($(bur.south west) + (0.4, 0)$)
        ($(out.south east) + (0, 0.2)$) edge[->, double, line width=0.7pt, bend left] ($(cor.north west) + (0.4, 0)$)
        ($(out.south east) - (0.4, 0)$) edge[<-, bend right, dashed] ($(cor.north west) - (0, 0.2)$)
        ($(out.north east) - (0.4, 0)$) edge[->, bend left] ($(lob.south west) + (0, 0.2)$)
        ($(out.north east) - (0, 0.2)$) edge[<-, bend right, dashed] ($(lob.south west) + (0.4, 0)$)
        ($(lob.south) - (0.2,0)$) edge[<-, dashed] ($(cor.north) - (0.2, 0)$)
        ($(lob.south) + (0.2,0)$) edge[->] ($(cor.north) + (0.2, 0)$);
\end{tikzpicture}
\caption{The double-lined edges denote the PEPs that deny the access request~$q = \{{\sf role} \mapsto  {\sf visitor}, {\sf time}\mapsto 10, \operatorname{\sf correct-pin}\mapsto \bot\}$, given the configuration~$c$ from Figure~\ref{fig:synth-overview}.
The resource structure $S_{c,q}$ is obtained by removing the double-lined edges and nodes.
%, where the configuration~$c$ is from Figure~\ref{fig:synth-overview}.
%, where
%$S$ and
%the configuration~$c$ are from Figure~\ref{fig:synth-overview}.
}
\label{fig:restricted-example}
\end{figure} 

We can now define which resources are accessible given an access
request and a configuration.
For a resource structure~$S$, a configuration~$c$ for~$S$, and an
access request~$q$, we define~$S_{c,q}$ as the resource structure
obtained by removing all the edges from~$S$ whose policies deny~$q$,
and then removing all nodes that are not reachable from the entry
resource.
The structure~$S_{c,q}$'s entry resource is the same as~$S$'s.
To illustrate, consider the resource structure $S$ and the configuration
$c$ given in Figure~\ref{fig:synth-overview}, and the access request
$q = \{{\sf role} \mapsto {\sf visitor}, {\sf time}\mapsto 10,
\operatorname{\sf
  correct-pin}\mapsto \bot\}$. The
side-entrance PEP and the bureau PEP deny $q$ and therefore these two
edges are removed from~$S$. The node that represents the bureau is not reachable from the entry resource and it is thus also removed. In Figure~\ref{fig:restricted-example} we
depict the removed edges and nodes. 
%The access request $q$ is granted access along the remaining edges of $S$.

We remark that the structure~$S_{c,q}$ is defined for a fixed access request~$q$. Access requests, which assign values to subject and contextual attributes, can however change, for instance when a subject's role is revoked or as time progresses.
We abstract away such changes in~$S_{c,q}$'s definition. In our running example, this amounts to assuming that a subject's role does not change during this time, and the time needed to move through the office building is negligible compared to the time needed for a subject's access rights to change; for example, the requirements {\bf R1-5} stipulate that subject's access rights may change only twice per day --- at $8$AM and at $8$PM.
This abstraction corresponds to taking a snapshot of all the attributes, and then computing~$S_{c,q}$ based on the snapshot.
We refer to these snapshots as \emph{sessions}.
Henceforth we interpret global requirements and local policies in the context of such sessions.
%%
%Sessions are common in access control,  see for example role-based access control~\cite{Ferraiolo:2007:RAC:1197765}.

Interpreting requirements and polices in the context of a session is justified for many practical scenarios. This is because, in most practical settings, changes in subject and contextual attributes are addressed through out-of-band mechanisms.
To illustrate, consider a subject who has the role visitor and enters the meeting room of our running example at $3$PM as permitted by the system's requirements. Now, suppose that the subject's visitor role is revoked at $4$PM, or that the subject remains in the meeting room until $10$PM. No access-control system can force the subject to leave.  In practice, out-of-band mechanisms, such as security guards, address such concerns.

In the following sections, we confine our attention to configurations that do not introduce deadlocks. That is, we consider those configurations~$c$ where for any~$q\in\mathcal{Q}$, the structure~$S_{c,q}$ is deadlock-free. In Section~\ref{sec:gen-req}, we describe how this provision can be encoded as a global requirement.
%
%We write $C_S$ for the set of all (deadlock-free) configurations for~$S$.  
%The set~$C_S$ is partially orderedunder the relation $\sqsubseteq_S$, defined as:~$c\sqsubseteq_S c'$ iff $c(e)\subseteq c'(e)$, for any edge~$e$ of~$S$ and any two configurations~$c$ and~$c'$.  

%%LocalWords: PDP PEP PIP PDPs PEPs PIPs

\section{Specifying Requirements}

\label{sec:reqs}

In this section we define \lang, a simple declarative language for specifying requirements.
We give the language's syntax and semantics in Section~\ref{sec:syn-sem}.  To simplify the specification of global requirements,  in Section~\ref{sec:patterns} we present four requirement patterns that capture common access-control idioms for physical spaces. 
Finally, in Section~\ref{sec:gen-req}, we illustrate the specification of two generic access-control requirements: deny-by-default and deadlock-freeness.

\subsection{Requirement Specification Language}
\label{sec:syn-sem}
The design of \lang has been guided by real-world physical access-control requirements. Virtually all such requirements can be formalized as properties that specify which physical spaces subjects can and cannot access, directly and over paths, based on the security context and on the physical spaces they have accessed.
In our physical access-control model, subjects choose which physical spaces to access, which induces a branching structure over the spaces they access.
We therefore build our requirement specification language \lang upon the computation tree logic
(CTL)~\cite{Emerson:1991:TML:114891.114907}, whose branching semantics is a natural fit for physical spaces.

\para{Syntax}
A requirement specified in \lang is a formula of the form
$T\Rightarrow \varphi$ given by the following BNF:
\[ \begin{array}{rcl} 
T & ::= & {\sf true}\mid a_s\in D\mid a_c\in D\mid \neg T \mid T
\wedge T\\ \varphi & ::= & {\sf true}\mid a_r\in D\mid \neg \varphi
\mid \varphi \wedge \varphi\mid {\sf EX}\varphi\mid {\sf AX}\varphi
\\[2pt]
&  \mid & {\sf E}[\varphi {\sf U} \varphi]\mid {\sf A}[\varphi {\sf U}
\varphi]~. 
\end{array}
\]
Here $a_s\in {\cal A}_S$ is a subject attribute, $a_c\in {\cal A}_C$
is a contextual attribute, $a_r\in {\cal A}_R$ is a resource
attribute, and~$D\subseteq {\cal V}$ is a finite subset of values.
The formula $T$ is a constraint over subject and contextual attributes
that defines the access requests to which the requirement applies. We
call $T$ the \emph{target}.  The formula $\varphi$ is a CTL formula
over resource attributes. It defines a path property that must hold
for all access requests to which the requirement is applicable.  We
call $\varphi$ an \emph{access constraint}.  

Note that additional Boolean and 
CTL operators can be defined in the standard
way.  For example, we write~$\mathsf{false}$ for~$\neg \mathsf{true}$, and
define the Boolean connectives~$\vee$ and~$\Rightarrow$ in the
standard manner using~$\neg$ and~$\wedge$.
We will later make use of the CTL operators ${\sf EF}\varphi$, ${\sf AG}\varphi$,
and ${\sf A}[\varphi{\sf R}\psi]$, which are defined as ${\sf
E}[{\sf true}~{\sf U}~\varphi]$, $\neg ({\sf EF} \neg \varphi)$, and $\neg ( {\sf E}[\neg \varphi{\sf
U}\neg\psi])$, respectively.  
Below we give intuitive explanations of~$\sf EX$, $\sf AX$, $\sf EU$, and $\sf AU$,
which are standard CTL connectives.

\begin{figure}[t]
	\centering
	\tabcolsep=4pt
	\fbox{\begin{tabular}{rcl}
			$a = c$ & $:=$ & $a \in \{c\}$\\
			$a \neq c$ & $:=$ & $\neg (a  = c)$\\
			$a_{\sf bool}$ & $:=$ & $a_{\sf bool} = {\sf true}$\\
			$a_{\sf num} \leq n$ & $:=$ & $a_{\sf num} \in \{0, \ldots, n\}$\\
			$a_{\sf num} \geq n$ & $:=$ & $\neg (a_{\sf num} \leq n - 1)$\\
			$n \leq a_{\sf num}\leq  n'$ & $:=$ & $(a_{\sf num}\geq n) \wedge (a_{\sf num}\leq n')$\\
		\end{tabular}}
		\caption{Syntactic shorthands: $a\in {\cal A}$ is an attribute,
			$a_{\sf num}\in {\cal A}_{\sf num}$ is a numeric attribute, 
			$a_{\sf bool}\in {\cal A}_{\sf bool}$ is a boolean attribute, $n, n'\in
			\mathbb{N}$ are natural numbers.}
		\label{fig:syn-ext}
	\end{figure}

The %%CTL 
connectives {\em exists-next} $\sf EX$ and {\em always-next} $\sf AX$
constrain the physical spaces that a subject can access next. In our
running example, suppose that a subject has entered the lobby. The
subject can next enter the corridor or go to the public space: these
are immediately accessible from the lobby.
In the lobby,~${\sf EX} \varphi$ states that the formula~$\varphi$
is true in at least one of these ``next'' spaces. In
contrast,~${\sf AX} \varphi$ states that~$\varphi$ is true both
in the corridor and in the public space.

The operators {\em exists-until} $\sf EU$ and {\em always-until} $\sf
AU$ relate two access constraints $\varphi_1$ and $\varphi_2$ over
paths. The formula~$\sf E[\varphi_1 U\varphi_2]$ states that there
exists a path that reaches a
resource~$r$ that satisfies~$\varphi_2$, and any resource prior to~$r$
on the path satisfies~$\varphi_1$. We use this connective to
formalize, for example, waypointing requirements such as: visitors cannot
access the meeting room until they have accessed the lobby.
The formula~$\sf A[\varphi_1 U\varphi_2]$ states that
every path reaches some resource~$r$
that satisfies~$\varphi_2$, and that any resource prior to~$r$ on the
path satisfies~$\varphi_1$.

To simplify writing attribute constraints in~\lang, 
we introduce in Figure~\ref{fig:syn-ext} abbreviations 
for numeric and boolean attributes.
Based on the attributes' domains, we partition the set of
attributes~$\cal A$ into {\em numeric attributes}~$A_{\sf num}$,
{\em boolean attributes}~$A_{\sf bool}$, and
{\em enumerated attributes}~$A_{\sf enum}$: An attribute~$a$ is {\em
  numeric} if ${\sf dom}(a)= \mathbb{N}\cup \{\bot\}$; it is
{\em boolean} if ${\sf dom}(a) = \{{\sf false}, {\sf true}, \bot\}$; otherwise, it
is enumerated and ${\sf dom}(a)$ is finite.
%
%% or $a_{\sf enum}$ 
We may write~$a_{\sf num}$ or $a_{\sf bool}$ to emphasize that an
attribute $a$ is numeric or boolean, respectively.
%%or enumerated, respectively.\fix{mo: subscript $_{enum}$ ever used?}

\begin{figure}[t]
	\centering
	\[\begin{tabular}{llll}
	$S, r_0$ & $\models {\sf true}$\\[2pt]
	$S, r_0$ & $\models a\in D$ & \ if~$\quad$ & $L(r_0)(a)\in D$\\[2pt]
	$S, r_0$ & $\models \neg \varphi$ & \ if~$\quad$ & $S, r_0 \not\models \varphi$\\[2pt]
	$S, r_0$ & $\models \varphi_1 \wedge \varphi_2$ & \ if~$\quad$ & $S, r_0 \models \varphi_1\ \text{and}\ S, r_0\models \varphi_2$\\[2pt]
	$S, r_0$ & $\models {\sf EX}\varphi$ & \ if~$\quad$ & $\exists (r_0, r_1, \cdots)\in S(r_0).\ S, r_1\models \varphi$ \\[2pt]
	$S, r_0$ & $\models {\sf AX}\varphi$ & \ if~$\quad$ & $\forall (r_0, r_1, \cdots)\in S(r_0).\ S, r_1\models \varphi$ \\[2pt]
	$S, r_0$ & $\models {\sf E}[\varphi_1 {\sf U}\varphi_2]$ & \ if~$\quad$ & $\exists (r_0, r_1, \cdots)\in S(r_0).\ \exists i\geq 0.$\\ [2pt]
	&&& $S, r_i\models \varphi_2 \wedge\forall j\in [0, i).\ S, r_j\models \varphi_1$\\
	$S, r_0$ & $\models {\sf A}[\varphi_1 {\sf U}\varphi_2]$ & \ if~$\quad$ & $\forall(r_0, r_1, \cdots)\in S(r_0).\ \exists i\geq 0.$\\[2pt]
	&&& $S, r_i\models \varphi_2 \wedge\forall j\in [0, i).\ S, r_j\models \varphi_1$
	\end{tabular}\]
	\caption{The  relation~$\models$ between a resource structure $S = ({\cal R}, E, \entry, L)$, a resource $r_0\in {\cal R}$, and an access constraints~$\varphi$.}
	\label{fig:semantics}
\end{figure}

\para{Semantics}
%
%To define \lang's semantics, 
We first inductively define the
satisfaction relation~$\vdash$ between an access request~$q\in \mathcal{Q}$
and a target:
\[
\begin{array}{rclcl}
q & \vdash & {\sf true}\\
q & \vdash & a\in D & \text{if} & q(a)\in D\\
q & \vdash & \neg T & \text{if} & q\not\vdash T\\
q & \vdash & T_1\wedge T_2 & \text{if} & q\vdash T_1\ \text{and}\ q\vdash T_2~.
\end{array}
\]
A requirement $T\Rightarrow \varphi$ is {\em applicable} to an access
request~$q$ iff $q$ satisfies the target $T$, i.e.\ $q\vdash T$. 
For example, the requirement $({\sf role = visitor})\Rightarrow
\varphi$ is applicable to all access requests that assign the value
$\sf visitor$ to the subject attribute $\sf role$.

\begin{figure*}
\centering
\small
\setlength{\tabcolsep}{7pt}
\begin{tabular}{lllp{0.28\textwidth}l}
\toprule
{\bf Pattern} & {\bf Shorthand} & {\bf Specification} & {\bf Description} & {\bf Intuitive Semantics}
\\
\midrule
%%%%%
Permission & 
$T\Rightarrow \textsc{Grant}(\varphi)$ & $T\Rightarrow {\sf EF}\ \varphi$ & $T$-requests can access $\varphi$-spaces. & 
\adjustbox{valign=b}{\begin{tikzpicture}[->,shorten >=1pt,auto]
\small
  \node[entry, minimum height=12pt, minimum width=12pt] (perm) at (0,0) {$\entry$};
  \node[resource, minimum height=12pt, minimum width=12pt] (a) at ($(perm) + (1, 0)$) {$\varphi$};
  \draw[->] (perm)-- node {\cmark} (a);
\end{tikzpicture}}
\\
Prohibition & 
$T\Rightarrow \textsc{Deny}(\varphi)$ & $T\Rightarrow {\sf AG }(\neg
\varphi)$ & $T$-requests cannot access $\varphi$-spaces.  & 
\adjustbox{valign=b}{\begin{tikzpicture}[->,shorten >=1pt,auto]
\small
  \node[entry, minimum height=12pt, minimum width=12pt] (proh) at ($(perm) - (0, 2)$) {$\entry$};
  \node[resource, minimum height=12pt, minimum width=12pt] (a) at ($(proh) + (1, 0)$) {$\varphi$};
  \draw[->] (proh)-- node {\xmark} (a);
\end{tikzpicture}}
\\
%%%%%
Blocking & 
$T\Rightarrow \textsc{Block}(\varphi, \psi)$ & $T\Rightarrow {\sf AG}(
\varphi \Rightarrow {\sf AG}(\neg \psi))$ &  $T$-requests cannot access a
$\psi$-space after accessing a $\varphi$-space. & 
\adjustbox{valign=c}{\begin{tikzpicture}[->,shorten >=1pt,auto]
\small
  \node[entry, minimum height=12pt, minimum width=12pt] (block) at ($(proh) + (2.5, 0)$) {$\entry$};
  \node[resource, minimum height=12pt, minimum width=12pt] (a) at ($(block) + (1, 0)$) {$\varphi$};
  \node[resource, minimum height=12pt, minimum width=12pt] (b) at ($(block) + (2, 0)$) {$\psi$};
  \draw[->] (block)--  (a);
  \draw[->] (a)-- node {\xmark} (b);
\end{tikzpicture}}
\vspace{-4pt}\\
%%%%%
Waypointing & 
$T\Rightarrow \textsc{Waypoint}(\varphi, \psi)$ & $T\Rightarrow
{\sf A}[\varphi{\sf R}\psi]$ & $T$-requests must access
a $\varphi$-space before accessing a $\psi$-space. & 
\adjustbox{valign=c}{\begin{tikzpicture}[->,shorten >=1pt,auto]
\small
  \node[entry, minimum height=12pt, minimum width=12pt] (way) at ($(block) + (3.5, 0)$) {$\entry$};
  \node[resource, minimum height=12pt, minimum width=12pt] (a) at ($(way) + (1, 0)$) {$\varphi$};
  \node[resource, minimum height=12pt, minimum width=12pt] (b) at ($(way) + (2, 0)$) {$\psi$};
  \draw[->] (way)-- (a);
  \draw[->] (a)-- (b);
  \draw[->] (way) to [bend left=35] node {\xmark} (b);
\end{tikzpicture}}
\\
%%%%%
\bottomrule
\end{tabular}
\caption{\lang Patterns: The entry resource~$\entry$ in the intuitive semantics is depicted using a gray rectangle. The arrows $\varphi \xrightarrow{\text{\cmark}} \psi$ $\big(\varphi \xrightarrow{\text{\xmark}} \psi\big)$ indicate that there must (must not) exist a path from a $\varphi$-space to a $\psi$-space along which $T$-requests are granted.}
\label{fig:patterns}
\end{figure*}

Let $S = ({\cal R}, E, \entry, L)$ be a resource structure.  A {\em path}
of $S$ is an infinite sequence of resources~$(r_0, r_1, \cdots)$ such
that $\forall i\geq 0.\ (r_i, r_{i+1})\in E$, and we denote the set of
all paths rooted at a resource $r_0$ by $S(r_0)$. 
In Figure~\ref{fig:semantics}, 
we inductively define the satisfaction relation~$\models$ between a
resource structure, a resource, and an access constraint.
A resource structure $S$ with an entry resource $\entry$ {\em satisfies} an
access constraint~$\varphi$, denoted by $S \models \varphi$, iff $S,
\entry\models \varphi$.
\begin{definition}
Let $S$ be a resource structure, $c$ a configuration for $S$, and
$T\Rightarrow \varphi$ a requirement. $S$ configured with $c$ {\em
  satisfies} $T\Rightarrow \varphi$, denoted by $S, c\Vdash
(T\Rightarrow \varphi)$, iff $q\vdash T$ implies
$S_{c,q}\models\varphi$, for any access request $q\in\mathcal{Q}$.
\end{definition}

We extend $\Vdash$ to sets of requirements as expected. Given a set of
requirements $R = \{T_1\Rightarrow \varphi_1, \ldots, T_n\Rightarrow
\varphi_n\}$, a resource structure $S$ configured with $c$ satisfies
$R$, denoted by $S, c\Vdash R$, iff $S, c\Vdash (T_i \Rightarrow \varphi_i)$ for
all~$i$, $1\leq i \leq n$.

We remark that resource structures can easily be represented using
standard Kripke structures~\cite{Emerson:1991:TML:114891.114907} by
mapping each resource to a Kripke state and each resource attribute
valuation to sets of atomic propositions. The access constraints can
be similarly mapped to standard CTL formulas by translating attribute
constraints into propositional logic.
Note however that while Kripke structures are often used to represent
changes of, say, a concurrent system's state over time, resource
structures model static physical spaces.

\subsection{Requirement Patterns}
\label{sec:patterns}

\lang can be directly used to specify global requirements.  However, to illustrate its use and expressiveness, we present the formalization of common physical access-control idioms.

%We now illustrate the specification of common physical access-control
%%requirements 
%idioms using \lang.

We have studied the requirements of an airport, a corporate building,
and a university campus to elicit the common structure of physical
access-control requirements.  To distill the basic requirement
patterns, we split complex requirements into their atomic parts. Our
analysis revealed four common patterns, which we formalize below.  The
first pattern abstracts {\em positive} requirements, which
stipulate that the access-control system must grant certain access
requests.  The remaining three patterns capture {\em negative}
requirements, which stipulate that the access-control system must deny
certain access requests.
%%\fix{Check: was three, I changed to two.}

We use the following terminology when describing requirements.
Given a target $T$, we call an access request $q$ a {\em
$T$-request} if $q\vdash T$, i.e.\ $q$ satisfies the target $T$. 
Given a resource structure $S$ and an access constraint $\varphi$, we say
that a subject can access a {\em $\varphi$-space} of $S$ if the
subject can access a physical space $r_0$ of $S$ such that $S, r_0\models
\varphi$, i.e. the space $r_0$ satisfies the access constraint~$\varphi$. 
Our patterns are summarized in Figure~\ref{fig:patterns}.

\para{Permission}
The {\em permission} pattern abstracts requirements stating that
$T$-requests can access $\varphi$-spaces from the entry resource.
Permission requirements have the form $T\Rightarrow ({\sf EF}\
\varphi)$.  The exists-future operator $\sf EF$ formalizes
that a $\varphi$-space is reachable from 
the entry resource.
For example, the requirement {\bf R3} stipulating that employees can access the
bureau between $8$AM and $8$PM is formalized as 
\[ \big( ({\sf role} = {\sf employee}) \wedge (8\leq {\sf time}\leq 20)\big) \Rightarrow {\sf EF}\ ({\sf id} = {\sf bur}).  \]
The target $({\sf role} = {\sf
employee}) \wedge (8\leq {\sf time}\leq 20)$ formalizes that this
requirement is applicable only to access requests made by visitors at
times between $8$AM and $8$PM. The access constraint ${\sf EF}({\sf id} =
{\sf bur})$ is satisfied iff the resource structure has a path from the
entry resource to the bureau.   
The requirements {\bf R1} and {\bf R4} of our running example are also
instances of the permission pattern.

%\para{Chaining}
%%
%The {\em chaining} pattern captures the requirements that stipulate
%that a subject can access a $\psi$-space
%%%after accessing
%from any $\varphi$-space.
%%
%Chaining requirements have the form $T\Rightarrow {\sf
%AG}(\varphi \Rightarrow ({\sf EF\ \psi}))$.
%%
%The always-globally operator $\sf AG$ quantifies over all paths that
%subjects can access and formalizes that whenever a subject accesses a
%$\varphi$-space, then there is a path to a $\psi$-space.
%%
%For example, 
%%% consider the fire-exit requirement, applicable to, e.g., hotels:
%%% any hotel guest must be able to access the can access states anyone
%%% who has access to the sport area, get also access to the an office
%%% can afterwards access the multimedia room. This requirement is
%%% formalized as
%%
%\fix{mo: sorry for messing up the example, but im not happy 
%with this pattern. none of our running example requirements fit into
%it, and yet we do not give a reasonable example from airport, office
%or university buildings.}
%
%\begin{align*}
%%
%( {\sf role} = {\sf employee}) \Rightarrow {\sf AG}( & ({\sf zone} =
%{\sf office}) \\ & \Rightarrow {\sf EF}({\sf id} = {\sf
%multimedia}))~.
%%
%\end{align*}
%%
%The requirement applies to all requests made by employees. The
%access constraint is satisfied if all paths that lead to an office
%also lead to the multimedia room.
%
%
\para{Prohibition}
Dual to the permission pattern, the {\em prohibition} pattern captures
requirements stating that $T$-access requests cannot access a
$\varphi$-space.
Prohibition requirements have the form $T\Rightarrow {\sf AG}(\neg
\varphi)$. The operator $\sf AG$ quantifies over
all paths reachable from the entry resource.
An example taken from our airport requirements is: Passengers cannot
access the departure gate zones
without a boarding pass. Another example is 
requirement~{\bf R5}, formalized as
\[ ({\sf role}\neq {\sf employee})\Rightarrow {\sf AG}(\neg 
\operatorname{\sf sec-zone})~. \]
The target ${\sf role}\neq {\sf employee}$ is satisfied by access
requests that assign a value other than $\sf employee$ to the
attribute $\sf role$. The access constraint ${\sf AG}(\neg 
\operatorname{\sf sec-zone})$ is satisfied if no path leads to a
security zone.

\para{Blocking}
The {\em blocking} pattern captures requirements stating that subjects
cannot access a $\psi$-space after they have accessed a
$\varphi$-space. 
Intuitively, accessing a $\varphi$-space \emph{blocks} the subject from
accessing $\psi$-spaces. 
At international airports, for example, passengers may not access
departure gate zones after they have accessed the baggage claim. 
Blocking requirements have the form $T\Rightarrow {\sf AG} (\varphi
\Rightarrow {\sf AG}(\neg \psi))$.
The airport example is formalized as: 
\begin{align*} ({\sf role} = {\sf passenger}) \Rightarrow {\sf AG} 
& \big( ({\sf zone} = \operatorname{\sf baggage-claim}) \\ & \Rightarrow {\sf AG}\
\neg (\sf zone = {\sf departure})\big)~.  \end{align*}
This requirement instantiates the blocking pattern: the target $T$ is
$({\sf role} = {\sf passenger})$, and the two access constraints~$\psi$ and~$\varphi$ are $(\operatorname{\sf zone} = {\sf departure})$
and $({\sf zone} = \operatorname{\sf baggage-claim})$.

\para{Waypointing}
The {\em waypointing} pattern captures requirements stipulating that
subjects must first access a $\varphi$-space before accessing a
$\psi$-space. For example, passengers cannot access an airport's terminal before
they have passed through a security check. This is a negative
requirement that restricts how passengers can access the terminal. 
%%The
%%requirement {\bf R2} is also an instance of the waypointing pattern.
Waypointing requirements have the form $T\Rightarrow ({\sf A}[\varphi{\sf R}\psi])$.
The globally-release operator $\sf AR$ quantifies over all
paths from the entry resource and formalizes that  if
$\psi$ holds at some point,
% in time, 
then $\varphi$ was valid at least once beforehand.
The requirement \textbf{R2} of our running example is an instance of
the waypointing pattern and is formalized as
\[ ({\sf role} = {\sf visitor}) \Rightarrow {\sf A}[({\sf id} = {\sf lob}){\sf
R}({\sf id} = {\sf mr})]~.  \]
The target specifies that this requirement applies to all access
requests made by visitors. The access constraint is satisfied if all
paths to the meeting room go through the lobby.

The four idioms just described cover \emph{all} the requirements that arose in the case studies 
that we report on in Section~\ref{case-studies-sec}. 
We remark though that there are global requirements that are not instances of these four patterns. 
For example, in corporate buildings,  a subject must be able to access the parking lot if he or she has access to an office.
Although this requirement cannot be expressed using the above patterns, it can be directly formalized in \lang as follows:
\begin{align*} 
{\sf true} \Rightarrow \big( ({\sf EF} ({\sf zone} = {\sf office})) \Rightarrow ({\sf EF} ({\sf id}
=\operatorname{\sf parking-lot}))\big)~.
\end{align*}
In general, as \lang supports all CTL operators, it can specify any branching property expressible in CTL.

\subsection{Generic Requirements}
\label{sec:gen-req}

We now describe two commonly-used generic requirements.

\para{Deny-by-default}
The deny-by-default principle stipulates that if an access request can be denied without violating the requirements, then it should be denied; cf.~\cite{saltzer75}.
Security engineers often follow this principle to avoid overly permissive local policies.
To illustrate, consider our running example and imagine that the role $\sf intern$ is contained in the domain of the attribute $\sf role$.
The requirements given in Figure~\ref{fig:synth-overview}(b) do not prohibit an intern from accessing, say, the meeting room. However, denying interns access to the meeting room is also compliant with these requirements.

%The {\em deny-by-default} requirement stipulates that 
%
%Without this requirement, the PEP's local policies may unnecessarily grant access requests. 
%%
%We remark that the deny-by-default requirement is an instance of the principal of fail-safe defaults, which stipulates that access decisions must be based on permissions rather than prohibitions~\cite{saltzer75}.

The following requirement, called {\em deny-by-default}, instantiates the above principle: If no positive requirement is applicable to an access request, then only the entry space is accessible to the subject who makes such a request. 
%
%The deny-by-default requirement is defined relative to a set of requirements. 
To formalize this requirement, we first define positive and negative requirements. 
Let $c$ and $c'$ be two configurations for a given resource structure~$S$.
A requirement $T\Rightarrow \varphi$ is \emph{positive} if $S,c\Vdash (T\Rightarrow \varphi)$ and $c \sqsubseteq_S c'$ imply  $S,c'\Vdash (T\Rightarrow \varphi)$.
A requirement $T\Rightarrow \varphi$ is \emph{negative} if $S,c\Vdash (T\Rightarrow \varphi)$ and $c' \sqsubseteq_S c$ imply  $S,c'\Vdash (T\Rightarrow \varphi)$.
Intuitively, if a configuration satisfies a positive (negative) requirement, then any more (less) permissive configuration also satisfies the requirement. 
We remark that although not all requirements are positive or negative, most real-world requirements are, including all requirements specified in this paper.

%We now formalize the deny-by-default requirement.
Let $R$ be a set of requirements that contains only positive and negative requirements, and let 
$\{T_1\Rightarrow \varphi), \cdots, T_n\Rightarrow \varphi_n\}$ be the set of all positive requirements contained in $R$. 
The deny-by-default requirement for $R$ is
\[ (\neg T_1)\wedge \cdots \wedge (\neg T_n)\Rightarrow {\sf AX}\ ({\sf id} = {\sf entry})~. \]
Here we assume that $L(\entry)({\sf id}) = {\sf entry}$, i.e. the entry resource $\entry$ is labeled with $\sf entry$. 
%This requirement states that if none of the positive requirements are applicable to a request, then only the entry space is accessible to the subject who makes such a request.
Adding this requirement to our running example's requirements would ensure that an intern cannot access, e.g., the meeting room.

\para{Deadlock-freeness}
A {\em deadlock-freeness} requirement stipulates that there are no deadlocks in a system, i.e.\ resources that a subject can access and then never leave. For example, the meeting room of our running example would be a deadlock if visitors could enter it, but never leave.
As discussed in our system model, local policies that introduce deadlocks are undesirable.

%Given a resource structure, we can formalize a global requirement stipulating that there are no deadlock resources. 
%
Formally, 
%et ~$S = ({\cal R}, E, r, L)$ be a resource structure. 
the deadlock-freeness requirement is defined as:
\[ {\sf true}\Rightarrow {\sf AG}\ {\sf EX}\ {\sf true}~. \]
This requirement applies to all access requests. The access constraint ${\sf AG}\ {\sf EX}\ {\sf true}$ states that for any resource a subject can access, there is a resource that the subject can access next.
A resource structure~$S$ and a configuration $c$ satisfy this requirement iff for any access requests $q\in {\cal Q}$, $S_{c,q}$ has no deadlocks.
\section{Policy Synthesis Problem}
\label{sec:synthesis}

We now define the policy synthesis problem.  
We show that this problem is decidable but NP-hard. 
%
%To show its decidability, we reduce this problem to the well-studied problem of controller synthesis.
%We also present a reduction from propositional satisfiability to policy synthesis, thereby establishing that policy synthesis is NP-hard. 
%Although it suffices for decidability, the reduction to controller synthesis is impractical because
%it is exponential in the number of the requirements. 
%
%We then present a practical synthesis algorithm based on SMT solving.
%, which we give in  \secref{sec:algorithm}.

% We now have all the ingredients needed to define the policy
% synthesis problem.  
% After doing this, we reduce policy synthesis to
% the well-studied problem of controller synthesis.
% %
% The reduction to controller synthesis is however impractical because
% it is exponential in the number of the requirements.  We present a
% more efficient algorithm based on SMT solving in
% \secref{sec:algorithm}.  We also present a reduction from
% satisfiability to policy synthesis, thereby establishing that policy
% synthesis is NP-hard.	

\subsection{Problem}

\label{sec:problem}

\begin{definition}
The {\em policy synthesis problem} is as follows:\\[4pt]
\begin{tabular}{p{0.15\columnwidth}p{0.8\columnwidth}}
{\bf Input.} & 
A resource structure $S$ and a set of requirements~$R$.\\
{\bf Output.} & A configuration $c$ such that $S, c\Vdash R$,
if such a configuration exists, and $\sf unsat$ otherwise.
\end{tabular}
\end{definition}

The synthesized configuration defines the local policies to be deployed at
the PEPs. 
Recall that a policy is extensionally defined as the set of access
requests for which the PEP grants access. As such a set may, in
general, be infinite, one cannot simply output an extensional definition of the
synthesized configuration. 
We therefore define local policies intensionally by constraints over
subject and contextual attributes, expressed in the same language that we specify requirement targets in~\secref{sec:reqs}. 
The semantics of an intensional local policy $P$ is then simply 
$\lambda q.\ \text{if}\ q\vdash P\ \text{then}\ {\sf grant}\ \text{else}\ {\sf deny}$.
%$\{ q \in {\cal Q} \mid q\vdash P\}$, i.e.\ the set of access requests that the policy grants.
%
Figure~\ref{fig:synth-overview} illustrates the input and output to the policy synthesis problem for our running example.

An example of a local policy defined over the attributes $\sf role$ and $\sf time$ is 
$({\sf role} = {\sf visitor}) \wedge (8 \leq {\sf time}\leq 20)$.
This local policy grants all access requests that assign the value $\sf
visitor$ to the attribute $\sf role$ and a number between $8$ and $20$ to
the attribute $\sf time$. Note that this local policy is also the target of requirement~\req{1}.

\subsection{Decidability}

\label{sec:decidability}

%To show that policy synthesis is decidable, we give a synthesis  algorithm that uses \emph{controller synthesis}~\cite{Ramadge:1987:SCC:35469.35482} as a subroutine.
%%

To show that the policy synthesis problem is decidable, we give a synthesis algorithm, called $\CS$, that uses  controller synthesis as a subroutine. 
In the following, we first define the controller synthesis problem.
%In contrast to controller synthesis, where the output is a set of edges that restricts a resource structure's behavior, the output of policy synthesis is a set of local policies. 
We then show how the algorithm $\CS$ constructs the PEPs' local policies by solving multiple controller synthesis instances. 

\para{Controller Synthesis Problem}
Controller synthesis algorithms take as input a description of an uncontrolled system, called a plant, along with a specification, and output a controller that restricts the plant so that it satisfies the given specification. In our setting, the plant is the resource structure and the specification is an access constraint, i.e.\ a CTL formula over resource attributes. The synthesized controller then defines which PEPs must grant or deny the access request so that the access constraint is satisfied.
For simplicity, we do not define the controller synthesis problem in its most general form. For our needs the following simpler definition suffices.

\begin{definition}
The {\em controller synthesis problem} is as follows:\\[4pt]
\begin{tabular}{p{0.15\columnwidth}p{0.8\columnwidth}}
{\bf Input.} & A resource structure $S = ({\cal R}, E, \entry, L)$ and an access constraint $\varphi$.\\
{\bf Output.} &  A set $E'\subseteq E$ of edges such that $({\cal R}, E', \entry, L)\models
\varphi$, if such an $E'$ exists, and $\sf unsat$ otherwise.
\end{tabular}
\end{definition}

The controller synthesis problem can be reduced to synthesizing a memoryless controller for a Kripke structure given a CTL specification. Deciding whether a controller synthesis instance has a solution is NP-complete~\cite{Antoniotti:1995}. Systems such as MBP~\cite{Bertoli01b} can be used to synthesize controllers. For a comprehensive overview of controller synthesis see~\cite{Ramadge:1987:SCC:35469.35482}.

\para{Algorithm}
The algorithm $\CS$ is based on two insights. First, for a given access request $q$, we can use controller synthesis to identify which PEPs must grant or deny~$q$. 
In more detail, we can compute $({\cal R}, E', \entry, L)\models \varphi_q$, where $\varphi_q$ conjoins all access constraints of the requirements that are applicable to~$q$.
The edges in $E'$ represent the PEPs that must grant~$q$ and those in $E\setminus E'$ the PEPs that must deny~$q$.
A configuration can thus be synthesized by solving one controller synthesis instance for each access request. However, there are infinitely many access requests.
Our second insight is that we can construct a configuration by solving finitely many controller synthesis instances. 
We partition the set~$\cal Q$ of access requests into~$2^{|R|}$ equivalence classes, where two access requests are equivalent if the same set of requirements are applicable to them.
Solving one controller synthesis instance for one representative access request per equivalence class is sufficient for our purpose.

\begin{algorithm}[t]
	\DontPrintSemicolon
	\KwIn{Resource stricture $S = ({\cal R}, E, \entry, L)$, \hspace{50pt}a set of requirements~$R$}
	\KwOut{A configuration $c$ or ${\sf unsat}$}
	\Begin{
		\For {$e\in E$} {\label{line:foredges}
			$c(e) \gets {\sf true}$\;\label{line:setedgetotrue}
		}
		\For{$R' \subseteq R$} {\label{line:for-loop}
			$T \gets T_1\wedge \cdots \wedge T_i \wedge \neg T_{i+1}\wedge
			\cdots \wedge \neg T_n$, where\;\label{line:compute-target}
			\hspace{10pt} $\{ T_1\Rightarrow \varphi_1, \ldots, T_i\Rightarrow \varphi_i\} = R'$ and\;
			\hspace{10pt} $\{ T_{i+1}\Rightarrow \varphi_{i+1}, \ldots, T_n\Rightarrow \varphi_n\} = R\setminus R'$\;
			\label{alg:cs:target}
			\If {$\exists q\in {\cal Q}.\ q\vdash T$} { \label{line:cs-sat}
				$\varphi \gets \varphi_1\wedge \cdots\wedge \varphi_i$\;\label{alg:cs:varphi}
				%     Let $T = T_1\wedge \cdots\wedge T_i \big
				\If {${\sf cs}(S, \varphi) = {\sf unsat}$} { \label{line:cs-synth}
					\Return $\sf unsat$\;
				} \Else {
				\For {$e\in E\setminus {\sf cs}(S, \varphi)$} { \label{alg:cs:edges}
					$c(e) \gets c(e) \wedge (\neg T)$\;\label{alg:cs:conf}
				}
			}
		}
	}
	\Return $c$\;
}
\caption{The algorithm ${\cal S}_{\sf cs}$ for synthesizing policies
	using controller synthesis. The controller synthesis algorithm,
	denoted~${\sf cs}(S, \varphi)$, outputs either a subset of~$E$ or $\mathsf{unsat}$.}
\label{alg:cs}
\end{algorithm}

The main steps of the algorithm $\CS$ are given in Algorithm~\ref{alg:cs}. 
The algorithm iteratively constructs a configuration $c$ as follows. 
Initially, it sets all local policies to $\sf true$ (lines~\ref{line:foredges}-\ref{line:setedgetotrue}). The algorithm iterates over all subsets $R' = \{T_1\Rightarrow\varphi_1, \ldots,
T_i\Rightarrow \varphi_i\}$ of the requirements~$R$ (line~\ref{line:for-loop}). 
The conjunction $T = T_1\wedge \cdots\wedge T_i \wedge \neg T_{i+1} \wedge \cdots \neg T_n$ constructed at line~\ref{line:compute-target} is satisfied by all access requests to which only the requirements contained in $R'$ are applicable. 
The set~$\{q\in {\cal Q}\mid q\vdash T\}$ is an equivalence class of access requests.
%Note that  $T$ intensionally defines an equivalence class. 
%
If this equivalence class is nonempty, i.e. $\exists q\in {\cal Q}.\ q\vdash T$, then $c$ must grant and deny all access requests contained in it in conformance with the access constraints defined by~$R'$.
Lines~\ref{alg:cs:varphi}-\ref{alg:cs:conf} define how the algorithm~$\CS$ updates~$c$. First, it constructs  the conjunction $\varphi$ of the access constraints defined by the requirement in~$R'$.
It then executes the controller synthesis algorithm, denoted by ${\sf cs}$, with the inputs $S$ and $\varphi$.
If the algorithm $\sf cs$ returns $\sf unsat$,  then the requirements are not satisfiable for the given resource structure, and the algorithm~$\CS$ thus returns $\sf unsat$. Otherwise, the algorithm $\sf cs$ returns a set $E'\subseteq E$ of edges. The algorithm updates the configuration $c$ as follows: for any edge in $E\setminus E'$, the configuration is modified to deny access to all requests in the equivalence class defined by~$R'$.
The algorithm terminates when all subsets of the global requirements have been considered.

\begin{theorem}
Let $S$ be a resource structure and $R$ a set of requirements.
If ${\cal S}_{\sf cs}(S, R) = c$ then $S,c\Vdash R$. 
If ${\cal S}_{\sf cs}(S, R) = {\sf unsat}$ then there is no configuration $c$ such that $S, c\Vdash R$.
\label{thm:decompose} 
\end{theorem}

\ifext
We prove this theorem and give the complexity of~$\CS$ in Appendix~\ref{sec:decidability}.
\else
We prove this theorem and give the complexity of~$\CS$ in~\cite{tech-report}.
\fi

\para{Example} To illustrate~${\cal S}_{\sf cs}$,
consider our running example and the requirements {\bf R2} and {\bf
R5} formalized as follows:
\begin{align*}
\textbf{R2} &:= ({\sf role} = {\sf visitor})\Rightarrow ({\sf A}[
({\sf id} = {\sf lob})\ {\sf R}\ ({\sf id} = {\sf mr})]) \\
\textbf{R5} &:= ({\sf role} \neq {\sf employee})\Rightarrow ({\sf AG}\
\neg \operatorname{\sf sec-zone})~.
\end{align*}
We remark that ${\sf dom}({\sf role}) = \{\bot, {\sf visitor}, {\sf employee} \}$, and therefore the targets ${\sf role} = {\sf visitor}$ and ${\sf role} \neq {\sf employee}$ are not equivalent.
To synthesize a configuration, the algorithm ${\cal S}_{\sf cs}$ executes
the second for-loop four times. Let the
selected subset of requirements in the first iteration be $\{\textbf{R2}, \textbf{R5}\}$.
The conjunction~$T$ of the targets is $({\sf role} = {\sf
  visitor})\wedge ({\sf role} \neq {\sf employee})$, which is
 equivalent to $({\sf role} = {\sf visitor})$.  Hence,~$T$ is
satisfiable.  The access constraint $\varphi$ (see
Algorithm~\ref{alg:cs}, line \ref{alg:cs:varphi}) is then $ ({\sf A}[ ({\sf id} = {\sf
    lob})\ {\sf R}\ ({\sf id} = {\sf mr})]) \wedge ({\sf AG}\ \neg \operatorname{\sf sec-zone})$. A possible output by the controller synthesis
algorithm ${\sf cs}(S, \varphi)$ is $E\setminus \{({\sf cor}, {\sf
  bur}), ({\sf out}, {\sf cor}) \}$. The updated configuration $c$
after the first iteration is therefore
\begin{align*}
c(e)
= \left\{
\begin{array}{ll}
{\sf true}\wedge {\sf role} \neq {\sf visitor} & \text{if}\ e = ({\sf cor}, {\sf bur})\\
{\sf true}\ \wedge {\sf role}\neq {\sf visitor} & \text{if}\ e = ({\sf out}, {\sf cor})\\
{\sf true}& \text{otherwise}~.
\end{array} \right.
\end{align*}
Suppose the outputs to the remaining three controller synthesis instances are
${\sf cs}(S, \varphi_{\{\textbf{R2}\}}) = E\setminus \{ ({\sf out},{\sf cor})\}$, 
${\sf cs}(S, \varphi_{\{ \textbf{R5}\}}) = E\setminus \{ ({\sf cor}, {\sf bur})\}$, and 
${\sf cs}(S, \varphi_\emptyset) =  E$, where $\varphi_X$ denotes
the conjunction of the access constraints of the requirements in $X$.
The simplified configuration $c$ returned by 
${\cal S}_{\sf cs}$ is
\begin{align*}
c(e)
= \left\{
\begin{array}{ll}
{\sf role} = {\sf employee} & \text{if}\ e = ({\sf cor}, {\sf bur})\\
{\sf role}\neq {\sf visitor} & \text{if}\ e = ({\sf out}, {\sf cor})\\
{\sf true}& \text{otherwise}\, .
\end{array} \right.
\end{align*}

\para{Limitations}
The main limitation of the algorithm $\CS$ is that the running time is exponential in the number of requirements, rendering it impractical for nontrivial instances of policy synthesis.  
For example, while the algorithm~$\mathcal{S}_\mathsf{cs}$ takes $2$ seconds to synthesize a configuration for our running example, it does not terminate within an hour for our case studies, reported in Section~\ref{case-studies-sec}.
We give a practical policy synthesis algorithm based on SMT solving in Section~\ref{sec:algorithm}. 

\subsection{NP-hardness}
To show NP-hardness, we reduce propositional satisfiability to the policy synthesis problem. It is easy to see that a propositional formula~$\varphi$ can be encoded, in logarithmic space, as a target~$T_\varphi$ over Boolean attributes.
Consider the policy synthesis problem for the inputs $S$ and $\{(T_\varphi\Rightarrow {\sf false})\}$, where $S$ is an arbitrary resource structure.
If the output to this policy synthesis instance is $\sf unsat$ then for some access request $q$, we have $q\vdash T_\varphi$. Hence $\varphi$ is satisfiable.  Alternatively, the output to the policy synthesis problem is a configuration $c$. Since for any access request $q$ where $q\vdash T_\varphi$ we have $S_{c,q}\models {\sf false}$,  it is immediate that there is no access request $q$ such that $q\vdash T_\varphi$. Therefore, $\varphi$ is unsatisfiable.  

%\begin{theorem}
%The policy synthesis problem is NP-hard.
%\label{thm:np-hardness}
%\end{theorem}

%\begin{figure}[t]
%\centering
%\small
%\begin{tikzpicture}[->,>=stealth,shorten >=1pt,auto]
%  \node[rectangle, draw=black, minimum width=120pt, minimum height=80pt] (a) at (0,0) {};
%  \node[anchor=north east] at (a.north east) {Access requests $\cal Q$};
%  \node[anchor=east, ellipse, draw=black, minimum height=40pt, minimum width=70pt] (t1) at ($(a) + (1.7, 0)$) {};
%  \node[anchor=west, ellipse, draw=black, minimum height=40pt, minimum width=70pt] (t2) at ($(a) + (-1.7, 0)$) {};
%  \node[anchor=east] at ($(t1.east) - (0.1,0)$) {\begin{tabular}{l}$\neg T_1$\\$\wedge T_2$\end{tabular}};
%  \node[anchor=west] at ($(t2.west) + (0.1,0)$) {\begin{tabular}{l}$T_1\wedge $\\$\neg T_2$\end{tabular}};
%  \node at ($(a) - (0, 0)$) {\begin{tabular}{l}$T_1$\\$\wedge T_2$\end{tabular}};
%  \node[anchor=south] at ($(a.south) - (0, 0)$) {$\neg T_1 \wedge \neg T_2$};
%\end{tikzpicture}
%\caption{The targets $T_1$ and $T_2$ partition the set of access requests~$\cal Q$ into four sets.}
%\label{fig:partitions}
%\end{figure}

%\begin{theorem}
%The policy synthesis problem is decidable.
%\label{thm:decidability}
%\end{theorem}
\section{Policy Synthesis Algorithm}

\label{sec:algorithm}

In this section, we define our policy synthesis algorithm based on SMT solving, called $\SMT$. The
algorithm takes as input a resource structure~$S$, a set~$R$ of
requirements, and a set $C$ of configurations. The set $C$ is encoded
symbolically, as we describe shortly. The algorithm outputs a
configuration~$c$ such that $S, c\Vdash R$, if there is such a
configuration in $C$; otherwise, it returns $\sf unsat$. To synthesize
a configuration $c$, the algorithm encodes the question $\exists c\in
C.\ S, c\Vdash R$ in a decidable logic supported by standard SMT
solvers.
Due to its technical nature, we relegate a detailed description of the
encoding to the end of this section.

Our algorithm takes as input a set of configurations, 
and we refer to the symbolic encoding of this set as a {\em
  configuration template}.
The configuration template enables us to restrict the search space:
the algorithm confines its search to the configurations described by
the template.
Our algorithm $\SMT$ is sound, independent of the provided
configuration template. Its completeness, however, depends on
the template.
We show that one can construct a template for which $\SMT$ is
complete, but the resulting template would, in practice, encode so many configurations that the resulting SMT problem would be infeasible to solve.
We therefore strike a balance between the algorithm's completeness and
its efficiency: since real-world local policies often have small
syntactic representations, as demonstrated by our experiments in
Section~\ref{sec:eval}, our policy synthesis tool starts with a
template that defines configurations with succinct local policies, and iteratively
executes $\SMT$, increasing the template's size in each iteration.
It turns out that in our case studies
%, which we present in ~\ref{sec:implementation}, 
a small number of iterations is
sufficient to synthesize all local policies. 
%We further explain our tool in Section~\ref{sec:implementation}.
%%
Below, we describe the algorithm $\SMT$'s components.

\subsection{Configuration Templates}
A {\em configuration template} assigns to each edge of the resource
structure a symbolic encoding of a set of local policies.
To illustrate this encoding, consider the set of local policies
$\{{\sf true}, {\sf role} = {\sf employee}, {\sf role} \neq {\sf
  visitor}\}$.  We symbolically encode this set for an edge, say $({\sf cor}, {\sf bur})$,
as a constraint over subject and contextual attributes, as well as
a {\em control} variable~$z_{({\sf cor}, {\sf bur})}$:
\begin{equation}\tag{T1}
\label{ex:template}
 \begin{array}{llcl}
\hspace{-10pt}C(({\sf cor}, {\sf bur})) =\hspace{-6pt} & (z_{({\sf cor}, {\sf bur})} = 1 & \hspace{-6pt}\Rightarrow\hspace{-6pt} & {\sf true})\ \wedge\\
& (z_{({\sf cor}, {\sf bur})} = 2 & \hspace{-6pt}\Rightarrow\hspace{-6pt} & {\sf role} = {\sf employee})\ \wedge\\
& (z_{({\sf cor}, {\sf bur})} = 3 & \hspace{-6pt}\Rightarrow\hspace{-6pt} & {\sf role} \neq {\sf visitor})~.
\end{array} 
\end{equation}
The control variable $z_{({\sf cor}, {\sf bur})}$ encodes the choice of one of three local policies for the edge~$({\sf cor}, {\sf bur})$. 
Hence, for this example, the set of configurations defined by the
configuration template contains~$3^{|E|}$ elements, where~$E$ is the
set of edges in the resource structure.
Note that for a set of local policies of size~$n$ (here $n=3$),~$\lceil \log n\rceil$ propositional variables are sufficient for
representing each edge's control variables. To avoid clutter, we will write $C_{r_0, r_1}$ for $C((r_0, r_1))$.

We remark that configuration templates can be used to restrict the search space of configurations to those that satisfy {\em attribute availability} constraints, which restrict the set of attributes that PEPs can retrieve. 
%To illustrate, 
Suppose that only the side-entrance door of our running example is equipped with a keypad.
%, which subjects can use to enter their PIN. 
To account for this constraint, we will restrict the configurations in the template to those that use the $\operatorname{\sf correct-pin}$ attribute only in the local policy of side entrance's lock.

\subsection{Algorithm}
The main steps of~$\SMT$ are given in Algorithm~\ref{alg:smt}. We
describe the algorithm with an example: the input to the algorithm
consists of the resource structure and the requirements $\mathbf{R2}$
and $\mathbf{R5}$ of our running example, along with the above
configuration template~$C$, which maps edges to the set of local
policies $\{{\sf true}, {\sf role} = {\sf employee}, {\sf role} \neq
{\sf visitor}\}$.
The algorithm starts by creating for each requirement a constraint
that asserts the satisfaction of the requirement in the resource
structure, given the template. This constraint is called~$\psi$ in the
algorithm, and is expressed in the logic of an SMT solver. 
This step is implemented by the subroutine~\textsc{Encode},
defined in Figure~\ref{fig:encode}. To encode the satisfaction of access constraints, we follow the standard model-checking algorithm for CTL based on labeling~\cite{Huth:2004:LCS:975331}; we explain this encoding at the end of this section.

As an example, the result of $\textsc{Encode}(S, \textbf{R2}, C)$,
after straightforward simplifications, is the following constraint:
\begin{align*} 
\psi_{R2} := {\sf role} = {\sf visitor} \Rightarrow (\neg C_{\sf out, cor}\vee \neg C_{\sf cor, mr}) \, .
\end{align*}
Here ${\sf role}$ is an \emph{attribute variable}, originating from
\textbf{R2}'s target, and $C_{\sf out, cor}$ and $C_{\sf cor, mr}$
%, with~$e\in \{({\sf out, cor}),({\sfcor, mr})\}$, stands for 
are the symbolic encodings of the local policies for the edges~$({\sf out, cor})$ and $({\sf cor, mr})$, respectively. 
This constraint states that if the
requirement's target ${\sf role} = {\sf visitor}$ is satisfied, then
one of the PEPs along the path that starts at the entry resource and
reaches the meeting room directly through the corridor must deny
access.
Similarly, $\textsc{Encode}(S, \textbf{R5}, C)$ returns the
constraint:
\begin{align*}
\psi_{R5}  :=\ & {\sf role} \neq {\sf employee} \Rightarrow  \\
& ((\neg C_{\sf out, cor}\vee \neg C_{\sf cor, bur}) \\
& \wedge (\neg C_{\sf out, lob}\vee \neg C_{\sf lob, cor} \vee \neg C_{\sf cor, bur})) \, .
\end{align*}
This states that any access request that maps the attribute
$\sf role$ to a value other than $\sf employee$ must be denied by 
at least one PEP along the path to the bureau that goes directly through the corridor, and moreover it must be denied by 
at least one PEP along the path that passes through the lobby.

\begin{algorithm}[t]
	\DontPrintSemicolon
	\KwIn{A resource structure $S = ({\cal R}, E, r, L)$,\hspace{45pt}
		a set~$\{R_1, \cdots, R_n\}$ of requirements,\hspace{55pt}
		a configuration template~$C$}
	\KwOut{A configuration~$c$ or ${\sf unsat}$}
	\Begin{
		$\phi \gets {\sf true}$\;
		\For {$R \in \{R_1, \cdots, R_n\}$} {
			$\psi\gets \textsc{Encode}(S, R, C)$\; \label{line:smt-encode}
			$\phi \gets \phi \wedge \psi$\;
		}
		\If {$(\exists{\vec{z}}. \forall{\vec{a}}.\ \phi)\ \mathrm{is}\ 
			{\sf sat}$} { \label{line:smt-solve}
			${\cal M}\gets \textsc{Model}(\exists{\vec{z}}. \forall{\vec{a}}.\ \phi)$\;
			\For {$e \in E$} {
				$c(e) \gets \textsc{Derive}(C(e), {\cal M})$\;
			}
			\Return{$c$}\;
		} \Else {
		\Return{${\sf unsat}$}
	}
}
\caption{The algorithm ${\cal S}_{\sf smt}$ for synthesizing policies
	using SMT solving.}
\label{alg:smt}
\end{algorithm}

\begin{figure}
\fbox{\begin{minipage}{0.98\columnwidth}
%\small
$\textsc{Encode}(S, T\Rightarrow \varphi, C)\ \textbf{returns}\ T\Rightarrow \tau(\varphi, \entry)$\\
\arraycolsep=2pt
\def\arraystretch{1.4}
\[
\begin{array}{rcl}
\multicolumn{3}{l}{\textbf{Rewrite rules}\ \tau(\varphi, r_0):}\\
\tau({\sf true}, r_0) & \hookrightarrow & {\sf true}\\
\tau(a\in D, r_0) & \hookrightarrow & 
\left\{
\begin{array}{ll}
{\sf true}\quad &\text{if}\ L(r_0)(a)\in D\\
{\sf false} &\text{otherwise}\\
\end{array}\right.\\
\tau(\neg \varphi, r_0) & \hookrightarrow& \neg \tau(\varphi, r_0)\\
\tau(\varphi_1\wedge \varphi_2, r_0) & \hookrightarrow& \tau(\varphi_1, r_0) \wedge \tau(\varphi_2, r_0)\\
\tau({\sf EX}\varphi, r_0) & \hookrightarrow & \exists{r_1\!\in\! E(r_0)}.\ \big( C_{r_0, r_1} \wedge \tau(\varphi, r_1)\big)\\
\tau({\sf AX}\varphi, r_0) & \hookrightarrow & \forall{r_1\!\in\! E(r_0)}.\ \big( C_{r_0, r_1} \Rightarrow \tau(\varphi, r_1)\big)\hspace{20pt}\\
\tau({\sf E}[\varphi_1 {\sf U} \varphi_2], r_0) & \hookrightarrow & \tau_{\sf U}({\sf E}[\varphi_1 {\sf U} \varphi_2], r_0, \emptyset)\\
\tau({\sf A}[\varphi_1 {\sf U} \varphi_2], r_0) & \hookrightarrow & \tau_{\sf U}({\sf A}[\varphi_1 {\sf U} \varphi_2], r_0, \emptyset)
\end{array}
\]
\[
\begin{array}{l}
\textbf{Rewrite rules}\ \tau_{\sf U}(\varphi, r_0, X),\ \text{with}\ X\subseteq {\cal R}:\\
%
% EU
%
\tau_{\sf U}({\sf E}[\varphi_1 {\sf U} \varphi_2], r_0, X) \hookrightarrow \tau(\varphi_2, r_0) \vee \Big(\tau(\varphi_1, r_0)\wedge \\ 
%%%%
\hspace{10pt} 
\big( \exists{r_1\!\in\! E(r_0)\!\setminus\! X}.\ C_{r_0, r_1} \wedge \tau_{\sf U}({\sf E}[\varphi_1 {\sf U} \varphi_2], r_1, X\!\cup\!\{r_0\}) \big) \Big) \\
%%%%
%
% AU
%
\tau_{\sf U}({\sf A}[\varphi_1 {\sf U} \varphi_2], r_0, X) \hookrightarrow \tau(\varphi_2, r_0) \vee \Big(\tau(\varphi_1, r_0) \wedge \\
%%%%
\hspace{10pt} 
\big( \forall{r_1\!\in\! E(r_0)\!\setminus\! X}.\ C_{r_0,r_1}\!\Rightarrow\!\tau_{\sf U}({\sf A}[\varphi_1 {\sf U} \varphi_2], r_1, X\!\cup\!\{r_0\}) \big) \wedge\\
%%%%
%\hspace{10pt} \big( \exists{r_1\!\in\! E(r_0)\!\setminus\! X}.\ C_{r_0, r_1}\big) \wedge \\
\hspace{10pt} \big(\forall{r_1\!\in\! E(r_0)\! \cap\! X}.\ \neg C_{r_0,r_1}\big) \Big) \\
\end{array}
\]
\end{minipage}}
\caption{Encoding the satisfaction of a requirement $T\Rightarrow \varphi$ in a resource structure $S = ({\cal R}, E, \entry, L)$, given a template~$C$, into an SMT constraint.
The rewrite rules~$\tau$ reduce an access constraint $\varphi$ and a resource $r_0$ to an SMT constraint. For a resource~$r_0\in\mathcal{R}$, we write $E(r_0)$ for $\{r_1\in \mathcal{R}\mid (r_0,r_1)\in E\}$. The
$\exists$ and $\forall$ quantifiers range over a finite domain. Therefore, the former can be expanded as a finite number of disjunctions, and the latter as a finite number of conjunctions.
 }
\label{fig:encode}
\end{figure}

The conjunction of the constraints created for all the requirements is
called~$\phi$ in Algorithm~\ref{alg:smt}. To check whether there is a
configuration in~$C$ that satisfies the requirements, the
algorithm calls an SMT solver to find a model for the formula $\exists \vec{z}. \forall \vec{a}.\ \phi$. Here this is
\[ \exists \vec{z}. \forall \vec{a}.\ (\psi_{R2}\wedge \psi_{R5})~, \]
where $\vec{z}$ and $\vec{a}$ consist, respectively, of all the
control and attribute variables.
If~$\phi$ is unsatisfiable, then no configuration in $C$ satisfies the
requirements. In this case, the algorithm returns~$\mathsf{unsat}$.
If however the formula is satisfiable, then
the SMT solver returns a model of the formula, which instantiates all
the control variables (but not the attribute variables since they are
universally quantified).
We refer to the SMT solver's procedure that returns such a model as
{\sc Model} in Algorithm~\ref{alg:smt}.  
The model~$\mathcal{M}$ generated by the SMT solver in effect
identifies the local policy for each edge~$e$: by instantiating the
control variables in~$C(e)$, we obtain~$e$'s local policy; see
template~\ref{ex:template}.  This procedure is called $\textsc{Derive}(C(e), {\cal M})$ in the algorithm.
For our example, a model $\cal M$ that satisfies $\exists \vec{z}. \forall \vec{a}.\ (\psi_{R2}\wedge \psi_{R5})$ maps $z_{(\sf cor, bur)}$ to $2$, $z_{(\sf out, cor)}$ to $3$, and all other control variables to $1$.
It is then evident from template~\ref{ex:template} that, e.g., the local policy for the edge $({\sf cor, bur})$ is $({\sf role} = {\sf employee})$.

\para{Complexity}
Let $S$ be a resource structure, $R$ be a set of requirements, and $C$
be configuration template.
The running time of the ${\cal S}_{\sf smt}$ algorithm is determined by
the size of the generated formula $\phi$ and
the complexity of finding a model of~$\phi$.
The size of the formula~$\phi$ is in~${\cal O}(d\cdot |R|\cdot |{\cal R}|)$, where~$d$ is the size of the largest access constraint that appears in the requirements, $R$ is the set of requirements, and $\cal R$ is the set of resources in~$S$.
The formula $\phi$ is defined over Boolean control variables~$\vec{z}$
and attribute variables~$\vec{a}$. The number of control and attribute variables is
$\lceil{\it log}(|C|)\rceil$ and
$|{\cal A}|$, respectively. In the worst case, one must check all possible models of
the formula $\phi$, so finding a model of $\phi$ is in 
${\cal O}(2^{\lceil{\it log}(|C|)\rceil +
k\cdot |A|})$, where $k$ is the largest domain
that appears in the constraints.
Note that such domains are always finite. For example, ${\sf time}
\geq 10$ is a shorthand for $\neg ({\sf time} \in \{0, \ldots, 9\})$.
We conclude that the overall running time of the algorithm ${\cal S}_{\sf smt}$ is in
${\cal O}(2^{\lceil{\it log}(|C|)\rceil + 
k\cdot |A|} + d\cdot |R|\cdot |{\cal R}|)$.

\subsection{Soundness and Completeness}
The algorithm ${\cal S}_{\sf smt}$ is sound. 
%The proof of the following theorem is given in Appendix~\ref{sec:soundness}.

\begin{theorem}
Let $S$ be resource structure, $R$ a set of requirements, and $C$ a configuration template. If ${\cal S}_{\sf smt}(S, R, C) = c$ then $S, c\Vdash R$. If ${\cal S}_{\sf smt}(S, R, C) = {\sf unsat}$, then there is no configuration $c$ in $C$ such that $ S, c\Vdash R$.
\label{thm:soundness} 
\end{theorem}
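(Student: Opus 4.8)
The plan is to reduce the soundness of $\SMT$ to a single \emph{encoding-correctness lemma} that relates the SMT constraint produced by \textsc{Encode} to the satisfaction relation $\models$ of Figure~\ref{fig:semantics} on the pruned structure $S_{c,q}$. The bridge between the two worlds is the observation that a model $\mathcal{M}$ of $\exists\vec{z}.\forall\vec{a}.\ \phi$ instantiates exactly the control variables $\vec{z}$, and that $\textsc{Derive}(C(\cdot),\mathcal{M})$ turns this instantiation into a concrete configuration $c$; dually, every assignment to the attribute variables $\vec{a}$ corresponds to one access request $q\in\mathcal{Q}$. Under this correspondence the guard $C_{r_0,r_1}$ evaluates to true precisely when the edge $(r_0,r_1)$ survives in $S_{c,q}$, i.e.\ when $c((r_0,r_1))$ grants $q$. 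I would make this correspondence explicit first.

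The heart of the argument is the lemma: for every access constraint $\varphi$, every resource $r_0$, and every assignment of $\vec{z}$ (yielding $c$) and $\vec{a}$ (yielding $q$) with $c$ deadlock-free, the rewrite result $\tau(\varphi,r_0)$ of Figure~\ref{fig:encode} evaluates to true iff $S_{c,q},r_0\models\varphi$, whenever $r_0$ is reachable in $S_{c,q}$. I would prove this by structural induction on $\varphi$. The Boolean and attribute cases are immediate from the $\tau$ rules and the definition of $L$; the $\mathsf{EX}$ and $\mathsf{AX}$ cases follow because the guard $C_{r_0,r_1}$ restricts the quantifiers in $\tau$ to exactly those successors of $r_0$ that remain in $S_{c,q}$, and because descending only through surviving edges keeps every visited node reachable, so the implicit removal of unreachable nodes in $S_{c,q}$ needs no separate treatment. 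The two until cases are handled through the auxiliary rules $\tau_{\sf U}$, for which I would run a nested induction on $|\mathcal{R}\setminus X|$; this measure strictly decreases at each unfolding since $r_0$ is added to $X$ while the quantifiers range over $E(r_0)\setminus X$.

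Granting the lemma, the theorem follows directly. Since $\textsc{Encode}(S,T_i\Rightarrow\varphi_i,C)=T_i\Rightarrow\tau(\varphi_i,\entry)$ and $\phi=\bigwedge_i\psi_i$, the lemma applied at $r_0=\entry$ (which is always present in $S_{c,q}$) yields that, for a fixed $\vec{z}$ defining a configuration $c$, the formula $\forall\vec{a}.\ \phi$ holds iff for every $q$ and every $i$, $q\vdash T_i$ implies $S_{c,q}\models\varphi_i$, that is, iff $S,c\Vdash R$. For the first part, if $\SMT$ returns $c$ then the model $\mathcal{M}$ witnesses $\forall\vec{a}.\ \phi$ for the $\vec{z}$ defining $c$, so $S,c\Vdash R$. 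For the second part, if some $c\in C$ satisfied $S,c\Vdash R$, then the $\vec{z}$ defining $c$ would witness $\exists\vec{z}.\forall\vec{a}.\ \phi$, contradicting the solver's $\sf unsat$ verdict; hence no such $c$ exists in $C$.

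I expect the until cases of the encoding lemma to be the main obstacle, since this is where the finite graph's cycles interact with the infinite-path CTL semantics. For $\mathsf{EU}$ the argument is that any witnessing path may be shortened to a simple path, so restricting successors to $E(r_0)\setminus X$ loses no witnesses. For $\mathsf{AU}$ the subtlety is the extra conjunct $\forall r_1\in E(r_0)\cap X.\ \neg C_{r_0,r_1}$, which I must show rules out exactly the surviving back-edges that would close a ``lasso'' cycle of $\varphi_1\wedge\neg\varphi_2$ nodes along which $\varphi_2$ is never reached, thereby falsifying $\mathsf{A}[\varphi_1\,\mathsf{U}\,\varphi_2]$. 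Establishing that the visited-set bookkeeping captures precisely such lasso counterexamples, and no others, is the delicate part; the deadlock-freeness proviso is what guarantees that the infinite paths required by the semantics actually exist, so I would carry it as a standing assumption throughout.
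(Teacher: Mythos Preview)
Your proposal is correct and follows essentially the same route as the paper: an encoding-correctness lemma for $\tau$ proved by structural induction on $\varphi$, with the until cases handled by a nested induction on the visited set $X$, then lifted to requirements via the target/access-constraint split and the $\exists\vec{z}.\forall\vec{a}$ quantifier interpretation. The only presentational difference is that the paper makes the until-case invariants explicit through an auxiliary notion of \emph{eventuality graphs} (and the properties ``fulfills $\mathsf{A}[\varphi_1\mathsf{U}\varphi_2]$'' and ``$X$-disjoint''), which packages your lasso/simple-path intuitions into a concrete inductive hypothesis; you may find that formulation helpful when you actually carry out the $\mathsf{AU}$ case.
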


${\cal S}_{\sf smt}$'s completeness depends on the template~$C$ provided as input to the algorithm. We show that one can construct a template for which the algorithm is complete.
A template $C$ is {\em complete} for a given resource structure~$S$
and set of requirements~$R$ if $\SMT(S, R, C)$ returns a configuration
whenever there is a configuration that satisfies the requirements.
For the algorithm's completeness, it is in fact sufficient to start the
algorithm with a template~$C_{S,R}$ that contains all the
configurations that the algorithm based on controller synthesis,
described in Section~\ref{sec:decidability}, may output. 
The following theorem formalizes this  observation.

\begin{theorem}
Given a resource structure~$S$ and a set~$R$ of requirements, the
configuration template~$C_{S,R}$ is complete for $S$ and~$R$.
\label{thm:completeness} 
\end{theorem}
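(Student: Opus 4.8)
The plan is to reduce completeness of $C_{S,R}$ to the correctness of the controller-synthesis algorithm $\CS$ (Theorem~\ref{thm:decompose}) together with the soundness of $\SMT$ (Theorem~\ref{thm:soundness}). Recall that, by construction, $C_{S,R}$ is defined to contain \emph{every} configuration that $\CS$ can output across all nondeterministic choices of its controller-synthesis subroutine $\mathsf{cs}$. Hence it suffices to show that whenever some configuration satisfies $R$, the algorithm $\CS$ produces a satisfying configuration rather than $\mathsf{unsat}$; that configuration then lies in $C_{S,R}$, and soundness of $\SMT$ forces $\SMT(S,R,C_{S,R})$ to return a configuration.

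So, assuming there is a configuration $c^{*}$ with $S, c^{*}\Vdash R$, the key step is to argue that $\CS(S,R)\neq \mathsf{unsat}$. Inspecting Algorithm~\ref{alg:cs}, $\CS$ returns $\mathsf{unsat}$ only if some call $\mathsf{cs}(S,\varphi)$ returns $\mathsf{unsat}$, where $\varphi=\varphi_1\wedge\cdots\wedge\varphi_i$ is the conjunction of the access constraints of a subset $R'=\{T_1\Rightarrow\varphi_1,\ldots,T_i\Rightarrow\varphi_i\}$ whose associated target $T$ (line~\ref{line:compute-target}) is satisfiable. I would fix such an $R'$, pick a representative request $q$ with $q\vdash T$, and use $c^{*}$ to exhibit a witness edge set for $\mathsf{cs}(S,\varphi)$. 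Since $q$ satisfies exactly the targets of the requirements in $R'$, the hypothesis $S, c^{*}\Vdash R$ gives $S_{c^{*},q}\models\varphi_j$ for every $j\le i$, hence $S_{c^{*},q}\models\varphi$. Letting $E'$ be the set of edges of $S$ that $c^{*}$ grants under $q$, the structures $(\mathcal{R},E',\entry,L)$ and $S_{c^{*},q}$ have the same part reachable from $\entry$, so $(\mathcal{R},E',\entry,L)\models\varphi$ as well. Thus the controller-synthesis instance $(S,\varphi)$ is solvable and $\mathsf{cs}(S,\varphi)\neq\mathsf{unsat}$. As this holds for every satisfiable $R'$, no $\mathsf{cs}$ call returns $\mathsf{unsat}$, and therefore $\CS(S,R)=c$ for some configuration $c$.

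It then remains to assemble the pieces. By Theorem~\ref{thm:decompose}, the configuration $c$ returned by $\CS$ satisfies $S, c\Vdash R$, and by the definition of $C_{S,R}$ we have $c\in C_{S,R}$. Consequently $C_{S,R}$ contains a configuration satisfying $R$, so by the second part of Theorem~\ref{thm:soundness} the call $\SMT(S,R,C_{S,R})$ cannot return $\mathsf{unsat}$; being total, it returns a configuration, which is exactly the definition of $C_{S,R}$ being complete for $S$ and $R$.

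I expect the main obstacle to be the witness-construction step, i.e.\ translating the session-based semantics of $\Vdash$ (which quantifies over requests $q$ and evaluates $\varphi$ on the pruned structure $S_{c^{*},q}$) into the edge-subset semantics of the controller-synthesis problem. The delicate points are that $\varphi$ is the conjunction over exactly the requirements applicable to the chosen representative $q$, and that passing from $S_{c^{*},q}$, which deletes unreachable nodes, to $(\mathcal{R},E',\entry,L)$ preserves satisfaction of $\varphi$ at $\entry$; this uses the standard fact that CTL truth at the root depends only on the reachable subgraph, together with the standing deadlock-freeness assumption that guarantees the relevant paths are infinite.
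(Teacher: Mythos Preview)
Your proposal is correct and follows the paper's approach: combine the correctness of $\CS$ (Theorem~\ref{thm:decompose}) with the soundness of $\SMT$ (Theorem~\ref{thm:soundness}) via the definition of $C_{S,R}$ as the set of all possible outputs of $\CS$. Note that your second paragraph---arguing that $\CS(S,R)\neq\mathsf{unsat}$ whenever some configuration satisfies $R$---is exactly the contrapositive of the second implication of Theorem~\ref{thm:decompose}, so you may simply invoke that theorem rather than re-derive the witness edge set.
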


The number of configurations in $C_{S,R}$ is exponential in $|E|$ and
$|R|$ (which we prove in~\cite{tech-report}).  Hence this template, although
complete, is not useful in practice as it would overwhelm SMT solvers,
rendering~$\SMT$ ineffective.  In \secref{sec:implementation}, where we explain our implementation in detail, we describe a configuration template that works
well for synthesizing configurations for practically-relevant
examples.

We conclude this discussion by pointing out that our synthesis
algorithm can be readily used to verify whether a candidate
configuration~$c$ satisfies a set~$R$ of
global access-control requirements in a resource
structure~$S$. Namely, if the configuration template input to $\SMT$
consists only of the configuration~$c$, then $\SMT$ returns~$c$
if~$S,c\models R$; otherwise, the algorithm returns ${\sf unsat}$,
which means that the configuration~$c$ does not satisfy~$R$.

\subsection{Encoding into SMT}
We now explain Algorithm~\ref{alg:smt}'s procedure~{\sc Encode}, which translates a resource structure $S$, a requirement $R = (T\Rightarrow \varphi)$, and a configuration template~$C$, into an SMT constraint $T\Rightarrow \tau(\varphi, \entry)$. The generated constraint encodes that whenever the requirement $T\Rightarrow \varphi$ is applicable to an access request~$q$, i.e. $q\vdash T$, then $\varphi$ must be satisfied for the entry resource $\entry$ in the structure~$S_{c,q}$. Here,~$c$ is the configuration selected from the template~$C$. 
The  constraint $\tau(\varphi, \entry)$ is generated using the rewrite rules $\tau$ as defined in Figure~\ref{fig:encode}. 

Given an access constraint $\varphi$ and a resource~$r_0$, the rewrite rules $\tau$ produce an SMT constraint $\tau(\varphi, r_0)$ that encodes $S, r_0\models \varphi$; see Figure~\ref{fig:semantics}.
The rewrite rules for access constraints of the form $\sf true$, $a\in D$, $\neg \varphi$, and $\varphi_1\wedge \varphi_2$ are as expected.
The rewrite rule for access constraints of the form ${\sf EX}\varphi$
encodes that the access constraint~$\varphi$ is satisfied at~$r_0$ if there is an edge from~$r_0$ to some node~$r_1$ such that $C_{r_0, r_1}$ holds and $S, r_1\models \varphi$. In this rule, the constraint $C_{r_0, r_1}$ returns the symbolic encoding of the local policies for the edge $(r_0,r_1)$, and $\tau(\varphi, r_1)$ returns the encoding of $S, r_1\models \varphi$ as an SMT constraint.
In contrast to~${\sf EX}$, the rewrite rule for ${\sf AX}\varphi$ access constraints states that for any resource~$r_1$, such that~$(r_0,r_1)\in E$, if $C_{r_0,r_1}$ is true then
the constraint~$\tau(\varphi, r_1)$ is satisfied. 

To encode the semantics of the connectives ${\sf EU}$ (${\sf AU}$), we use the \emph{until} 
rewrite rules $\tau_{\sf U}$, which reduce an until construct ${\sf E}[\varphi_1 {\sf U}\varphi_2]$  (${\sf A}[\varphi_1 {\sf U}\varphi_2]$), a resource $r_0\in {\cal R}$, and a set of resources $X\subseteq {\cal R}$ to an SMT constraint. We use the set of resources $X$ to record for which resources the satisfaction of the until access constraint has already been encoded. This is necessary to guarantee  the reduction system's termination.
The rule for access constraints of the form ${\sf E}[\varphi_1 {\sf U}\varphi_2]$ encodes that either $S,r_0\models \varphi_2$, or $S,r_0\models \varphi_1$ and there is an edge from~$r_0$ to some node~$r_1$ such that $C_{r_0, r_1}$ holds and $S, r_1\models {\sf   E}[\varphi_1 {\sf U}\varphi_2]$. Here $\tau_{\sf U}(E[\varphi_1 {\sf U}\varphi_2], r_1, X\cup \{r_0\})$ returns the encoding of $S, r_1\models {\sf E}[\varphi_1 {\sf U}\varphi_2]$. Note that we add
$r_0$ to $X$ to ensure that no resource is revisited during~${\sf   EU}$-rewriting.
Similarly, the rule for access constraints ${\sf A}[\varphi_1 {\sf U}\varphi_2]$ encodes that either $S,r_0\models \varphi_2$, or $S,r_0\models \varphi_1$ holds, for any outgoing edge to a node~$r_1$ we have  $S, r_1\models {\sf A}[\varphi_1 {\sf U}\varphi_2]$
%, $r_0$ has at least one outgoing edge 
and it has no outgoing edges to nodes in $X$. 
\ifext
We illustrate the encoding with examples in Appendix~\ref{sec:soundness}. There, we also prove that this rewrite system always terminates, and the generated SMT encoding of access constraints is correct.
\else
We illustrate the encoding with examples in~\cite{tech-report}. There, we  also prove that this rewrite system always terminates, and the generated SMT encoding of access constraints is correct.
\fi
\section{Implementation and Evaluation}

\label{sec:eval} 

We report on an implementation of our policy synthesis algorithm, the case studies we conducted to evaluate its efficiency and scalability, and our empirical results.

\subsection{Implementation}

\label{sec:implementation}

We have implemented a synthesizer that encodes policy synthesis instances into the QF\_LIA and QF\_UA logics of SMT-LIB v2~\cite{Barrett10c} and uses the Z3 SMT solver~\cite{DeMoura:2008:ZES:1792734.1792766}.
%
%Our synthesizer takes as input a resource structure and requirements expressed in \lang.
%
%One can use standard graph visualization tools, such as Gephi~\cite{gephi}, to construct and export resource structures into our synthesizer's format.
%
Our synthesizer is configured with configuration templates of different sizes. 
The local policies defined by these configuration templates are in disjunctive normal form. 
Namely, the local policies are defined as a disjunction of clauses, each clause consisting of a conjunction of terms, where each term is either an equality constraint for non-numerical
attributes (e.g.  ${\sf role} = {\sf employee}$) or an interval constraint for numeric attributes (e.g. $t_1 \le{\sf time}\le t_2$).
We denote by~$C_k$ the configuration template that defines local policies with $k$ clauses, each consisting of $k$ terms.
Note that the local policies defined in the template $C_k$ may refer to at most $k^2$ attributes.
%The local policies defined by the template $C_3$, for example, contains all local policies that may have $3$ conjuncts, each with $3$ attributes; so up to 9 different attributes may be used

Our synthesizer implements the following procedure: 
it iteratively executes $\SMT(S, R, C_1)$, $\SMT(S, R, C_2)$, $\SMT(S, R, C_3), \ldots$, stopping with the first call to $\SMT$ that returns a satisfying configuration, and returning this configuration.
By iterating over templates increasing in size, our synthesizer generates small local policies, which is desirable for avoiding redundant attribute checks.
For the running example, for instance, our synthesizer's output includes the constraint $\operatorname{\sf correct-pin}$ only for the entrance gates' local policies, and does not include this check, e.g., for the office room's policy. 
A satisfying solution for each case study can be found in the configuration template~$C_3$. This indicates that real-world local policies have concise representations.

Note that our synthesizer may not terminate in a reasonable amount of time if no configuration satisfies the global requirements for the given resource structure. 
%Although our synthesizer's primary goal is to synthesize configurations, it can nonetheless help 
%Our synthesizer can help engineers to identify such unsatisfiable requirements. 
In our case studies, we used a simple iterative method to pinpoint such unsatisfiable requirements: we start with a singleton set of requirements, consisting of one satisfiable requirement, and iteratively extend this set by one requirement. This helped us identify a minimal set of conflicting requirements and revise problematic ones.

\subsection{Case Studies} \label{case-studies-sec}
To investigate $\mathcal{S}_\mathsf{smt}$'s efficiency and scalability, we have conducted case studies in collaboration with \kaba. We used real-world requirements and resource structures, and used our tool to synthesize policy configurations for a university building, a corporate building, and an airport terminal.
Our synthesizer and all data are publicly available\footnote{\url{http://www.infsec.ethz.ch/research/software/spctl.html}}. Below, we briefly explain
the three case studies; relevant complexity metrics are summarized in Table~\ref{tab:metrics}.

\para{University Building}
We modeled the main floor of ETH~Zurich's computer science building. This floor consists of $66$ subspaces including labs, offices, meeting rooms, and shared areas. 
The subspaces are labeled with four attributes that indicate: the research group to which a physical space is assigned, the physical space type (e.g., office, teaching room, or server room), the room number, and whether the physical spaces belongs to a secretary or a faculty member. Example requirements stipulate that a research group's PhD~students can access all offices assigned to the group except those assigned to the faculty members and secretaries. The policies are defined over eight attributes. 
%There are $127$ locks in total.

\para{Corporate Building}
We modeled an office space that consists of $20$ subspaces, including
a lobby, meeting rooms, offices, and restricted
areas such as a server room, a mail room, and an HR office.  The rooms
are connected by three corridors,
and they are labeled with attributes to mark public areas and employee-only zones.
%
%Access to these spaces is controlled by $41$ PEPs. 
Access to these spaces is controlled by locks that are equipped with smartcard readers and PIN keypads. These locks are connected to a time server. 
Example requirements are that only the postman and HR employees can access the mail room, and that between noon and 1PM employees can access their offices without entering their PIN.  The policies are defined over four attributes.  

\para{Airport Terminal}
We modeled the main terminal of a major international airport. The part of the terminal that we modeled includes subspaces such as the boarding pass control, security, and shopping areas.
We have used the actual plan of the terminal, and considered $15$
requirements, all currently enforced by the airport's access-control
system. The area is divided into $13$ subspaces, each
labeled with zone identifiers (such as check-in and
passport control). Example requirements stipulate that no passenger can
access departure areas before passing through security, passengers with
economy boarding passes cannot pass through the business/first-class
ticket-control gates, and that only airport staff can access certain
elevators.

\begin{table}
	\centering
	\normalsize
	\tabcolsep2pt
	\begin{tabular}{l|lrrr}
		\toprule
		\multicolumn{2}{l}{} & \multicolumn{1}{l}{University}& \multicolumn{1}{l}{Corporate} & \multicolumn{1}{l}{Airport}\\
		\multicolumn{2}{l}{} & \multicolumn{1}{l}{building}& \multicolumn{1}{l}{building}& \multicolumn{1}{l}{terminal}\\
		\midrule
		Complexity & Requirements& 14  & 10 & 15\\
		metrics & PEPs & 127 & 41 & 32\\
		& Subspaces & 66 & 20 & 13\\
		%Average simple-path length & 3 & 4 & 5\\
		%Longest simple-path & 6 & 8 & 9\\
		%Max out-degree & 41 & 14 & 6 \\
		%Average out-degree & 2 & 3 &  2\\
		\midrule
		Performance & Synthesis time & 10.32 & 25.30 & 1.92\\
		& Std. dev. & 0.04 & 0.15 & 0.01\\
		\bottomrule
	\end{tabular}
	\caption{Complexity metrics and policy synthesis times (in seconds) for the three cases studies}
	\label{tab:metrics}
\end{table}

\subsection{Empirical Results}
We ran all experiments on a Linux machine with a quad-code
$\operatorname{i7-4770}$ CPU, $32$GB of RAM, running Z3 SMT~v$4.4.0$.
We present two sets of results: (1) the
synthesizer's performance when used to synthesize the local policies
for the three case studies, and 
(2) the synthesizer's  scalability.

\para{Performance}
We used our tool to synthesize the local policies for the three
case studies, measuring the time taken for policy
synthesis.  We report the average synthesis time, measured over $10$
runs of the synthesizer, in the bottom two rows of Table~\ref{tab:metrics}.  The reported
synthesis time is the sum of the time taken for encoding the policy
synthesis instance into SMT constraints, the time for
solving the generated SMT constraints, and the time for iterating over
the smaller templates for which the synthesizer returns $\sf unsat$.
In all three case studies, our tool synthesizes the local
policies in less than $30$ seconds.
The standard deviation is under $0.2$ seconds.
This indicates that synthesizing local policies is practical, and can
be used for real-world systems.

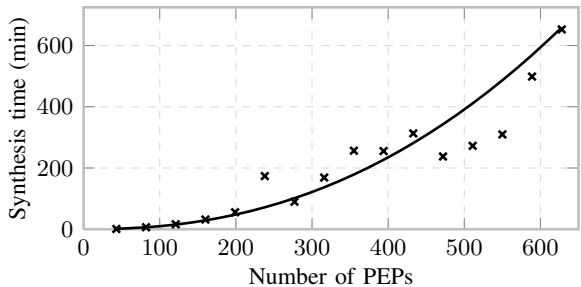
\begin{figure}
\centering
\begin{tikzpicture}
  \tikzstyle{every node}=[font=\small]
  \begin{axis}
  [
    width=0.45\textwidth,
    height=0.25\textwidth,
    grid=major,
    grid style={dashed,gray!30},
    xlabel={Number of PEPs},
    legend cell align=left,
    ylabel={Synthesis time (min)},
    x label style={at={(axis description cs:0.5,0.05)},anchor=north},
    y label style={at={(axis description cs:0.1,0.5)},anchor=south},
    draw=gray!50,
    line width=1pt,
    xmin=0,
    ymin=0,
    xmax=650,
    scaled y ticks = false,
    scaled x ticks = false,
    legend style={at={(0,1)},anchor=north west},
%    x coord trafo/.code={\pgfmathparse{\pgfmathresult/1000}},
    xticklabel = \pgfmathprintnumber{\tick},
%    yticklabel = \pgfmathprintnumber{\tick}K,
  ]
  \addplot+[only marks, mark=x, mark options={fill=black},color=black,solid,line width=1pt] table[x=peps,y=time] {data/scalability.txt}; 
  \addplot+[smooth, mark=none,color=black,solid,line width=1pt] table[x=peps,y=trend] {data/scalability.txt}; 
  \end{axis}
\end{tikzpicture}
\caption{Scaling the number of PEPs}
\label{fig:scale-res}
\end{figure}

\para{Scalability Experiments}
To investigate the scalability of our synthesis tool, we synthetically generated larger problem instances based on the corporate building case study.  Although the case study originally consisted of a single floor, we increased the number of the floors in the building.
We kept the same labeling for the newly added subspaces, so the original requirements also pertain to the newly added floors.
Based on this method, we scaled the number of PEPs up to $650$.

The time needed to synthesize local policies for different numbers of PEPs is given in Figure~\ref{fig:scale-res}. The results show that our tool can synthesize a large number of local policies in a reasonable amount of time. For example, synthesizing up to $500$ local policies  takes less than five hours. 
The tool's performance can be further improved using domain-specific heuristics for solving the resulting SMT constraints.
Nevertheless, the tool already scales to most real-world scenarios: protected physical spaces usually have less than $500$ PEPs.
\section{Related Work}

\label{sec:rw}

\para{Physical Access Control}
The Grey project was an experiment in deploying a physical
access-control system at the campus of Carnegie Mellon
University~\cite{bauer:distprove,bgr07}. As part of this project,
researchers developed formal languages for specifying policies and
credentials, and also developed techniques for detecting policy
misconfigurations~\cite{prediction:codaspy12,prediction-tissec}.
The work on credential management, such as delegation, is orthogonal
to the specification of the locks' local policies. In contrast to their
work on detecting policy misconfigurations, 
we have developed a framework to synthesize policies
that are guaranteed to 
enforces the global requirements, avoiding misconfigurations.

Several researchers have investigated SAT-based and model-checking
techniques for reasoning about physical access
control~\cite{Fitzgerald_anomalyanalysis,vmcai11}.  Similarly to our
work, these approaches model spatial constraints, and formalize global
requirements that physical access-control systems must enforce.  The
authors of~\cite{Fitzgerald_anomalyanalysis}, for instance, model
physical spaces using directed graphs and formalize global
requirements in first-order logic. Their goal is to identify undesired
denials due to blocked paths and unintended grants to restricted zones
using SAT solvers. In contrast to these verification approaches, we
develop a synthesis framework for generating correct local policies.

\para{Network Policy Synthesis}
The problems of configuring networks with access-control and routing
policies are related to the problem of constructing local policies
from global requirements.
In the network problem domain, one has an explicit resource structure defined
by the network topology and must enforce global requirements using
local rules deployed at the switches. 
Several synthesis algorithms for networks have been studied;
e.g.\ see~\cite{DBLP:conf/sp/Guttman97,firmato99,McClurg:2015:ESN:2737924.2737980,Narain2008,Zhang:2011:SDF:2147671.2147677,RahmanA13,mooly:popl15}.
The authors of \cite{DBLP:conf/sp/Guttman97} and \cite{firmato99}, for
example, propose techniques for synthesizing local firewall rules that
collectively enforce global network requirements in a given network
topology. These approaches are sufficiently expressive for formalizing
simple connectivity constraints, such as which hosts can access which
services in a network.
Similarly to our approach, recent techniques for synthesizing network
configurations, such
as~\cite{McClurg:2015:ESN:2737924.2737980,Narain2008,Zhang:2011:SDF:2147671.2147677,RahmanA13},
also leverage SAT and SMT solvers. In addition to access-control
constraints, these techniques also consider business constraints, such
as deployment cost and usability.
However, none of the above approaches for network synthesis supports
branching properties, which are necessary for specifying requirements
such as those stipulating that a fire-exit is reachable from any
office room, as well as those that instantiate our waypointing and
blocking requirement patterns; see Section~\ref{sec:patterns} for
examples.  These requirements, which can be expressed in our
framework, are central to physical access control. Existing
network policy synthesis algorithms, therefore, are not sufficiently
expressive for handling access-control requirements for physical spaces.

Policy verification has also been studied in the context of computer
networks; see e.g.~\cite{infocom1354680}. However, this line of research is not
concerned with synthesis, which is our work's main focus. We remark though that our synthesis algorithm can be readily
used for verifying the conformance of a set of local policies to
global access-control requirements; see Section~\ref{sec:algorithm}.

\para{Program Synthesis}
Program synthesis techniques, such as template-based
synthesis~\cite{armando-phd, Srivastava:2010:PVP:1706299.1706337,
  armando-fmcad13,Srivastava:2011:PIS:1993498.1993557}, reactive
program synthesis from temporal
specifications~\cite{Clarke:1981:DSS:648063.747438,jobstmann06,
  Morgenstern:2011:PSV:2032692.2032706}, and program repair
techniques~\cite{Jobstmann:2005:PRG:2153230.2153260,
  Buccafurri:1999:EMC:319103.319105}, are related to policy synthesis
for physical spaces.
Similarly to our SMT-based algorithm,
most of these synthesis frameworks also supplement the logical
specification with a template, and  exploit SMT solvers to
efficiently explore the search space defined by the template.
They cannot however express the relevant access-control requirements
we have considered, such as those  pertaining to branching
properties.
Our synthesis framework builds upon these techniques, and extends them
with support for specifications that are needed for physical spaces.

 Methods for synthesizing models of logical formulas, such as those in
 linear-temporal logic or CTL, have been extensively studied in the
 literature~\cite{Clarke:1981:DSS:648063.747438,Pnueli:1989:SRM:75277.75293,Ramadge:1987:SCC:35469.35482,pistore-tr,pistore-mbp,
   antoniotti95}. 
In Section~\ref{sec:synthesis}, we have described a policy
synthesis algorithm based on CTL controller synthesis. This algorithm however comes
at the expense of an exponential blow-up.
Therefore, existing CTL synthesis tools and algorithms cannot be readily applied to synthesize attribute-based local policies in practice.
Our efficient SMT-based algorithm addresses this practical challenge.
\section{Conclusion}
\label{sec:conc}

We have presented a framework for synthesizing locally enforceable
policies from global, system-wide access-control requirements for
physical spaces.  Its key components are (1)~a~declarative language
along with patterns for writing global requirements, (2) a model of
the physical space describing how subjects access resources, and (3)
an efficient policy synthesis algorithm for generating policies
compliant with the requirements and the spatial constraints.
Using real-world case studies, we have demonstrated that our synthesis framework is practical and  scales to systems with complex requirements and numerous policy enforcement points.

As future work, we plan to extend our policy synthesis framework with architectural constraints. Examples include \emph{optimality} constraints, which can be used to synthesize local policies that avoid re-checking attributes that have been checked by other enforcement points.
Handling such constraints is important for large-scale access-control systems in practice.
%%%
We also plan to apply our framework to synthesize locally enforceable policies in other access-control domains, such as access control in networks, which may require tailored synthesis heuristics.

\bibliographystyle{IEEEtran}
\bibliography{bib}

\ifext
\appendix
\subsection{Correctness and Complexity of the Algorithm~$\CS$}

\label{sec:decidability}

\para{Correctness}
We now prove the correctness of~${\cal S}_{\sf cs}$.

\begin{theorem}
Let $S$ be a resource structure and $R$ a set of requirements.
If ${\cal S}_{\sf cs}(S, R) = c$ then $S,c\Vdash R$. 
If ${\cal S}_{\sf cs}(S, R) = {\sf unsat}$ then there is no configuration $c$ such that $S, c\Vdash R$.
\label{thm:decompose} 
\end{theorem}

\begin{proof}
We prove the two implications by contradiction.

Assume that ${\cal S}_{\sf cs}(S, R)$ returns a configuration~$c$. 
Suppose for the sake of contradiction that $S,c\not\Vdash R$. 
Then, by definition of $\Vdash$, there is an access request $q$ and a requirement $T\Rightarrow \varphi$ such that $q\vdash T$ and $S_{c,q}\not\models \varphi$. 
Given a subset $R'\subseteq R$ of the requirements, let $T_{R'} = T_1\wedge ...\wedge T_i \wedge \neg T_{i+1} \wedge \cdots \wedge \neg T_n$, where $\{T_1\Rightarrow \varphi_1, \ldots, T_i\Rightarrow \varphi_i\} = R'$ and $\{T_{i+1}\Rightarrow \varphi_{i+1}, \ldots, T_n\Rightarrow \varphi_n\} = R\setminus R'$.
The constraint $T_{R'}$ corresponds to the target computed at line~\ref{alg:cs:target} of Algorithm~\ref{alg:cs}.
Let $R_q = \{(T\Rightarrow \varphi)\in R\mid q\vdash T\}$ be the set of all requirements in $R$ that are applicable to~$q$.
We have $q\vdash T_{R_q}$ (1). Furthermore, for any $R'\subseteq R$ where $R'\neq R_q$, we have $q\not\vdash T_{R'}$ (2).
By definition of $S_{c,q}$, $S_{c,q}$ contains an edge $e$ if $e$ is an edge of $S$ and $q\vdash c(e)$. Algorithm~\ref{alg:cs} constructs the configuration $c$ by conjoining targets $T_{R'}$, where $R'\subseteq {\cal R}$, to the local policies $c(e)$; see line~\ref{alg:cs:conf}. From (1) and (2) we conclude the following: 
First, adding $\neg T_{R_q}$ to a local policy $c(e)$ removes the edge $e$ in $S_{c,q}$ because $q\not\vdash c(e)\wedge (\neg T_{R_q})$. Second, adding $\neg T_{R'}$ to a local policy $c(e)$, where $R'\neq R_q$, does not remove the edge $e$ in $S_{c,q}$ because $q\vdash c(e) \wedge (\neg T_{R'})$ iff $q\vdash c(e)$. 
It is immediate that $S_{c,q}$ contains those edges of $S$ for which the target $\neg T_{R_q}$ is not conjoined to the local policy $c(e)$. We conclude that $S_{c,q}$ contains the edges $E' = {\sf cs}(S, \varphi_{R_q})$ (see line~\ref{alg:cs:edges} of Algorithm~\ref{alg:cs}), where $\varphi_{R_q}$ conjoins the access constraints of all requirements in $R_q$.
By definition of controller synthesis, we have $(S, E', r, L)\models \varphi_{R_q}$. Since $S_{c,q} = ({\cal R}, E', r, L)$, $S_{c,q}\models \varphi_{R_q}$.
We can now deduce that $S_{c,q}\models \varphi$ because $(T\Rightarrow\varphi)\in R_q$. But previously we deduced that $S_{c,q}\not\models \varphi$. Thus we have a contradiction, and there is no access request $q$ and requirement $T\Rightarrow \varphi$ such that $q\vdash T$ and $S_{c,q}\not\models \varphi$. Therefore, $S, c\Vdash  R$.

Assume that ${\cal S}_{\sf cs}(S, R) = {\sf unsat}$. Suppose for the sake of contradiction that there is a configuration $c$ such that $S, c\Vdash R$.
From ${\cal S}_{\sf cs}(S, R) = {\sf unsat}$, by definition of Algorithm~\ref{alg:cs}, it follows that there is a subset $R' = \{T_1\Rightarrow \varphi_1, \ldots, T_k\Rightarrow \varphi_k\}\subseteq R$ of the requirements and an access request $q$, such that $q\vdash T_1\wedge \cdots\wedge T_k$ (1) and ${\sf cs}(S, \varphi_1\wedge \cdots\wedge \varphi_k) = {\sf unsat}$ (2).
From (1), we know that all requirements in $R'$ are applicable to $q$ .
Furthermore, since $S, c\Vdash R$, it must be that $S_{c,q}\models \varphi_i$, for $1\leq i\leq k$. We get $S_{c,q}\models \varphi_1\wedge \cdots\wedge \varphi_k$.
From (2), by definition of controller synthesis, there is no resource structure $S' = ({\cal R}, E', r, L)$, with $E'\subseteq E$, such that $S'\models \varphi_1\wedge \cdots\wedge \varphi_k$. Thus we have a contradiction, and we conclude that there is no configuration $c$ such that $S,c\Vdash R$.

This concludes our proof.
\end{proof}

\para{Complexity}
The running time of algorithm ${\cal S}_{\sf cs}$ is determined by the
number of iterations of the loops, the complexity of checking
the satisfiability of the conjunction of targets (line~\ref{line:cs-sat}),
and the complexity of solving each controller synthesis instance
(line~\ref{line:cs-synth}).
The first loop is executed $|E|$ times, where $|E|$ is the number of
edges, and the second loop is executed $2^{|R|}$ times. The second
loop checks one satisfiability instance and one controller synthesis
instance. The complexity of checking satisfiability is in ${\cal
	O}(2^{k\cdot |A|})$ where $k = |D_{\sf max}|$ for the largest set
$D_{\sf max}$ of values that appears in the constraint $T$, and $|A|$ is the
number of attributes.
%Let $R$ be the set of requirements and $d$ be the size of the largest access constraints that appears in the requirements.
%Let $d$ be the size of the largest access constraint that appears in the requirme
Solving a controller synthesis instance requires checking $({\cal R}, E', \entry, L) \models \varphi$ at most $2^{|E|}$ times, where $E'\subseteq E$ and $\varphi$ is a conjunction of access constraints. 
The problem $({\cal R}, E', \entry, L) \models \varphi$ can be decided using the model checking algorithm for CTL based on labeling, which is in ${\cal O}( |\varphi| \cdot (|{\cal R}|+ |E'|))$, where  $|\varphi|$ is the size of the access constraint $\varphi$~\cite{huth11}. The size of the largest access constraint given as input to $\sf cs$ is in ${\cal O}(|d|\cdot |R|)$, where $d$ is the largest access constraint that appears in the requirements $R$.
The running time of ${\cal S}_{\sf cs}$ is therefore ${\cal O}(2^{|R|}\cdot (2^{k\cdot |A|}  + 2^{|E|}\cdot |R|\cdot d \cdot (|{\cal R}|+ |E|)))$.

\subsection{SMT Encoding}

\label{sec:soundness}

\para{Example}
We illustrate the SMT encoding of an exists-until and an always-until access constraint in Figure~\ref{fig:until-constraints}. The SMT encoding of the access constraint ${\sf E}[(\neg {\sf sec\_zone}){\sf U}({\sf id}={\sf bur})]$ for the resource~$\sf out$ formalizes that the two PEPs $({\sf out, cor})$ and $({\sf cor, bur})$ grant access or the PEP $({\sf out, bur})$ grants access. This guarantees the existence of a path that satisfies the access constraint.
The SMT encoding of ${\sf A}[(\neg {\sf sec\_zone}){\sf U}({\sf id}={\sf bur})]$ for the resource $\sf out$ formalizes that the always-until constraint is satisfied along any path that starts from the resource $\sf out$. Since any path that start with $\sf (out, bur, \ldots)$ satisfies the access constraint, the SMT constraint imposes no constrains on the PEP $(\sf out, bur)$. However, not all paths that start with $(\sf out, cor, \ldots)$ satisfy the access constraint. Concretely, the infinite path $(\sf out, cor, out, cor, \ldots)$ violates the access constraint. The SMT constraint therefore formalizes that if there are paths starting with $(\sf out, cor, \ldots)$, i.e. the PEP $(\sf out, cor)$ grants access, then the PEP $\sf (cor, out)$ denies access. This guarantees that the path violating the access constraint is not present in the resulting resource structure. Note that, since we consider only deadlock-free resource structures, the absence of the edge $\sf (cor, out)$ guarantees that the resulting resource structure has the edge $\sf (cor, bur)$, and therefore all paths starting with $(\sf out, cor, \ldots)$ continue along resource $\sf bur$.

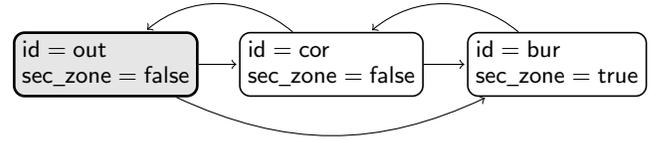
\begin{figure}
\centering
\begin{tikzpicture}[->,shorten >=1pt,auto]
\small
  \node[entry, minimum height=12pt, minimum width=12pt] (out) at (0,0) 
  {\begin{tabular}{l}
  	${\sf id} = {\sf out}$\\
  	${\sf sec\_zone} = {\sf false}$
  \end{tabular}};
  \node[resource, minimum height=12pt, minimum width=12pt] (cor) at ($(out) + (3, 0)$)
  {\begin{tabular}{l}
  	${\sf id} = {\sf cor}$\\
  	${\sf sec\_zone} = {\sf false}$
  \end{tabular}};
  \node[resource, minimum height=12pt, minimum width=12pt] (bur) at ($(cor) + (3, 0)$)
  {\begin{tabular}{l}
  	${\sf id} = {\sf bur}$\\
  	${\sf sec\_zone} = {\sf true}$
  \end{tabular}};  
  \draw[->] (out)-- (cor);
  \draw[->] (cor)-- (bur);
  \draw[->] (out) to [bend right=25] (bur);
  \draw[->] (cor) to [bend right=40] (out);
  \draw[->] (bur) to [bend right=40] (cor);
\end{tikzpicture}
\[
\begin{array}{l}
\tau_{\sf U}({\sf E}[(\neg {\sf sec\_zone}){\sf U}({\sf id} = {\sf bur})], {\sf out}, \emptyset) \hookrightarrow\\[2pt]
%%%%
\multicolumn{1}{r}{\hspace{130pt}(C_{\sf out,cor}\wedge C_{\sf cor,bur}) \vee C_{\sf out,bur}}\\[6pt]
%%%%
\tau_{\sf U}({\sf A}[(\neg {\sf sec\_zone}){\sf U}({\sf id}={\sf bur})], {\sf out}, \emptyset) \hookrightarrow\\[2pt]
%%%%
\multicolumn{1}{r}{(C_{\sf out,cor}\Rightarrow (
%C_{\sf cor,bur} \wedge 
\neg C_{\sf cor,out}))
% \wedge (C_{\sf out,bur} \vee C_{\sf cor,bur})
}
\end{array}
\]
\caption{Encoding exists-until and always-until access constraints using SMT constraints.}
\label{fig:until-constraints}
\end{figure}

%In this section, we prove Theorem~\ref{thm:soundness}. Before proving the theorem, we prove several relevant lemmas pertaining to the SMT encoding.

%We first show that the rewrite system terminates.

\para{Termination}
We first prove that the rewrite rules given in Figure~\ref{fig:encode} terminate.

\begin{theorem}
Let $S = ({\cal R}, E, r, L)$ be a resource structure.
For any resource~$r_0\in {\cal R}$ and access constraint $\varphi$, the rewrite function $\tau(\varphi, r_0)$ terminates.
\label{lemma:termination}
\end{theorem}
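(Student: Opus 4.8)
The plan is to exhibit a well-founded measure that strictly decreases with every application of a rewrite rule. Since $\tau$ and $\tau_{\sf U}$ are mutually recursive, I would assign a single measure to both kinds of calls and compare them lexicographically. Concretely, to a call $\tau(\varphi, r_0)$ I assign the triple $(|\varphi|, 1, |{\cal R}|)$, and to a call $\tau_{\sf U}(\psi, r_0, X)$ — where $\psi$ is an until formula — the triple $(|\psi|, 0, |{\cal R}| - |X|)$; here $|\varphi|$ denotes the structural size of the access constraint, and the triples are ordered lexicographically in $\mathbb{N}^3$, which is well-founded. The second coordinate acts as a flag distinguishing $\tau$-calls from $\tau_{\sf U}$-calls.

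First I would dispatch the easy $\tau$-rules. For $\neg\varphi$, $\varphi_1\wedge\varphi_2$, ${\sf EX}\varphi$, and ${\sf AX}\varphi$, every recursive call is a $\tau$-call on a strict subformula, so the first component $|\varphi|$ strictly drops. The two rules translating until formulas, namely $\tau({\sf E}[\varphi_1{\sf U}\varphi_2], r_0) \hookrightarrow \tau_{\sf U}({\sf E}[\varphi_1{\sf U}\varphi_2], r_0, \emptyset)$ and its ${\sf A}$ counterpart, keep the first component fixed but drop the second from $1$ to $0$; this flag is precisely what prevents the $\tau \to \tau_{\sf U}$ step from stalling, since both sides carry the same formula.

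The crux is the self-recursion of $\tau_{\sf U}$, where the until formula $\psi$ does not shrink. Here the subcalls $\tau(\varphi_1, r_0)$ and $\tau(\varphi_2, r_0)$ are on strict subformulas of $\psi$, hence strictly decrease the first component; and the recursive call $\tau_{\sf U}(\psi, r_1, X\cup\{r_0\})$ keeps the first two components fixed, so the third component $|{\cal R}| - |X|$ must strictly decrease. The main obstacle — and the key lemma I would isolate — is the invariant that in every reachable call $\tau_{\sf U}(\psi, r_0, X)$ one has $r_0\notin X$. Granting this, $|X\cup\{r_0\}| = |X| + 1$, so $|{\cal R}| - |X|$ drops by one; since $X\subseteq{\cal R}$ and ${\cal R}$ is finite, any chain of $\tau_{\sf U}$-steps has length at most $|{\cal R}|$.

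Finally I would prove the invariant by induction along the recursion. It holds initially because $\tau_{\sf U}$ is entered only with $X = \emptyset$. For the inductive step, the recursive call fixes the current resource to some $r_1\in E(r_0)\setminus X$, so $r_1\notin X$, and it remains to rule out $r_1 = r_0$: this follows from the irreflexivity of the edge relation $E$, which guarantees that $r_1\in E(r_0)$ implies $r_1\neq r_0$, whence $r_1\notin X\cup\{r_0\}$. The remaining subterms produced by the $\tau_{\sf U}$ rules — the conjuncts $\neg C_{r_0,r_1}$ for $r_1\in E(r_0)\cap X$ in the always-until case — are non-recursive leaves and require no further argument. Since each rewrite step strictly lowers the lexicographic measure, the reduction terminates, and in particular $\tau(\varphi, r_0)$ terminates for every $r_0$ and $\varphi$.
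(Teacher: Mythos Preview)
Your proof is correct and follows essentially the same strategy as the paper: primary induction on the size of the access constraint, with a secondary descent on $|{\cal R}|-|X|$ to handle the $\tau_{\sf U}$ self-recursion. The paper packages this as a nested induction (outer on formula length, inner descending on $|X|$) rather than a single lexicographic measure, but the content is the same; your explicit isolation of the invariant $r_0\notin X$ and the appeal to irreflexivity of~$E$ are details the paper leaves implicit.
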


\begin{proof}
The proof proceeds by induction on the length of the access constrain~$\varphi$. Formally,
we define the {\em length} of an access constraint $\varphi$, denoted by $l(\varphi)$, as 
\[
\begin{array}{rcl}
l({\sf true}) & = & 1\\
l(a\in D) & = & 1\\
l(\neg \varphi) & = & 1 + l(\varphi)\\
l({\sf EX}\varphi) & = & 1 + l(\varphi)\\
l({\sf AX}\varphi) & = & 1 + l(\varphi)\\
l(\varphi_1\wedge \varphi_2) & = & 1 + {\sf max}(l(\varphi_1), l(\varphi_2))\\
l({\sf E}[\varphi_1 {\sf U}\varphi_2]) & = & 1 + {\sf max}(l(\varphi_1), l(\varphi_2))\\
l({\sf A}[\varphi_1 {\sf U}\varphi_2]) & = & 1 + {\sf max}(l(\varphi_1), l(\varphi_2))
\end{array}
\]
where ${\sf max}(n_1, n_2)$ returns $n_1$ if $n_1\geq n_2$, otherwise it returns~$n_2$. Note that $l(\varphi)\geq 1$ for any access constraint~$\varphi$.

\para{Base Case}
For the base case, $l(\varphi) = 1$, the access constraint is of the form $\sf true$ or $a\in D$. The rewrite function $\tau$ terminates in one step. 

\para{Inductive Step}
Assume that $\tau(\varphi, r_0)$ terminates for any access constraint $\varphi$ of length $l(\varphi)\leq k$~(H1). We prove that $\tau(\varphi, r_0)$ terminates for any access constraint of length $l(\varphi) = k+1$.

\begin{compactitem}
\item For the cases where the access constraint~$\varphi$ is of the form $\neg \varphi_1$, ${\sf EX}\varphi_1$, ${\sf AX}\varphi_1$, the rewrite function $\tau(\varphi, r_0)$ calls $\tau(\varphi_1, r_0)$. By induction, $\tau(\varphi_1, r_0)$ terminates because $l(\varphi_1) = k$. 

\item The case where $\varphi = \varphi_1\wedge \varphi_2$ also terminates because $l(\varphi_1)\leq k$ and $l(\varphi_2)\leq k$.

\item For the cases where $\varphi$ is of the form ${\sf E}[\varphi_1 {\sf U} \varphi_2]$ or ${\sf A}[\varphi_1 {\sf U} \varphi_2]$, we need to show that $\tau_{\sf U}(\varphi, r_0, \emptyset)$ terminates. We prove that $\tau_{\sf U}(\varphi, r_0, X)$ terminates for any set $X\subseteq {\cal R}$ by descending induction on the size of the set $X$.
For the base case, we have $|X| = |{\cal R}|$. Then, $\tau_{\sf U}(\varphi, r_0, {\cal R})$ 
calls $\tau(\varphi_1, r_0)$ and $\tau(\varphi_2, r_0)$. By our inductive hypothesis~(H1), both $\tau(\varphi_1, r_0)$ and $\tau(\varphi_2, r_0)$ terminate since $l(\varphi_1)\leq k$ and $l(\varphi_2)\leq k$.
For the inductive step, assume that $\tau_{\sf U}({\sf E}[\varphi_1 {\sf U} \varphi_2]), r_0, X)$ terminates for any $X\subseteq {\cal R}$ of size $k\leq |X|\leq |{\cal R}|$~(H2). Consider a set $X'\subseteq {\cal R}$ of size $|X'| = k-1$. Then, $\tau_{\sf U}(\varphi, r_0, X)$ calls the rewrite functions $\tau(\varphi_1, r_0)$, $\tau(\varphi_2, r_0)$, and $\tau_{\sf U}(\varphi, r_1, X'\cup\{r_0\})$, for $r_1\in E(r_0)\setminus X$. The rewrite function $\tau(\varphi_1, r_0)$, $\tau(\varphi_2, r_0)$ terminate by the inductive hypothesis (H1). By the inductive hypothesis~(H2), the rewrite function $\tau_{\sf U}(\varphi, r_1, X'\cup\{r_0\})$ terminates because $|X\cup \{r_0\}| = k$.
\end{compactitem}

This completes our proof.
\end{proof}

\para{Correctness}
We now prove that the correctness of our SMT-based policy synthesis algorithm. 
We start with several definitions. Our definitions are similar to those used to describe the decision procedure for CTL satisfiability given in~\cite{Emerson:1991:TML:114891.114907}.
Let $\varphi_1$ and $\varphi_2$ be two access constraints and $S = ({\cal R}, E, r, L)$ be a resource structure. 
We assume that $S$ does not contain deadlock resources, i.e.\ for any resource $r_0\in {\cal R}$, the set $E(r_0) = \{r_1\in {\cal R}\mid (r_0, r_1)\in E\}$ is nonempty.
We call access constraints of the form ${\sf A}[\varphi_1{\sf U}\varphi_2]$ and ${\sf E}[\varphi_1{\sf U}\varphi_2]$ eventuality constraints.
We first define the derivation of a rooted directed graph from $S$ for a given access constraint~$\varphi_2$ and root node~$r_0\in {\cal R}$. We call this graph an {\em eventuality graph}. We then give two conditions
over such eventuality graphs. The first condition is satisfied iff $S, r_0\models {\sf A}[\varphi_1{\sf U}\varphi_2]$, while the second one is satisfied iff $S, r_0\models {\sf E}[\varphi_1{\sf U}\varphi_2]$.

We define the eventuality graph $G(S, r_0, \varphi_2)$ as the rooted directed graph obtained by taking the node $r_0$ and all nodes and edges along all paths emanating from $r_0$ up to and including the first node $r_1$ such that $S, r_1\models \varphi_2$; if there is no such node $r_1$ along a path, then all nodes and edges along the path are included in $G(S, r_0, \varphi_2)$. 
We call a node of $G(S, r_0, \varphi_2)$ an {\em interior node} if it has successors; otherwise, we call it a {\em frontier node}.

We now define the two conditions.
We say that an eventuality graph $G(S, r_0, \varphi_2)$ {\em fulfills ${\sf A}[\varphi_1 {\sf U}\varphi_2]$} if 
\begin{compactenum}
\item the graph is acyclic,
\item for any of its interior nodes $r_1$ we have $S, r_1\models \varphi_1$, and
\item for any of its frontier nodes $r_2$ we have $S, r_2\models \varphi_2$.
\end{compactenum}
Note that for resource structures without deadlock resources, (1) implies (3). 
We say that an eventuality graph $G(S, r_0, \varphi_2)$ {\em fulfills ${\sf E}[\varphi_1 {\sf U}\varphi_2]$} if 
\begin{compactenum}
\item the graph contains a frontier node $r_2$ such that $S, r_2\models \varphi_2$, and 
\item there is a path from $r_0$ to this frontier node $r_2$ such that for any interior node $r_1$ along the path we have $S, r_1\models\varphi_1$.
\end{compactenum}
From the CTL satisfiability decision procedure of~\cite{Emerson:1991:TML:114891.114907}, it follows that $S, r_0\models {\sf A}[\varphi_1 {\sf U}\varphi_2]$ iff $G(S, r_0, \varphi_2)$ fulfills ${\sf A}[\varphi_1 {\sf U}\varphi_2]$, and $S, r_0\models {\sf E}[\varphi_1 {\sf U}\varphi_2]$ iff $G(S, r_0, \varphi_2)$ fulfills ${\sf E}[\varphi_1 {\sf U}\varphi_2]$.

To prove the correctness of our SMT-based synthesis algorithm, we first prove that $\tau$ correctly encodes access constraint into SMT constraints. Towards this end, Theorem~\ref{lemma:smt-encoding} establishes that the SMT encoding is correct for any access constraint and any singleton configuration template~$C=\{c\}$, i.e. a template consisting of one configuration. 
To prove this theorem, we give two lemmas (Lemma~\ref{lemma:au} and Lemma~\ref{lemma:eu}), which show that the rewrite function~$\tau_{\sf U}$ correctly encodes eventuality access constraints. 
Afterwards, with Lemma~\ref{lemma:smt-correct-reqs} we lift the correctness of the access constraints' encoding to requirements.
Finally, we restate and prove Theorem~\ref{thm:soundness}.

\begin{theorem}
Let $S = ({\cal R}, E, r, S)$ be a resource structure. For any configuration~$c$ for~$S$, resource $r_0\in {\cal R}$, access request $q\in{\cal Q}$, and access constraint $\varphi$, 
we have 
\[ S_{c,q}, r_0\models \varphi\ \text{iff}\ q\vdash \tau(\varphi, r_0).\]
\label{lemma:smt-encoding}
\end{theorem}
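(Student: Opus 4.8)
The plan is to prove the biconditional by structural induction on the access constraint~$\varphi$, using the length measure~$l(\varphi)$ from the proof of Theorem~\ref{lemma:termination} as the induction parameter. Since the template here is the singleton~$\{c\}$, the symbolic encoding~$C_{r_0,r_1}$ appearing in the rewrite rules of Figure~\ref{fig:encode} is simply the constraint defining the local policy~$c((r_0,r_1))$, so that $q\vdash C_{r_0,r_1}$ holds exactly when the edge~$(r_0,r_1)$ survives the edge-pruning step in the construction of~$S_{c,q}$. I would first record a reduction that removes the node-removal step from consideration: letting~$S'$ denote~$S$ with only the $q$-denied edges deleted (all nodes retained), the subgraph of~$S_{c,q}$ reachable from a given~$r_0$ coincides with that of~$S'$ whenever~$r_0$ is retained in~$S_{c,q}$, and the relation~$\models$ at~$r_0$ depends only on the paths emanating from~$r_0$. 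Hence it suffices to prove $S', r_0\models\varphi \iff q\vdash\tau(\varphi,r_0)$ for every $r_0\in\mathcal{R}$, which I carry out by induction. Throughout I rely on the standing assumption that the configurations considered are deadlock-free, so that every resource of~$S'$ has a successor and the infinite paths in~$S'(r_0)$ required by the semantics of Figure~\ref{fig:semantics} indeed exist.

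The base cases and the propositional and next-modality cases are routine. For $\varphi={\sf true}$ both sides hold; for $\varphi = a\in D$ the equivalence follows because $L$ is unchanged in~$S_{c,q}$, so $S',r_0\models a\in D$ iff $L(r_0)(a)\in D$, which is precisely the side condition selecting ${\sf true}$ over ${\sf false}$ in the rewrite rule. The cases $\neg\varphi_1$ and $\varphi_1\wedge\varphi_2$ follow immediately from the induction hypothesis applied at the same resource~$r_0$. For ${\sf EX}\varphi_1$, the rewrite rule produces $\exists r_1\in E(r_0).\,(C_{r_0,r_1}\wedge\tau(\varphi_1,r_1))$; here $q$ satisfies the disjunct for~$r_1$ exactly when the edge $(r_0,r_1)$ is present in~$S'$ and, by the induction hypothesis, $S',r_1\models\varphi_1$, which matches the clause $\exists(r_0,r_1,\cdots)\in S'(r_0).\,S',r_1\models\varphi_1$ of the semantics. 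The case ${\sf AX}\varphi_1$ is dual, replacing the existential by a universal and the conjunction by an implication.

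The only substantive cases are the two eventuality constraints, and these are where I expect the real work to lie. I would isolate them as Lemma~\ref{lemma:au} and Lemma~\ref{lemma:eu}, each asserting that $\tau_{\sf U}$ correctly encodes the corresponding operator. Their proofs go through the eventuality graph $G(S', r_0,\varphi_2)$ and the fulfillment conditions already introduced: by the cited consequence of the CTL decision procedure of~\cite{Emerson:1991:TML:114891.114907}, $S',r_0\models{\sf A}[\varphi_1{\sf U}\varphi_2]$ iff $G(S',r_0,\varphi_2)$ fulfills ${\sf A}[\varphi_1{\sf U}\varphi_2]$, and likewise for the existential form. The key observation is that the accumulator set~$X$ forces $\tau_{\sf U}$ to unfold only along acyclic prefixes: each recursive call adds the current resource to~$X$ and restricts further exploration to $E(r_0)\setminus X$, so the unfolding explores exactly the simple paths of the eventuality graph and therefore terminates (as established in Theorem~\ref{lemma:termination}). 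For the universal case I would argue that the extra conjunct $\forall r_1\in E(r_0)\cap X.\,\neg C_{r_0,r_1}$ is precisely what enforces the acyclicity requirement (condition~1 of fulfillment): it forbids any surviving edge back to an already-visited resource, ruling out the looping paths that would otherwise never reach a $\varphi_2$-resource. For the existential case no such conjunct is needed, since a single witnessing simple path suffices. Matching the recursive structure of $\tau_{\sf U}$ against the two fulfillment conditions then yields the equivalences, completing the induction.

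I expect the universal until case to be the principal obstacle, specifically the justification that the finite, $X$-bounded unfolding of $\tau_{\sf U}$ faithfully captures a property quantifying over all infinite paths. The crux is showing that forbidding back-edges into~$X$ is both necessary and sufficient: necessary because any retained back-edge yields an infinite path that never satisfies~$\varphi_2$, and sufficient because, in a deadlock-free structure, an acyclic eventuality graph whose frontier nodes all satisfy~$\varphi_2$ guarantees that every path reaches such a frontier node. Once Lemma~\ref{lemma:au} and Lemma~\ref{lemma:eu} are in place, Theorem~\ref{lemma:smt-encoding} follows by assembling the cases, and the subsequent lift to requirements (Lemma~\ref{lemma:smt-correct-reqs}) and to the soundness statement of Theorem~\ref{thm:soundness} for arbitrary templates is routine, using that instantiating the control variables selects a single configuration and reduces to the singleton-template case handled here.
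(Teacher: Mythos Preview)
Your proposal is correct and follows essentially the same approach as the paper: structural induction on~$\varphi$ with routine handling of the Boolean and next-modality cases, and the two until cases discharged by invoking Lemmas~\ref{lemma:au} and~\ref{lemma:eu}, whose proofs proceed via the eventuality-graph characterization and an inner induction on the accumulator set~$X$. Your explicit reduction to the edge-pruned structure~$S'$ (retaining all nodes) is a minor technical sharpening that the paper leaves implicit, but otherwise the decomposition, the role of the back-edge conjunct in enforcing acyclicity for~${\sf AU}$, and the appeal to deadlock-freeness all match the paper's argument.
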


\begin{proof}
%
%The configuration template $C$ contains only $\{c\}$ and thus the formula $\tau(\varphi, r_0)$ does not contain any control variables. Therefore, $\tau(\varphi, r_0)$ is defined over attribute variables, and $q$ is a valuation of all variables in $\tau(\varphi, r_0)$.
%
The proof proceeds by induction on the derivation of $\tau(\varphi, r_0)$. 

\begin{itemize}
%
% TRUE
%
\item For the case $\varphi = {\sf true}$, we have $\tau(\varphi, r_0) = {\sf true}$. We get $S_{c,q}, r_0\models \varphi$ and $q\vdash \tau(\varphi, r_0)$.
%
% MEMBERSHIP
%
\item For the case $\varphi = (a\in D)$, we have $\tau(\varphi, r_0) = {\sf true}$ if  $L(r_0)(a)\in D$, and $\tau(\varphi, r_0) = {\sf false}$ if $L(r_0)(a)\not\in D$. Recall that $S_{c,q}, r_0\models (a\in D)$ iff $L(r_0)(a)\in D$. It is immediate that $S_{c,q}, r_0\models \varphi$ iff $q\vdash \tau(\varphi, r_0)$.
%
% NEGATION
%
\item For the case $\varphi = \neg \varphi'$, we have $\tau(\varphi, r_0) = \neg \tau(\varphi', r_0)$. 
\begin{itemize}
\item[$\Rightarrow$:] Assume $S_{c,q}, r_0\models \varphi$. We get $S_{c,q}, r_0\not\models \varphi'$. By induction, $q\not\vdash \tau(\varphi', r_0)$. Therefore $q\vdash \tau(\varphi, r_0)$.
\item[$\Leftarrow$:] Assume $q\vdash \tau(\varphi, r_0)$. We get $q\not\vdash \tau(\varphi', r_0)$. By induction, $S_{c,q}\not\models \varphi'$. Therefore $S_{c,q}\models \varphi$.
\end{itemize}
%
% CONJUNCTION
%
\item For the case $\varphi = \varphi_1\wedge \varphi_2$, we have $\tau(\varphi, r_0) = \tau(\varphi_1, r_0)\wedge \tau(\varphi_2, r_0)$. 
\begin{itemize}
\item[$\Rightarrow$:] Assume $S_{c,q}, r_0\models \varphi$. Therefore $S_{c,q}, r_0\models \varphi_1$ and $S_{c,q}, r_0\models\varphi_2$. By induction, $q\vdash \tau(\varphi_1, r_0)$ and $q\vdash \tau(\varphi_2, r_0)$, and therefore $q\vdash \tau(\varphi, r_0)$.
\item[$\Leftarrow$:] Assume $q\vdash \tau(\varphi, r_0)$. Then $q\vdash \tau(\varphi_1, r_0)$ and $q\vdash \tau(\varphi_2, r_0)$. By induction, $S_{c,q}, r_0\models \varphi_1$ and $S_{c,q}, r_0\models \varphi_2$, and therefore $S_{c,q}, r_0\models \varphi$.
\end{itemize}
%
% EX
%
\item For the case $\varphi = {\sf EX}\varphi'$, we have $\tau(\varphi, r_0) = \exists{r_1\in E(r_0)}.\ (C_{r_0, r_1} \wedge \tau(\varphi', r_1))$. 
\begin{itemize}
\item[$\Rightarrow$:] Assume $S_{c,q}, r_0\models \varphi$. By definition of $S_{c,q}$, there is an edge $(r_0, r_1)$ in $E$ such that $q\vdash c((r_0, r_1))$ (1) and $S_{c,q}, r_1\models \varphi'$ (2). 
Since $C = \{c\}$, we have $C_{r_0, r_1} = c((r_0, r_1))$. From (1), we thus get $q\vdash C_{r_0, r_1}$.
From (2), by induction, we get $q\vdash \tau(\varphi', r_1)$. It follows that $q\vdash \tau(\varphi, r_0)$.
\item[$\Leftarrow$:] Assume $q\vdash \tau(\varphi, r_0)$. There is an edge $r_1\in E(r_0)$ such that $q\vdash C_{r_0, r_1}$ (1) and $q\vdash \tau(\varphi', r_1)$ (2). From (1), we get $q\vdash c((r_0, r_1))$, and thus there is an edge $(r_0, r_1)$ also in $S_{c,q}$. From (2), by induction, we get $S, r_1\models \varphi'$. Therefore, $S_{c,q}, r_0\models \varphi$.
\end{itemize}
%
% AX
%
\item For the case $\varphi = {\sf AX}\varphi'$, we have $\tau(\varphi, r_0) = \forall{r_1\in E(r_0)}.\big( C_{r_0, r_1}\Rightarrow \tau(\varphi', r_1)\big)$. 
\begin{itemize}
\item[$\Rightarrow$:] Assume $S_{c,q}, r_0\models \varphi$. Then, for any edge $(r_0, r_1)$ of $S_{c,q}$ we have $S_{c,q}, r_1\models \varphi'$. 
Consider an edge $(r_0, r_1)\in E$ such that $q\vdash C_{r_0, r_1}$. From $q\vdash C_{r_0, r_1}$, we know that $(r_0, r_1)$ is also an edge in $S_{c,q}$. Therefore, $S_{c,q}, r_1\models \varphi'$. By induction, $q\vdash \tau(\varphi', r_1)$. We get $q\vdash \tau(\varphi, r_0)$.
\item[$\Leftarrow$:] Assume $q\vdash \tau(\varphi, r_0)$. Then for any edge $(r_0, r_1)\in E$, $q\vdash C_{r_0, r_1}$ implies $q\vdash \tau(\varphi', r_1)$. Consider an edge $(r_0, r_1)$ of $S_{c,q}$. We know that $q\vdash C_{r_0, r_1}$, and thus $q\vdash \tau(\varphi', r_1)$. By induction, $S_{c,q}, r_1\models \varphi'$. Therefore $S_{c,q}, r_0\models \varphi$.
\end{itemize}

% EU

\item For the case $\varphi = {\sf E}[\varphi_1 {\sf U}\varphi_2]$, we have $\tau(\varphi, r_0) = \tau_{\sf U}(\varphi, r_0, \emptyset)$. 
By induction, for any resource $r_1\in {\cal R}$ and any access request $q\in {\cal Q}$ we have
\[
\bigwedge_{i\in \{1,2\}} S_{c,q}, r_1\models \varphi_i\ \text{iff}\ q\vdash \tau(\varphi_i, r_1).
\]
% we know that for any $i\in \{1,2\}$ and $r_1\in {\cal R}$ we have $S_{c,q}, r_1\models \varphi_i$ iff $q\vdash \tau(\varphi_i, r_1)$.
%
By Lemma~\ref{lemma:eu}, we get $S_{c,q}, r_0\models {\sf E}[\varphi_1 {\sf U}\varphi_2]$ iff $q\vdash \tau_{\sf U}(\varphi, r_0, \emptyset)$.

\item For case $\varphi = {\sf A}[\varphi_1 {\sf U}\varphi_2]$, have $\tau(\varphi, r_0) = \tau_{\sf U}(\varphi, r_0, \emptyset)$. 
By induction, for any resource $r_1\in {\cal R}$ and any access request $q\in {\cal Q}$ we have
\[
\bigwedge_{i\in \{1,2\}} S_{c,q}, r_1\models \varphi_i\ \text{iff}\ q\vdash \tau(\varphi_i, r_1).
\]
By Lemma~\ref{lemma:au}, we get $S_{c,q}, r_0\models {\sf A}[\varphi_1 {\sf U}\varphi_2]$ iff $q\vdash \tau_{\sf U}(\varphi, r_0, \emptyset)$.

\end{itemize}

This concludes our proof.
\end{proof}

\begin{lemma}
Let $S = ({\cal R}, E, r, S)$ be a resource structure, $\varphi_1$ and $\varphi_2$ be two access constraints, and $C = \{c\}$ be a configuration template. 
If for any resource $r_1\in {\cal R}$ and any access request $q\in {\cal Q}$ we have
\begin{equation}\tag{A1}
\bigwedge_{i\in \{1,2\}} S_{c,q}, r_1\models \varphi_i\ \text{iff}\ q\vdash \tau(\varphi_i, r_1), \label{tau-assumption}
\end{equation}
then for any resource $r_0\in {\cal R}$ we have
\begin{align*}
S_{c,q}, r_0\models {\sf A}[\varphi_1 {\sf U}\varphi_2]\ \text{iff}\ q\vdash \tau_{\sf U}({\sf A}[\varphi_1 {\sf U}\varphi_2], r_0, \emptyset).
\end{align*}
\label{lemma:au}
\end{lemma}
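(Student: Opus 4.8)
The plan is to route the argument through the eventuality-graph characterization recalled above, reducing the correctness of the recursive rewrite $\tau_{\sf U}$ to the three fulfillment conditions for $G(S_{c,q}, r_0, \varphi_2)$. First I would record two elementary observations that make the recursion transparent. Since the template is the singleton $C = \{c\}$, we have $q \vdash C_{r_0,r_1}$ exactly when $(r_0,r_1)$ is an edge of $S_{c,q}$, so the quantifiers ranging over accessible successors in the rewrite rule range precisely over the successors of $r_0$ in $S_{c,q}$. Moreover, by the assumption~\eqref{tau-assumption}, $q \vdash \tau(\varphi_i, r_0)$ iff $S_{c,q}, r_0 \models \varphi_i$ for $i \in \{1,2\}$, so the two atomic tests in the rule faithfully decide whether $r_0$ is a frontier ($\varphi_2$) or an interior ($\varphi_1$) node.

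The core step is a generalized claim proved by descending induction on $|X|$, mirroring the induction used for termination: for every $r_0 \in {\cal R}\setminus X$, $q \vdash \tau_{\sf U}({\sf A}[\varphi_1 {\sf U} \varphi_2], r_0, X)$ holds iff the exploration of $S_{c,q}$ from $r_0$ that forbids re-entry into the ancestor set $X$ (i) labels every interior node it reaches with $\varphi_1$, (ii) reaches a $\varphi_2$-node on every continuation, and (iii) contains no $S_{c,q}$-edge from a reached node back into $X$. The base case $|X| = |{\cal R}|$ has $E(r_0)\setminus X = \emptyset$ and is discharged directly from the two observations above. In the inductive step, the disjunct $\tau(\varphi_2, r_0)$ handles frontier nodes, while the conjunction $\tau(\varphi_1, r_0) \wedge (\forall r_1 \in E(r_0)\setminus X.\ C_{r_0,r_1} \Rightarrow \tau_{\sf U}(\cdots, r_1, X \cup \{r_0\})) \wedge (\forall r_1 \in E(r_0)\cap X.\ \neg C_{r_0,r_1})$ splits into: $r_0$ is interior and satisfies $\varphi_1$; every accessible non-ancestor successor satisfies the claim with the enlarged ancestor set (apply the induction hypothesis, since $|X \cup \{r_0\}| = |X|+1$); and no accessible edge runs from $r_0$ back into $X$, which is exactly condition (iii) at $r_0$.

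Instantiating the generalized claim at $X = \emptyset$ then yields the lemma once conditions (i)--(iii) are matched to fulfillment of ${\sf A}[\varphi_1 {\sf U} \varphi_2]$ by $G(S_{c,q}, r_0, \varphi_2)$. Condition (i) is the interior-$\varphi_1$ requirement; condition (ii) together with deadlock-freeness of $S_{c,q}$ gives the frontier-$\varphi_2$ requirement; and the conjunction of all instances of condition (iii) accumulated along the recursion is equivalent to acyclicity of $G(S_{c,q}, r_0, \varphi_2)$. Having matched all three conditions, the stated characterization ($S_{c,q}, r_0 \models {\sf A}[\varphi_1 {\sf U} \varphi_2]$ iff $G(S_{c,q}, r_0, \varphi_2)$ fulfills the formula) closes the argument.

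I expect the one genuinely delicate point to be the equivalence between the accumulated back-edge constraints (iii) and acyclicity of the eventuality graph. The subtlety is that $X$ tracks only the ancestors on the current path, not all previously visited nodes, so I must argue that this per-path bookkeeping still detects every cycle. Any directed cycle among interior nodes reachable from $r_0$ surfaces, along the explored simple path that runs into it, as an edge from the current node back to one of its ancestors in $X$, whereupon the conjunct $\neg C_{r_0,r_1}$ forces $\tau_{\sf U}$ to be false; conversely, if all such conjuncts hold then no reachable interior cycle survives in $S_{c,q}$, so $G(S_{c,q}, r_0, \varphi_2)$ is acyclic, and by the remark that acyclicity implies the frontier condition under deadlock-freeness, the graph fulfills the always-until.
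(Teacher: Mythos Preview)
Your proposal follows essentially the same strategy as the paper: descending induction on $|X|$ over a strengthened invariant, the eventuality-graph characterization of ${\sf A}[\varphi_1{\sf U}\varphi_2]$, and instantiation at $X=\emptyset$. The paper packages the invariant more crisply than your conditions (i)--(iii) on a loosely specified ``exploration'': it proves that $q\vdash\tau_{\sf U}({\sf A}[\varphi_1{\sf U}\varphi_2],r_0,X)$ holds iff the full eventuality graph $G(S_{c,q},r_0,\varphi_2)$ fulfills ${\sf A}[\varphi_1{\sf U}\varphi_2]$ \emph{and} is $X$-disjoint (no node of the graph lies in $X$). Carrying both acyclicity and $X$-disjointness explicitly through the induction dissolves the ``delicate point'' you anticipate: the conjunct $\forall r_1\in E(r_0)\cap X.\ \neg C_{r_0,r_1}$ enforces $X$-disjointness at the current node, the inductive hypothesis gives $(X\cup\{r_0\})$-disjointness---hence acyclicity through $r_0$ and $X$-disjointness---for each successor's subgraph, and nothing needs to be ``accumulated'' post hoc.
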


\begin{proof}
Assume (A1).
Given a set $X\subseteq {\cal R}$ of resources, we say that $G(S_{c,q}, r_0, \varphi_2)$ is {\em $X\text{-disjoint}$} if no node of $G(S_{c,q}, r_0, \varphi_2)$ is contained in $X$. To avoid clutter, we will write $G[r_0]$ for $G(S_{c,q}, r_0, \varphi_2)$. We prove that for any set~$X\subseteq {\cal R}\setminus\{r_0\}$ of resources, 
\[
\begin{array}{ccc}
G[r_0]\ \text{fulfills}\ {\sf A}[\varphi_1 {\sf U}\varphi_2] & {\it iff} & q\vdash \tau_{\sf U}({\sf A}[\varphi_1 {\sf U}\varphi_2], r_0, X).\\
\text{and}\ G[r_0]\ \text{is}\ X\text{-disjoint}
\end{array}
\]
The proof proceeds by descending induction on the size of the set~$X$. Note that for the case $X = \emptyset$ we have $G[r_0]$ fulfills ${\sf A}[\varphi_1 {\sf U}\varphi_2]$ iff 
$q\vdash \tau_{\sf U}({\sf A}[\varphi_1 {\sf U}\varphi_2], r_0, \emptyset)$. This case proves the lemma because $G[r_0]$ fulfills ${\sf A}[\varphi_1 {\sf U}\varphi_2]$ iff $S_{c,q}, r_0\models {\sf A}[\varphi_1 {\sf U}\varphi_2]$.

Before we start, we expand $\tau_{\sf U}({\sf A}[\varphi_1 {\sf U}\varphi_2], r_0, X)$ to
{\small
\begin{align}
\tau(\varphi_2, r_0)\vee & \label{tau-u-var2} \\
& \Big( \tau(\varphi_1, r_0)\label{tau-u-var1}  \\
& \wedge \big( \forall{r_1\in E(r_0)\cap X}.\ \neg C_{r_0, r_1} \big) \label{tau-u-no-loop-back}\\
%& \wedge \big( \exists{r_1\in E(r_0)\setminus X}.\ C_{r_0, r_1} \big) \label{tau-u-some-successor}\\
& \wedge \big( \forall{r_1\in E(r_0)\setminus X}. (C_{r_0, r_1} \Rightarrow \notag \\
& \hspace{60pt} \tau_{\sf U}({\sf A}[\varphi_1 {\sf U}\varphi_2], r_1, X\cup \{r_0\})) \big) \Big), \label{tau-u-succ}
\end{align}}
as defined in Figure~\ref{fig:encode}. To avoid clutter, we write, e.g., {\em (\ref{tau-u-no-loop-back}) is true} for $q\vdash \forall{r_1\in E(r_0)\cap X}.\ \neg C_{r_0, r_1}$.

\para{Base Case}
For the base case we have $X = {\cal R}\setminus\{r_0\}$. 
%We have $\tau({\sf A}[\varphi_1 {\sf U}\varphi_2], r_0, \emptyset) = \tau(\varphi_2, r_0)$

\begin{itemize}
\item[$\Rightarrow$:] Assume $G[r_0]$ fulfills ${\sf A}[\varphi_1 {\sf U}\varphi_2]$ and $G[r_0]$ is ${\cal R}\setminus\{r_0\}$-disjoint. From ${\cal R}\setminus\{r_0\}$, $G[r_0]$ consists of a single node, $r_0$. 
Furthermore, $r_0$ is a frontier node, and since $G[r_0]$ fulfills ${\sf A}[\varphi_1 {\sf U}\varphi_2]$, we have $S_{c,q}, r_0\models \varphi_2$. From~(\ref{tau-assumption}), we get $q\vdash \tau(\varphi_2, r_0)$. Since (\ref{tau-u-var2}) is true, we get $q\vdash \tau_{\sf U}({\sf A}[\varphi_1 {\sf U}\varphi_2], r_0, {\cal R}\setminus \{r_0\})$.
\item[$\Leftarrow$:] Assume $q\vdash \tau_{\sf U}({\sf A}[\varphi_1 {\sf U}\varphi_2], r_0, {\cal R}\setminus \{r_0\})$. 
Since the resource structure $S_{c,q}$ is deadlock-free, there is a resource $r_1$ in $E(r_0)\cap ({\cal R}\setminus \{r_0\}$) such that $q\vdash C_{r_0, r_1}$. 
%Since the edge relation of $S$ is irreflexive,
%we have $E(r_0)\setminus ({\cal R}\setminus \{r_0\}) = \emptyset$, and 
It follows that (\ref{tau-u-no-loop-back}) is false. Therefore, it must be that
$q\vdash \tau(\varphi_2, r_0)$. By~(\ref{tau-assumption}), $S_{c,q}, r_0\models\varphi_2$. By definition of the eventuality graph $G[r_0]$, we conclude that it consists of a single node, $r_0$. It is immediate that $G[r_0]$ fulfills $ {\sf A}[\varphi_1 {\sf U}\varphi_2]$ and that it is ${\cal R}\setminus \{r_0\}$-disjoint.
\end{itemize}

\para{Inductive Step}
Assume that for any set $X\subseteq {\cal R}\setminus \{r_0\}$ of size $k\leq |X|< |{\cal R}|$, 
$G[r_0]$ fulfills ${\sf A}[\varphi_1 {\sf U}\varphi_2]$ and 
it is $X$-disjoint
iff
$\tau_{\sf U}({\sf A}[\varphi_1 {\sf U}\varphi_2], r_0, X)$.
We show that this holds for any set $X\subset {\cal R}\setminus\{r_0\}$ with $|X| = k-1$.

%Consider an arbitrary set $X\subseteq {\cal R}\setminus \{r_0\}$ of size $|X| = k-1$. 
%
%
\begin{itemize}

\item[$\Rightarrow$:]
Assume $G[r_0]$ fulfills ${\sf A}[\varphi_1 {\sf U}\varphi_2]$ and it is $X$-disjoint.

\begin{itemize}
\item[Case 1:] 
If $S_{c,q}, r_0\models \varphi_2$, then from (\ref{tau-assumption}) we get $q\vdash \tau(\varphi_2, r_0)$. Since~(\ref{tau-u-var2}) is true, it is immediate that $q\vdash \tau_{\sf U}({\sf A}[\varphi_1 {\sf U}\varphi_2], r_0, X)$.

\item[Case 2:] If $S, r_0\not\models \varphi_2$, then by (\ref{tau-assumption}) we have $q\not\vdash \tau(\varphi_2, r_0)$.
Therefore, (\ref{tau-u-var2}) is false, so we need to show that (\ref{tau-u-var1}), (\ref{tau-u-no-loop-back}), 
%(\ref{tau-u-some-successor}), 
and (\ref{tau-u-succ}) are all true:
\begin{compactitem}
\item
Since $G[r_0]$ fulfills ${\sf A}[\varphi_1 {\sf U}\varphi_2]$, we have
$S, r_0\models \varphi_1$ because $r_0$ is an interior node. By~(\ref{tau-assumption}), we get $q\vdash \tau(\varphi_1, r_0)$, and thus (\ref{tau-u-var1}) is true.
\item If $G[r_0]$ has an edge $(r_0, r_1)$, then it must be that the resource structure $S$ has an edge $(r_0, r_1)$ and $q\vdash c((r_0, r_1))$; otherwise, the edge $(r_0, r_1)$ is removed from $S_{c,q}$. Furthermore, since $C = \{c\}$, $C$ does not contain any control variables, and so $C_{r_0, r_1} = c((r_0, r_1))$. 
Now, since $G[r_0]$ is $X$-disjoint, we know that $r_0$ does not have any successors contained in $X$. Therefore, for any successor $r_1$ of $r_0$, we have $q\not\vdash C_{r_0, r_1}$. We conclude that (\ref{tau-u-no-loop-back}) is true.
%
%\item Since $r_0$ is an interior node, it must have at least one successor, say $r_1$.
%Furthermore, since $r_0$ has no successors in $X$, $r_1$ must be in $E(r_0)\setminus X$. 
%We conclude that (\ref{tau-u-some-successor}) is true.
%
\item Finally, consider an edge $r_1\in E(r_0)\setminus X$ such that $q\vdash C_{r_0, r_1}$.
Since $G[r_0]$ fulfills ${\sf A}[\varphi_1 {\sf U}\varphi_2]$ and $r_1$ is a successor of $r_0$, it follows that
$G(S_{c,q}, r_1, \varphi_2)$ also fulfills ${\sf A}[\varphi_1 {\sf U}\varphi_2]$. 
Furthermore, since $G[r_0]$ is $X$-disjoint, $G(S_{c,q}, r_1, \varphi_2)$ must be also $X$-disjoint. Furthermore, $G(S_{c,q}, r_1, \varphi_2)$ does not contain the node $r_0$ because $G[r_0]$ is acyclic. We conclude that $G(S_{c,q}, r_1, \varphi_2)$ is $X\cup \{r_0\}$-disjoint and it fulfills ${\sf A}[\varphi_1 {\sf U}\varphi_2]$. By induction, we get $q\vdash \tau_{\sf U}({\sf A}[\varphi_1 {\sf U}\varphi_2], r_1, X\cup \{r_0\})$. Therefore, (\ref{tau-u-succ}) is true.
\end{compactitem}

\end{itemize}

\item[$\Leftarrow$:]
Assume $q\vdash \tau_{\sf U}({\sf A}[\varphi_1 {\sf U}\varphi_2], r_0, X)$.
\begin{itemize}

\item[Case 1:] If~$q\vdash \tau(\varphi_2, r_0)$, then (\ref{tau-u-var2}) is true. By~(\ref{tau-assumption}), $S_{c,q}, r_0\models \varphi_2$. It is immediate that 
$G[r_0]$ consists of a single node, namely $r_0$. Therefore, $G[r_0]$ fulfills ${\sf A}[\varphi_1 {\sf U}\varphi_2]$ and it is $X$-disjoint because $X\subset {\cal R}\setminus \{r_0\}$.

\item[Case 2:] If~$q\not\vdash \tau(\varphi_2, r_0)$, then (\ref{tau-u-var2}) is false. Therefore, 
(\ref{tau-u-var1}), (\ref{tau-u-no-loop-back}), 
%(\ref{tau-u-some-successor}), 
and (\ref{tau-u-succ}) must be true. 
%
%From (\ref{tau-u-some-successor}), 
%Since $S_{c,q}$ is deadlock-free, $r_0$ has a successor, so it must be an internal node in $G[r_0]$.
%
From (\ref{tau-u-var1}) and~(\ref{tau-assumption}), we have $S_{c,q}, r_0\models \varphi_1$. 
Consider any node $r_1\in E(r_0)\setminus X$ such that $q\vdash C_{r_0, r_1}$. Then, $r_1$ is a successor of $r_0$ in the graph $G[r_0]$.
From~(\ref{tau-u-succ}), we get $q\vdash \tau_{\sf U}({\sf A}[\varphi_1 {\sf U}\varphi_2], r_1, X\cup \{r_0\})$. By induction, $G(S_{c,q}, r_1, \varphi_2)$ fulfills ${\sf A}[\varphi_1 {\sf U}\varphi_2]$ and it is $X\cup \{r_0\}$-disjoint.
Since $r_0$ is an internal node, $S, r_0\models \varphi_1$, and all subgraphs rooted
at $r_0$'s successors fulfill ${\sf A}[\varphi_1 {\sf U}\varphi_2]$, it follows that 
$G[r_0]$ fulfills ${\sf A}[\varphi_1 {\sf U}\varphi_2]$.
Furthermore, from~(\ref{tau-u-no-loop-back}) we know that $r_0$ has no successors in $X$.
Since all subgraphs rooted at $r_0$'s successors are $X\cup \{r_0\}$-disjoint, it follows that $G[r_0]$ is $X$-disjoint.
\end{itemize}

\end{itemize}

This concludes our proof.
\end{proof}

\begin{lemma}
Let $S = ({\cal R}, E, r, S)$ be a resource structure, $\varphi_1$ and $\varphi_2$ be two access constraints, and $C = \{c\}$ be a configuration template. 
If for any resource $r_1\in {\cal R}$ and any access request $q\in {\cal Q}$ we have
\begin{equation}
\tag{A2}
\bigwedge_{i\in\{1,2\}} S_{c,q}, r_1\models \varphi_i\ \text{iff}\ q\vdash \tau(\varphi_i, r_1), \label{tau_eu_assumption}
\end{equation}
then for any resource $r_0\in {\cal R}$ we have
\begin{align*}
S_{c,q}, r_0 \models {\sf E}[\varphi_1 {\sf U}\varphi_2]\ \text{iff}\ q\vdash \tau({\sf E}[\varphi_1 {\sf U}\varphi_2], r_0, \emptyset).
\end{align*}

\label{lemma:eu}

\end{lemma}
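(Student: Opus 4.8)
The plan is to mirror the proof of Lemma~\ref{lemma:au}, proceeding by descending induction on the size of the auxiliary set $X \subseteq {\cal R}$ of already-visited resources, with $X = \emptyset$ yielding the lemma. I would first assume (A2) and then strengthen the statement to an invariant relating $\tau_{\sf U}({\sf E}[\varphi_1 {\sf U}\varphi_2], r_0, X)$, for every $r_0 \notin X$, to the existence of a \emph{simple} witnessing path in $S_{c,q}$. Concretely, I would show that $q \vdash \tau_{\sf U}({\sf E}[\varphi_1 {\sf U}\varphi_2], r_0, X)$ holds iff there is a path $r_0, r_1, \ldots, r_m$ in $S_{c,q}$ whose non-root nodes lie outside $X$, with the $r_i$ pairwise distinct, such that $S_{c,q}, r_m \models \varphi_2$ and $S_{c,q}, r_i \models \varphi_1$ for all $0 \le i < m$. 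In the paper's terminology this says that $G(S_{c,q}, r_0, \varphi_2)$ fulfills ${\sf E}[\varphi_1 {\sf U}\varphi_2]$ via a witnessing path whose non-root nodes avoid $X$; at $X = \emptyset$ the side condition is vacuous and the invariant reduces to plain fulfillment.

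For the base case $X = {\cal R}\setminus\{r_0\}$, irreflexivity of $E$ forces $E(r_0)\setminus X = \emptyset$, so the existential disjunct of the rewrite rule vanishes and $\tau_{\sf U}$ collapses to $\tau(\varphi_2, r_0)$; by (A2) this is equivalent to $S_{c,q}, r_0 \models \varphi_2$, i.e.\ to the length-zero path consisting of $r_0$ alone, whose extension to an infinite path in $S(r_0)$ is guaranteed by deadlock-freeness of $S_{c,q}$. For the inductive step I would unfold the rewrite rule as $\tau(\varphi_2, r_0) \vee \big(\tau(\varphi_1, r_0) \wedge \exists r_1 \in E(r_0)\setminus X.\ (C_{r_0,r_1} \wedge \tau_{\sf U}({\sf E}[\varphi_1 {\sf U}\varphi_2], r_1, X\cup\{r_0\}))\big)$ and argue both directions using (A2) for the atoms $\varphi_1, \varphi_2$ and the induction hypothesis (on the strictly larger set $X\cup\{r_0\}$, with new root $r_1 \notin X\cup\{r_0\}$) for the recursive call. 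The crux of both directions is the correspondence between a simple witnessing path from $r_0$ avoiding $X$ and a first edge $(r_0, r_1)$ present in $S_{c,q}$ (so $q \vdash C_{r_0,r_1}$, using $C=\{c\}$) followed by a simple witnessing path from $r_1$ avoiding $X \cup \{r_0\}$: prepending $r_0$ preserves simplicity precisely because the tail already avoids $r_0$, while $r_1 \notin X$ places it in $E(r_0)\setminus X$. Instantiating $X = \emptyset$ and invoking the standard fact recalled from the CTL decision procedure of~\cite{Emerson:1991:TML:114891.114907} that $S_{c,q}, r_0 \models {\sf E}[\varphi_1 {\sf U}\varphi_2]$ holds iff such a witnessing path, hence a simple one, exists, then yields the lemma.

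I expect the main obstacle to be the treatment of cycles, which is exactly where the ${\sf E}$ case departs from Lemma~\ref{lemma:au}. In the ${\sf A}$ case, fulfillment forces $G[r_0]$ to be acyclic, and that acyclicity is what lets one conclude the subgraph rooted at a successor $r_1$ omits $r_0$ and is therefore $(X\cup\{r_0\})$-disjoint. For ${\sf E}$, the eventuality graph may well contain cycles through $r_0$, so the analogous ``the whole graph is $X$-disjoint'' invariant breaks down: a successor's eventuality graph can revisit $r_0$. The fix, and the point I would be most careful to justify, is to track disjointness of a single \emph{witnessing path} rather than of the entire eventuality graph, and to use that an existential-until witness can always be pruned to a simple path. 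This simple-path pruning is the ${\sf E}$-analogue of the acyclicity argument, and it is precisely what makes the set $X$, which the rewrite rule accumulates to forbid revisiting, line up with the semantics.
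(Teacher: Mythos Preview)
Your proposal is correct and follows essentially the same argument as the paper: both proceed by induction on the size of the auxiliary set, unfolding the rewrite rule and using (A2) for the atoms. The only cosmetic difference is that the paper parameterizes by the complement of your $X$ (writing the invariant as ``$G[r_0]|_{X}$ fulfills ${\sf E}[\varphi_1{\sf U}\varphi_2]$'' for the projected eventuality graph) and inducts ascending rather than descending; your explicit simple-path formulation makes the ``avoid $r_0$ in the tail'' step, which the paper leaves implicit, more transparent.
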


\begin{proof}
Given a directed graph $G = ({\cal R}, E)$ and a subset $X\subset {\cal R}$ of resources, we 
define {\em the projection of $G$ on $X$} as $G|_X = (X, \{ (r_0, r_1)\in E\mid \{r_0, r_1\}\subseteq X\})$. We will write $G[r_0]$ for $G(S_{c,q}, r_0, \varphi_2)$.
We prove by induction on the size of the set $X$ that for any $\{r_0\}\subseteq X\subseteq {\cal R}$, we have
\[
G[r_0]|_X\ \text{fulfills}\ {\sf E}[\varphi_1 {\sf U}\varphi_2]\ {\it iff}\
q\vdash \tau_{\sf U}({\sf E}[\varphi_1 {\sf U}\varphi_2], r_0, {\cal R}\setminus X).
\]
Note that since $G|_{\cal R} = G$ and $\tau_{\sf U}({\sf E}[\varphi_1 {\sf U}\varphi_2], r_0, {\cal R}\setminus {\cal R}) = \tau_{\sf U}({\sf E}[\varphi_1 {\sf U}\varphi_2], r_0, \emptyset)$, the case for $X = {\cal R}$ proves the lemma.

We first expand $\tau_{\sf U}({\sf E}[\varphi_1 {\sf U} \varphi_2], r_0, {\cal R}\setminus X)$ to
\begin{align}
\tau(\varphi_2, r_0) \vee & \label{tau_eu_var2} \\
& \Big( \tau(\varphi_1, r_0) \label{tau_eu_var1}\\
& \wedge \exists{r_1\in E(r_0)\!\setminus\! ({\cal R}\!\setminus\! X)}.\ \big( C_{r_0, r_1} \notag \\
& \hspace{20pt} \wedge \tau_{\sf U}({\sf E}[\varphi_1 {\sf U} \varphi_2], r_1, ({\cal R}\setminus X) \cup \{r_0\}) \big) \Big) \label{tau_eu_succ}
\end{align}

\para{Base Case}
For the base case, we have $X = \{r_0\}$.
\begin{itemize}

\item[$\Rightarrow$:] Assume that $G(S_{c,q}, r_0, \varphi_2)|_{\{r_0\}}$ fulfills ${\sf E}[\varphi_1 {\sf U}\varphi_2]$. The graph $G[r_0]|_{\{r_0\}}$ consists of the single node $r_0$. The node $r_0$ is a frontier node, and therefore it must be that $S_{c,q}, r_0\models \varphi_2$. By~(\ref{tau_eu_assumption}), we have $q\vdash \tau(\varphi_2, r_0)$. Then (\ref{tau_eu_var2}) is true and therefore $q\vdash \tau_{\sf U}({\sf E}[\varphi_1 {\sf U}\varphi_2], r_0, {\cal R}\setminus \{r_0\})$.

\item[$\Leftarrow$:] Assume that $q\vdash \tau_{\sf U}({\sf E}[\varphi_1 {\sf U}\varphi_2], r_0, {\cal R}\setminus \{r_0\})$. Since here $X = \{r_0\}$ and $S$'s edge relation is irreflexive, we have $E(r_0)\setminus ({\cal R}\setminus \{r_0\}) = \emptyset$. Therefore, (\ref{tau_eu_succ}) is false and it must be that $q\vdash \tau(\varphi_2, r_0)$. By~(\ref{tau_eu_assumption}), we have $S_{c,q}, r_0\models \varphi_2$. It is immediate that the graph $G[r_0]|_{\{r_0\}}$ consists of the single node $r_0$, and that it fulfills ${\sf E}[\varphi_1 {\sf U}\varphi_2]$.

\end{itemize}

\para{Inductive Step}
Assume that $G[r_0]|_X$ fulfills ${\sf E}[\varphi_1 {\sf U}\varphi_2]$ iff
$q\vdash \tau_{\sf U}({\sf E}[\varphi_1 {\sf U}\varphi_2], r_0, {\cal R}\setminus X)$ holds for any set $\{r_0\}\subseteq X\subset {\cal R}$ of size $1 \leq |X| \leq k$, for some $k$, $1\leq k < |R|$. We show that this also holds for any set $\{r_0\}\subset X\subseteq {\cal R}$ of size $|X| = k+1$.

\begin{itemize}

\item[$\Rightarrow$:] Assume $G[r_0]|_X$ fulfills ${\sf E}[\varphi_1 {\sf U}\varphi_2]$. 

\begin{itemize}

\item[Case 1:] If $S_{c,q}, r_0\models \varphi_2$, then from (\ref{tau_eu_assumption}) we get $q\vdash \tau(\varphi_2, r_0)$. It is immediate that $q\vdash \tau_{\sf U}({\sf E}[\varphi_1 {\sf U}\varphi_2], r_0, {\cal R}\setminus X)$.

\item[Case 2:] If $S_{c,q}, r_0\not\models \varphi_2$, then from (\ref{tau_eu_assumption}) we get $q\not\vdash \tau(\varphi_2, r_0)$. Since $G[r_0]|_X$ fulfills ${\sf E}[\varphi_1 {\sf U}\varphi_2]$, $r_0$ is an internal node and $S_{c,q}, r_0\models \varphi_1$. By (\ref{tau_eu_assumption}), $q\vdash \tau(\varphi_1, r_0)$, and so (\ref{tau_eu_var1}) is true. Furthermore, $r_0$ has a successor $r_1$ with $q\vdash C_{r_0, r_1}$ such that $r_1$ has a path to a node $r_n$ with $S_{c,q}, r_n\models \varphi_2$.
We conclude that $G(S_{c,q}, r_1, \varphi_2)|_{(X \setminus \{r_0\})}$ fulfills ${\sf E}[\varphi_1 {\sf U}\varphi_2]$. 
By induction, since $|X\setminus \{r_0\}| = k-1$, we have $q\vdash \tau_{\sf U}({\sf E}[\varphi_1 {\sf U}\varphi_2], r_1, {\cal R}\setminus (X\setminus \{r_0\}))$. 
Since $r_0\in X$, from ${\cal R}\setminus (X\setminus \{r_0\}) = ({\cal R}\setminus X)\cup \{r_0\}$ we conclude that
(\ref{tau_eu_succ}) is also true.
We conclude that $q\vdash \tau_{\sf U}({\sf E}[\varphi_1 {\sf U}\varphi_2], r_0, X)$.

\end{itemize}

\item[$\Leftarrow$:] Assume $q\vdash \tau_{\sf U}({\sf E}[\varphi_1 {\sf U}\varphi_2], r_0, X)$.

\begin{itemize}

\item[Case 1:] If $q\vdash \tau(\varphi_2, r_0)$, then from (\ref{tau_eu_assumption}) we get $S_{c,q}, r_0\models \varphi_2$. Therefore the graph $G[r_0]|_X$ consists of the single node $r_0$ with $S_{c,q}, r_0\models \varphi_2$. It is immediate that $G[r_0]|_X$ fulfills ${\sf E}[\varphi_1 {\sf U}\varphi_2]$.

\item[Case 2:] If $q\not\vdash \tau(\varphi_2, r_0)$, then it must be that (\ref{tau_eu_var1}) and (\ref{tau_eu_succ}) are true. From (\ref{tau_eu_var1}) and (\ref{tau_eu_assumption}), we get $S_{c,q}, r_0\models\varphi_1$. From (\ref{tau_eu_succ}), it follows that $r_0$ has a successor $r_1$ with $q\vdash C_{r_0, r_1}$ such that $q\vdash \tau_{\sf U}({\sf E}[\varphi_1 {\sf U}\varphi_2], r_1, ({\cal R}\setminus X)\cup \{r_0\})$. 
Since $r_0\in X$, we have $({\cal R}\setminus X)\cup \{r_0\} = {\cal R}\setminus (X\setminus \{r_0\})$.
By induction, $G(S_{c,q}, r_1, \varphi_2)|_{X\setminus \{r_0\}}$ fulfills ${\sf E}[\varphi_1 {\sf U}\varphi_2]$, so there is a path from $r_1, \ldots, r_n$ in $G(S_{c,q}, r_1, \varphi_2)|_{X\cup \{r_0\}}$ along nodes in $X\setminus \{r_0\}$ where $S_{c,q}, r_n\models\varphi_2$ and $S_{c,q}, r_i\models \varphi_1$ for $1\leq i< n$. It is immediate that there is a path $r_0, r_1, \ldots, r_n$ in $G[r_0]|_X$ such that $S_{c,q}, r_n\models\varphi_2$ and $S_{c,q}, r_i\models \varphi_1$ for $1\leq i< n$. Therefore $G[r_0]|_X$ fulfills ${\sf E}[\varphi_1 {\sf U}\varphi_2]$.

\end{itemize}

\end{itemize}

This concludes our proof.
\end{proof}

%\begin{lemma}
%%
%Let $\phi$ be a formula over control variables $\vec{z}$ and attribute
%variables $\vec{a}$. $\textsc{Model}(\exists{\vec{z}}.
%\forall{\vec{a}}.\ \phi) = {\cal M}$ iff $\forall{\vec{a}}.\
%\phi[\vec{z}/{\cal M}]$, where $\phi[\vec{z}/{\cal M}]$ denotes the
%result of replacing all varables $\vec{z}$ in $\phi$ according to the
%model $\cal M$.
%%
%\label{lm:model} \end{lemma}

\begin{lemma}
Given a resource structure~$S = ({\cal R}, E, r, L)$, a set $R = \{T_1\Rightarrow \varphi_1, \ldots, T_n\Rightarrow \varphi_n \}$ of requirements, and a configuration template $C = \{c\}$, let $\phi = \textsc{Encode}(S, T_1\Rightarrow \varphi_1, C) \wedge \cdots \wedge \textsc{Encode}(S, T_n\Rightarrow \varphi_n, C)$. The constraint $\forall {a}.\ \phi$ is satisfiable iff $S, c\Vdash R$.
\label{lemma:smt-correct-reqs}
\end{lemma}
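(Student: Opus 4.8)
The plan is to reduce the statement to Theorem~\ref{lemma:smt-encoding}, which already establishes the edge-level correctness of the rewrite function $\tau$ for a singleton template. First I would exploit the hypothesis that $C = \{c\}$ contains a single configuration: the template then introduces no control variables, so each symbolic policy $C_{r_0, r_1}$ is simply the constraint $c((r_0, r_1))$ over subject and contextual attributes. Consequently $\phi$ is a constraint over attribute variables alone, $\forall a.\ \phi$ is a closed formula, and ``$\forall a.\ \phi$ is satisfiable'' means precisely that $\phi$ is valid, i.e.\ that it holds under every assignment to the attribute variables.

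The key bridge is that assignments to the attribute variables correspond exactly to access requests: an access request $q \in {\cal Q}$ is a total map from subject and contextual attributes to their domains, and evaluating a constraint $\psi$ under such an assignment coincides with the relation $q \vdash \psi$. Hence $\forall a.\ \phi$ is satisfiable iff $q \vdash \phi$ for every $q \in {\cal Q}$.

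Next I would unfold the definitions. By the definition of \textsc{Encode}, $\phi = \bigwedge_{i=1}^{n} (T_i \Rightarrow \tau(\varphi_i, \entry))$, so for a fixed $q$ we have $q \vdash \phi$ iff for each $i$, $q \vdash T_i$ implies $q \vdash \tau(\varphi_i, \entry)$. Instantiating Theorem~\ref{lemma:smt-encoding} at the entry resource $r_0 = \entry$ gives $q \vdash \tau(\varphi_i, \entry)$ iff $S_{c,q}, \entry \models \varphi_i$, which by the definition of $S \models \varphi$ is exactly $S_{c,q} \models \varphi_i$. Substituting, $q \vdash \phi$ iff for every $i$, $q \vdash T_i$ implies $S_{c,q} \models \varphi_i$.

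Finally I would reorganize the quantifiers. Combining the previous steps, $\forall a.\ \phi$ is satisfiable iff for every $q \in {\cal Q}$ and every $i$ we have that $q \vdash T_i$ implies $S_{c,q} \models \varphi_i$; swapping the two universal quantifiers, this is iff for every $i$, $S, c \Vdash (T_i \Rightarrow \varphi_i)$, which by the extension of $\Vdash$ to sets of requirements is precisely $S, c \Vdash R$. The only delicate point is the identification of SMT assignments with access requests carried out in the first two paragraphs: one must confirm that total SMT models over the attribute variables range exactly over ${\cal Q}$ (including the $\bot$ values) and that the SMT interpretation of a target constraint agrees with $\vdash$. Once this correspondence is pinned down, the remainder is a direct rewriting that leans entirely on the already-proven Theorem~\ref{lemma:smt-encoding} and on the definition of $\Vdash$.
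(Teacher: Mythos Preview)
Your proposal is correct and follows essentially the same approach as the paper: exploit $C=\{c\}$ to eliminate control variables, identify attribute-variable valuations with access requests, unfold \textsc{Encode} into $\bigwedge_i (T_i \Rightarrow \tau(\varphi_i,\entry))$, invoke Theorem~\ref{lemma:smt-encoding} at the entry resource, and conclude via the definition of $\Vdash$. Your write-up is in fact slightly more careful than the paper's, since you explicitly flag the correspondence between SMT models over attribute variables and access requests (including $\bot$) as the only point requiring scrutiny.
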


\begin{proof}
Note that since $C = \{c\}$, the formula $\phi$ contains no control variables, i.e.\ it contains only attribute variables. 
Since there is a one-to-one mapping from a valuation of the attribute variables $\vec{a}$ to an access request~$q$, we have $\forall \vec{a}.\ \phi$ iff $\forall q\in {\cal Q}.\ q\vdash \phi$. 
We expand the constraint $\phi$ to $(T_1\Rightarrow \tau(\varphi_1, r)) \wedge \cdots \wedge (T_n\Rightarrow \tau(\varphi_n, r))$. We get that $\forall \vec{a}.\ \phi$ iff for any access request $q\in {\cal Q}$, and for any requirement $T\Rightarrow \varphi$, $q\vdash T$ implies $q\vdash \tau(\varphi, r)$. 
By Lemma~\ref{lemma:smt-encoding}, $q\vdash \tau(\varphi, r)$ iff $S_{c,q}, r\models \varphi$. We get $\forall \vec{a}.\ \phi$ iff 
for any access request $q\in {\cal Q}$, and for any requirement $T\Rightarrow \varphi$, $q\vdash T$ implies $S_{c,q}, r\models \varphi$. By definition of $\Vdash$, we get  $\forall \vec{a}.\ \phi$ iff $S, c\Vdash R$. 
\end{proof}

\iffalse 

We state a useful lemma pertaining to the procedures {\sc Model} and
{\sc Decode}.

\begin{lemma}
%
Let $S$ be a resource structure, $R = \{T_1\Rightarrow \varphi_1, \ldots, T_n\Rightarrow \varphi_n\}$ a set of requirements, and $C$ a configuration template. Let $\phi = \textsc{Encode}(S, T_1\Rightarrow \varphi_1, C)\wedge \cdots\wedge  \textsc{Encode}(S, T_n\Rightarrow \varphi_n, C)$. If $\exists{\vec{z}}. \forall{\vec{a}}.\ \phi$, then 
$c = \lambda e.\ \textsc{Derive}(C(e), {\cal M})$, where ${\cal M} = \textsc{Model}(\exists{\vec{z}}. \forall{\vec{a}}.\ \phi)$, is s
%
\label{lm:derive} \end{lemma}

This lemma states that procedure {\sc Derive} correctly derives a policy
configuration from a model $\cal M$ of the formula
$\exists{\vec{z}}. \forall{\vec{a}}.\ \phi$
returned by the procedure {\sc Model}.
%
\fi

We now restate and prove Theorem~\ref{thm:soundness}.

\noindent {\bf Theorem~\ref{thm:soundness}.}
{\em Let $S$ be resource structure, $R$ a set of requirements, and $C$ a configuration template. If ${\cal S}_{\sf smt}(S, R, C) = c$ then $S, c\Vdash R$. If ${\cal S}_{\sf smt}(S, R, C) = {\sf unsat}$, then there is no configuration $c$ in $C$ such that $ S, c\Vdash R$.}

\begin{proof}
Let $C = \{c_1, \ldots, c_n\}$.  The formula $\exists{\vec{z}}.
\forall{\vec{a}}.\ \phi$ generated by Algorithm~\ref{alg:smt} is
equivalent to the formula $(\forall{\vec{a}}.\ \phi_{c_1})\vee
\cdots\vee (\forall{\vec{a}}.\ \phi_{c_n})$ where $\phi_{c_i}$ is the
formula obtained by grounding the control variables $\vec{z}$ in
$\phi$ with those values that encode the configuration $c_i$. Note
that each formula $\phi_{c_i}$ is equivalent to the one obtained when
using a configuration template $C_i = \{c_i\}$.

Assume that $\exists c\in C.\ S,c\Vdash R$. By Lemma~\ref{lemma:smt-correct-reqs},  $\forall{\vec{a}}.\ \phi_{c_i}$ is satisfiable for some $c_i$ in $C$. 
Therefore ${\cal S}_{\sf smt}(S, R, C)$ returns some configuration $c_i$. Assuming the \textsc{Derive} procedure correctly derives a configuration $c_i$ from a model of $\exists{\vec{z}}\forall{\vec{a}}.\ \phi$, then 
${\cal S}_{\sf smt}(S, R, C) = c_i$  for some $c_i$ such that $\forall{\vec{a}}.\ \phi_{c_i}$. By Lemma~\ref{lemma:smt-correct-reqs}, $S, c_i\Vdash R$.

Assume that $\neg\exists c\in C.\ S,c\Vdash R$. By Lemma~\ref{lemma:smt-correct-reqs}, $\forall{\vec{a}}.\ \phi_{c_i}$ is not satisfiable for any $c_i$ in $C$. Therefore ${\cal S}_{\sf smt}(S, R, C)$ returns ${\sf unsat}$.

\end{proof}
\ 
\fi

% that's all folks
\end{document}